\def\bone{{\bf 1}}
\def\bI{{\bf I}}
\def\bw {{\bf w}}
\def\by {{\bf y}}
\def\bA{{\bf A}}
\def\bD{{\bf D}}
\def\bM{{\bf M}}
\def\bR{{\bf R}}
\def\bU{{\bf U}}
\def\bV{{\bf V}}
\def\bW{{\bf W}}
\def\bX{{\bf X}}
\def\bZ{{\bf Z}}
\newcommand\bbR{\mathbb{R}}
\providecommand{\minimize}{\mathop{\rm minimize}}
\providecommand{\subjectto}{\mathop{\rm subject\;to}}
\newcommand{\norm}[1]{\left\|{#1}\right\|} 
\newcommand{\lone}[1]{\norm{#1}_1} 
\newcommand{\ltwo}[1]{\norm{#1}_2} 
\newcommand{\linf}[1]{\norm{#1}_\infty} 
\newtheorem{theorem}{Theorem}
\newtheorem{lemma}[theorem]{Lemma}
\title{Principal component-guided sparse regression}
\author{ J. Kenneth Tay, Jerome Friedman and Robert Tibshirani  \\
Department of Statistics, and Department of  Biomedical Data Science \\ Stanford University }
\begin{document}
\maketitle

\begin{abstract}
We propose a new method for supervised learning, especially suited to wide data where the number of features is much greater than the number of observations. The method combines the lasso ($\ell_1$) sparsity penalty with a quadratic penalty that shrinks the coefficient vector toward the leading principal components of the feature matrix.  We call the  proposed method  the ``{\em principal components lasso}''  (``pcLasso'').  The method can be especially powerful if the features are pre-assigned to groups (such as cell-pathways, assays or protein interaction networks). In that case, pcLasso shrinks each group-wise component of the solution toward the leading principal components of that group. In the process, it also carries out selection of the feature groups. We provide some theory for this method and illustrate it on a number of simulated and real data examples.
\end{abstract}

\section{Introduction}

We consider the usual linear regression model: given $n$ realizations of $p$ predictors $\bX=\{x_{ij}\}$ for $i=1,2,\ldots,n$ and $j=1,2,\ldots,p$, the response $\by=(y_1,\ldots,y_n)$ is modeled as
\begin{equation}
y_i=\beta_0\bone + \sum_j x_{ij} \beta_j +\epsilon_i,
\end{equation}
with $\epsilon \sim (0,\sigma^2)$. The ordinary least squares (OLS) estimates of $\beta_j$ are obtained by minimizing the residual sum of squares (RSS). There has been much work on regularized estimators that offer an advantage over the OLS estimates, both in terms of accuracy of prediction on future data and interpretation of the fitted model. One focus has been on the {\em lasso} \citep{Ti96}, which minimizes
\begin{equation}
J(\beta_0,\beta)=\frac{1}{2}\|\by-\beta_0\bone-\bX \beta\|_2^2+\lambda \|\beta\|_1,
\end{equation}
where $\beta=(\beta_1, \ldots,\beta_p)^T$, and the tuning parameter $\lambda \geq 0$ controls the sparsity of the final model. This parameter is often selected by cross-validation (CV). The objective function $J(\beta_0,\beta)$ is convex, which means that solutions can be found efficiently even for very large $n$ and $p$, in contrast to combinatorial methods like best subset selection.

The lasso is an alternative to ridge regression \citep{HK70}, which uses an objective function with an  $\ell_2$ penalty:
\begin{equation}
\frac{1}{2}\|\by-\beta_0\bone-\bX \beta\|_2^2+\lambda \|\beta\|_2^2.
\end{equation}
This approach has the disadvantage of producing dense models. However, it also biases the solution vector toward the leading right singular vectors of $\bX$, which can be effective for improving prediction accuracy.

The elastic net \citep{ZH2005} generalizes the lasso, effectively combining ridge regression and the lasso, by adding an $\ell_2$ penalty to the lasso's objective function:
\begin{equation}
\frac{1}{2}\|\by-\beta_0\bone -\bX \beta\|_2^2+\lambda\|\beta\|_1+ \frac{\lambda_2}{2}\|\beta\|_2^2.
\end{equation}

We have found that the bias of ridge regression (and hence the elastic net) toward the leading singular vectors is somewhat mild. Furthermore, if the predictors come in pre-assigned groups--- a focus of this paper--- these methods do not exploit this information.

In this paper, we propose a new method, the {\em principal components lasso} (``pcLasso"), that strongly biases the solution vector toward the leading singular vectors and exploits group structure. We give the full definition of the pcLasso for multiple groups in Section \ref{sec:pcLasso}. For motivation, we describe the simpler single group case in the next section.

\section{Quadratic penalties based on principal components}
 Assume that we have centered the columns of $\bX$, and let $\bX=\bU\bD\bV^T$ be the singular value decomposition (SVD) of $\bX$. Here, $\bD$ is a diagonal matrix with diagonal entries equaling the singular values $d_1 \geq d_2 \ldots \geq d_m > 0$, with $m={\rm rank} (\bX)$. Assume also that $\by$ has been centered, so that we may omit the intercept from the model.

 In generalizing the $\ell_2$ penalty, one can imagine a class of penalty functions of the form
 $$\beta^T \bV \bZ \bV^T \beta,$$
 where $\bZ$ is a diagonal matrix whose diagonal elements are functions of the squared singular values:
 $$Z_{11}= f_1(d_1^2, d_2^2, \ldots, d_m^2),  Z_{22}=f_2(d_1^2, d_2^2, \ldots, d_m^2), \ldots $$
 The ridge penalty chooses $Z_{jj}=1$ for all $j$, leading to the quadratic penalty $\beta^T\beta$. Here we propose a   different choice: we minimize
\begin{eqnarray}
J(\beta)= \frac{1}{2} \ltwo{\by-\bX\beta}^2 + \lambda  \lone{\beta} + \frac{\theta}{2} \beta^T \bV \bD_{d^2_{1}-d^2_j} \bV ^T\beta,
\label{eqn:pcLasso1}
\end{eqnarray}
where  $\bD_{d^2_{1}-d^2_j}$ is an $m \times m$ diagonal matrix with diagonal entries equaling $d_1^2 - d_1^2, d_1^2 - d_2^2,\ldots, d_1^2 - d_m^2$.
We call this the ``pcLasso penalty" and use it more generally in the next section. The value $\theta \geq 0$ is a tuning parameter determining the weight given to the second penalty. We see that this penalty term gives no weight (a ``free ride'') to the component of $\beta$ that aligns with the first right singular vector of $\bX$, i.e, the first principal component (PC). Beyond that, the size of the penalty depends on the gap between the squared singular values $d_j^2$ and $d_1^2$. If we view the quadratic penalty as representing a Bayesian (log-)prior, it puts infinite prior variance on the leading eigenvector.

It is instructive to compare \eqref{eqn:pcLasso1} with $\lambda = 0$ (and with $\frac{\theta}{2}$ replaced with $\theta$) to ridge regression which uses the penalty $\theta \beta^T \beta$. By transforming to principal coordinates, it follows that ridge regression produces the fit
\begin{equation}
\bX\hat\beta_R=\sum_{j=1}^m \frac{d_j^2}{d_j^2+ \theta} u_j  u_j^T\by.
\label{eqn:ridgepc}
\end{equation}

In contrast, pcLasso \eqref{eqn:pcLasso1} gives
\begin{equation}
\bX\hat\beta_L=\sum_{j=1}^m \frac{d_j^2}{d_j^2+ \theta (d_{1}^2-d_j^2)} u_j  u_j^T\by.
\label{eqn:pcLassopc}
\end{equation}

The latter expression corresponds to a more aggressive form of shrinkage toward the leading singular vectors. To see this, Figure \ref{fig:shrinkage} shows the shrinkage factors $\frac{d_j^2}{d_j^2+\theta}$ and $\frac{d_j^2}{d_j^2+ \theta(d_{1}^2-d_j^2)}$  for a $100 \times 20$ design matrix $\bX$ with a strong rank-1 component. In each panel, we have chosen the value $\theta$ (different for each method) to yield the degrees of freedom indicated at the top of the panel. (For each method, the degrees of freedom is equal to the sum of the shrinkage factors.)

\begin{figure}
\centerline{\includegraphics[width=5in]{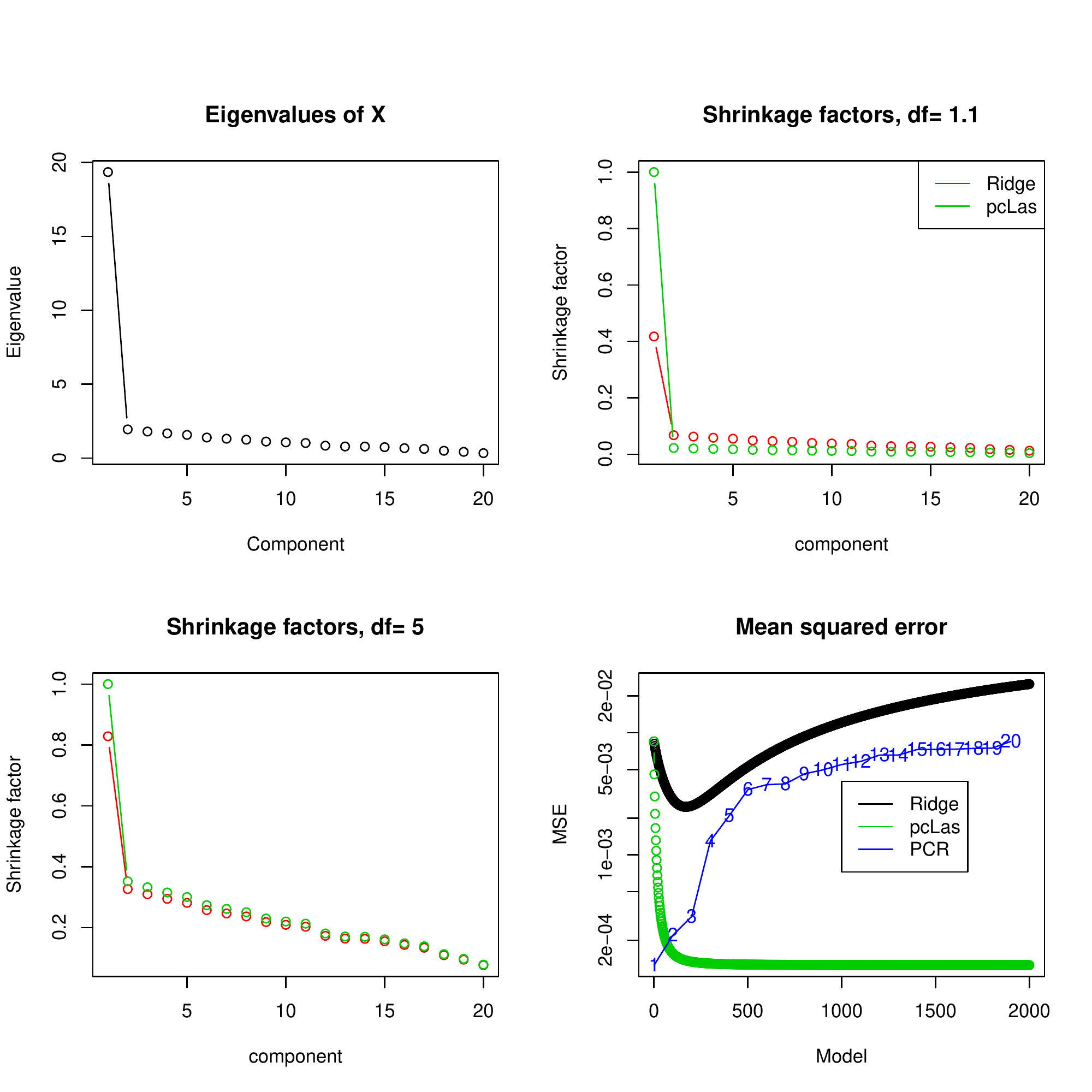}}
\caption[fig:shrinkage]{\em Simulation example: The design matrix $\bX$ has dimensions $100 \times 20$, and has a strong rank-1 component: the eigenvalues of $\bX$ are shown in the top-left panel. Top-right and bottom-left panels show the shrinkage factors from the pcLasso penalty \eqref{eqn:pcLasso1} (green) and ridge regression (red) across the PCs, for different degrees of freedom. For small degrees of freedom, pcLasso \eqref{eqn:pcLasso1} shrinks more aggressively to the top PCs. For larger degrees of freedom, the two methods have similar shrinkage profiles. The bottom right panel shows the mean-squared error (MSE) of ridge regression and pcLasso as the regularization parameter $\theta$ varies along the horizontal axis. Also included is the MSE for PC regression for various ranks. We see that pcLasso achieves about the same MSE as PC regression of rank-1, while ridge regression does not.}
\label{fig:shrinkage}
\end{figure}

We see that when the degrees of freedom is about ``right'' (top-left panel), ridge regression shrinks the top component about twice as much as does pcLasso (top-right panel). This contrast is less stark when the degrees of freedom is  increased to 5 (bottom-left panel). This aggressiveness can help pcLasso improve prediction accuracy when the signal lines up strongly with the top PCs. More importantly, with pre-specified feature groups, the pcLasso penalty in \eqref{eqn:pcLasso1} can be modified in such a way that it shrinks the subvector of $\beta$ in each group toward the leading PCs of that group. This exploits the group structure to give potentially better predictions and simultaneously selects groups of features. We give details in the next section.

The rest of this paper is organized follows. In  Section \ref{sec:pcLasso} we give a full definition of our proposal. We illustrate the effectiveness of pcLasso on some real data examples in Section \ref{sec:realdata}, and give details of our computational approach in Section \ref{sec:computation}. Section \ref{sec:df} derives a formula for the degrees of freedom of the pcLasso fit. In Section \ref{sec:sim} we discuss an extensive simulation study, comparing our proposal to the lasso, elastic net and principal components regression. We study the theoretical properties of pcLasso in Section \ref{sec:theory}, showing that, under certain assumptions, it yields lower $\ell_2$ and prediction error than the lasso. We end with a discussion and ideas for future work. The Appendix contains further details and proofs, as well as a full description of the set-up and results for our simulation study in Section \ref{sec:sim}.

\section{Principal components lasso (``pcLasso")}
\label{sec:pcLasso}
\subsection{Definition}
Suppose that the $p$ predictors are grouped in $K$ non-overlapping groups (we discuss the overlapping groups case later in Section \ref{sec:overlapping-groups}). For example, these groups could be different gene pathways, assays or protein interaction networks. For $k = 1, \ldots, K$, let $\bX_k$ denote the $p_k$ columns of $\bX$ corresponding to group $k$, let $m_k = \text{rank}(\bX_k)$, and let $(\bV_k, d_k)$ denote the right singular vectors and singular values of $\bX_k$. pcLasso minimizes
\begin{eqnarray}
J(\beta)= \frac{1}{2} \ltwo{\by-\bX\beta}^2 + \lambda  \lone{\beta} + \frac{\theta}{2} \sum_k \beta_k^T\Bigl( \bV_k \bD_{d_{k1}^2-d_{kj}^2}\bV_k ^T\Bigr)\beta_k.
\label{eqn:pcLasso}
\end{eqnarray}
Here, $\beta_k$ is the sub-vector of $\beta$ corresponding to group $k$, $d_k=(d_{k1}, \ldots d_{km_k})$ are the singular values of $\bX_k$ is decreasing order, and $\bD_{d_{k1}^2-d_{kj}^2}$ is a diagonal matrix with diagonal entries $d_{k1}^2-d_{kj}^2$ for $j=1,2,\ldots m_k$. Optionally, we might also multiply each penalty term in the sum by a factor $\sqrt{p_k}$, $p_k$ being the group size, as is standard in related methods such as the group lasso \citep{glasso}.

Some observations on solving \eqref{eqn:pcLasso} are in order:
\begin{itemize}
\item The objective function is convex and the non-smooth component is separable. Hence, it can be optimized efficiently by a coordinate descent procedure (see Section \ref{sec:computation} for details).

\item Our approach is to fix a few values of $\theta$ (including zero, corresponding to the lasso), and then optimize the objective \eqref{eqn:pcLasso} over a path of $\lambda$ values. Cross-validation is used to choose $\hat\theta$ and $\hat\lambda$.

\item The parameter  $\theta$  is not unitless, and is difficult to interpret. In our software, instead of asking the user to specify $\theta$, we ask the user to specify a function of $\theta$, denoted by ``rat''. This is the ratio
between the shrinkage factors in \eqref{eqn:pcLassopc} for $k=2$  and $k=1$ (the latter being equal to 1). This ratio is between 0 and 1, with 1 corresponding to $\theta=0$ (the lasso) and lower values corresponding to more shrinkage. In practice, we find that using the values $rat = 0.25, 0.5, 0.75, 0.9, 0.95$ and $1$ covers a wide range of possible models.

\item A potentially costly part of the computation is the SVD of each $\bX_k$. If $\bX_k$ large, for computational ease, we can use an SVD of rank $< \text{rank}(\bX_k)$ as an approximation without much loss of performance.
 \end{itemize}

For some insight into how pcLasso shrinks coefficients, we consider the contours of the penalty in the case of two predictors. Let $\rho$ denote the correlation between these two predictors. In this case, it is easy to show that the two right singular vectors of $\bX$ are $\frac{1}{\sqrt{2}}(1, 1)^T$ and $\frac{1}{\sqrt{2}}(1, -1)^T$, with the former (latter resp.) being the leading singular vector if $\rho > 0$ ($\rho < 0$ resp.). Figure \ref{fig:contours} presents the penalty contours for pcLasso along with that for ridge, lasso and elastic net regression for comparison. (See Appendix \ref{sec:contours} for detailed computations on the pcLasso contours.) We can see that pcLasso imposes a smaller penalty in the direction of the leading singular vector, and that this penalty is smaller when the correlation between the two predictors is stronger. We also see that as $\theta \rightarrow 0$, the pcLasso contours become like those for the lasso, while as $\theta \rightarrow \infty$, they become line segments in the direction of the leading singular vector.
\begin{figure}[ht]
\includegraphics[width=2in]{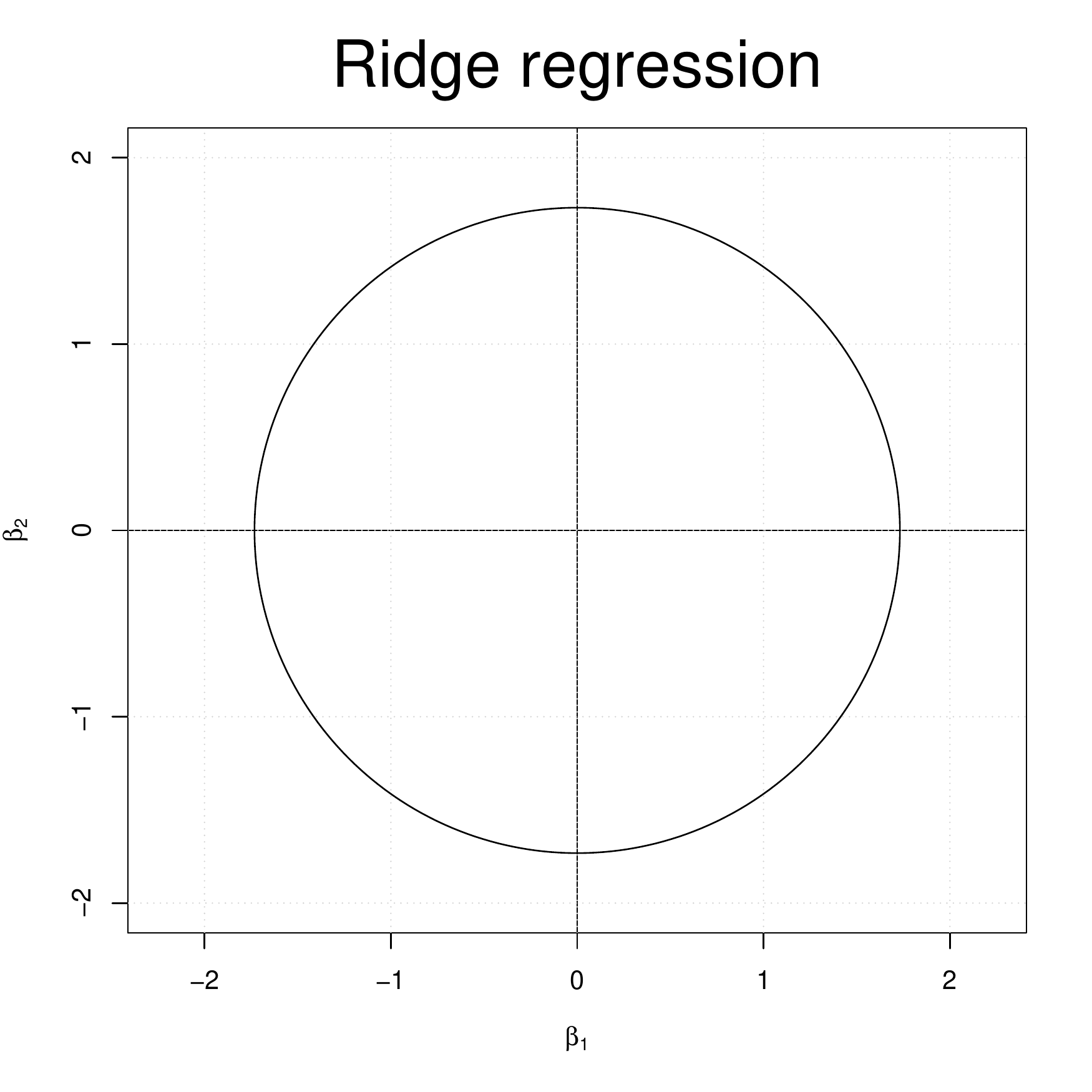} \includegraphics[width=2in]{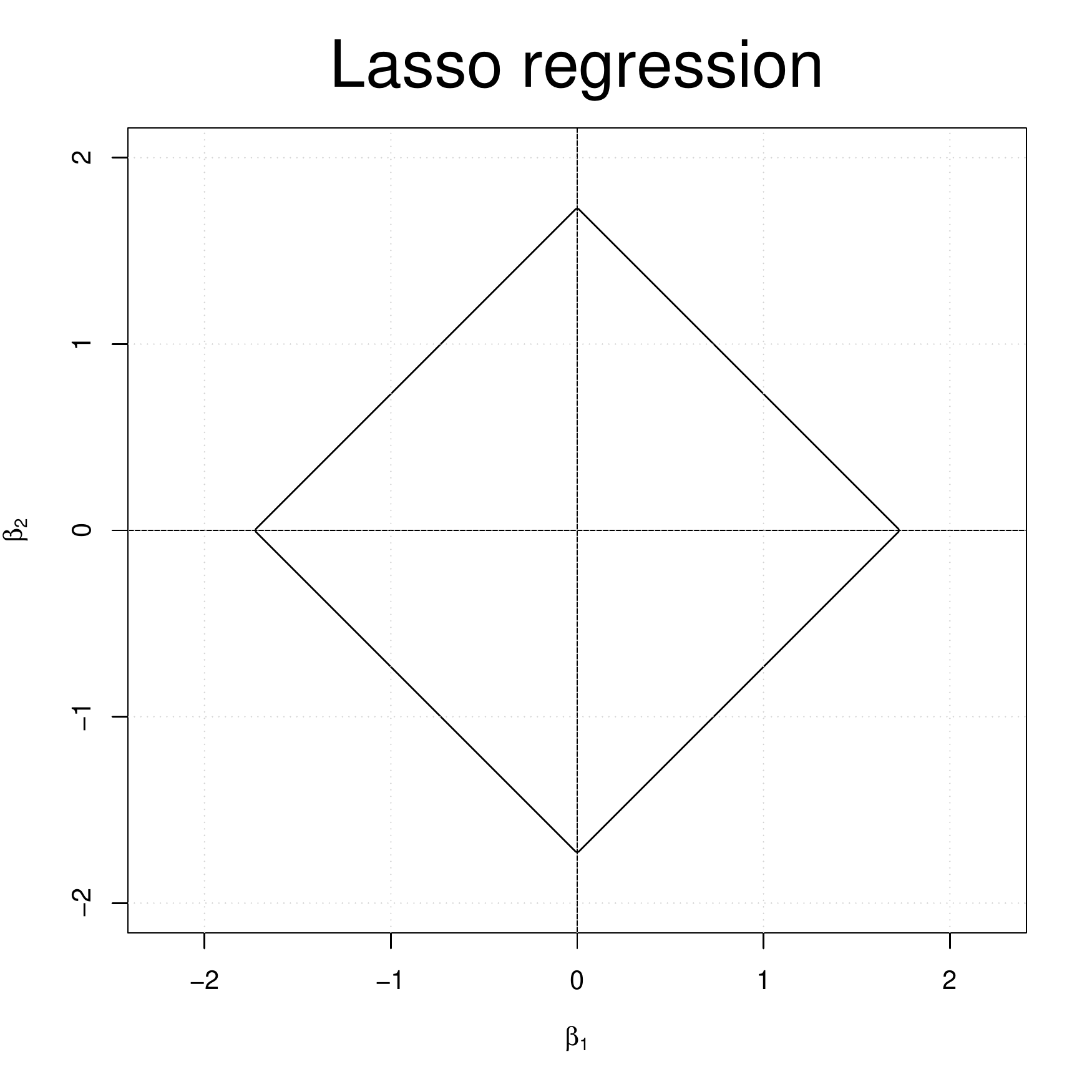} \includegraphics[width=2in]{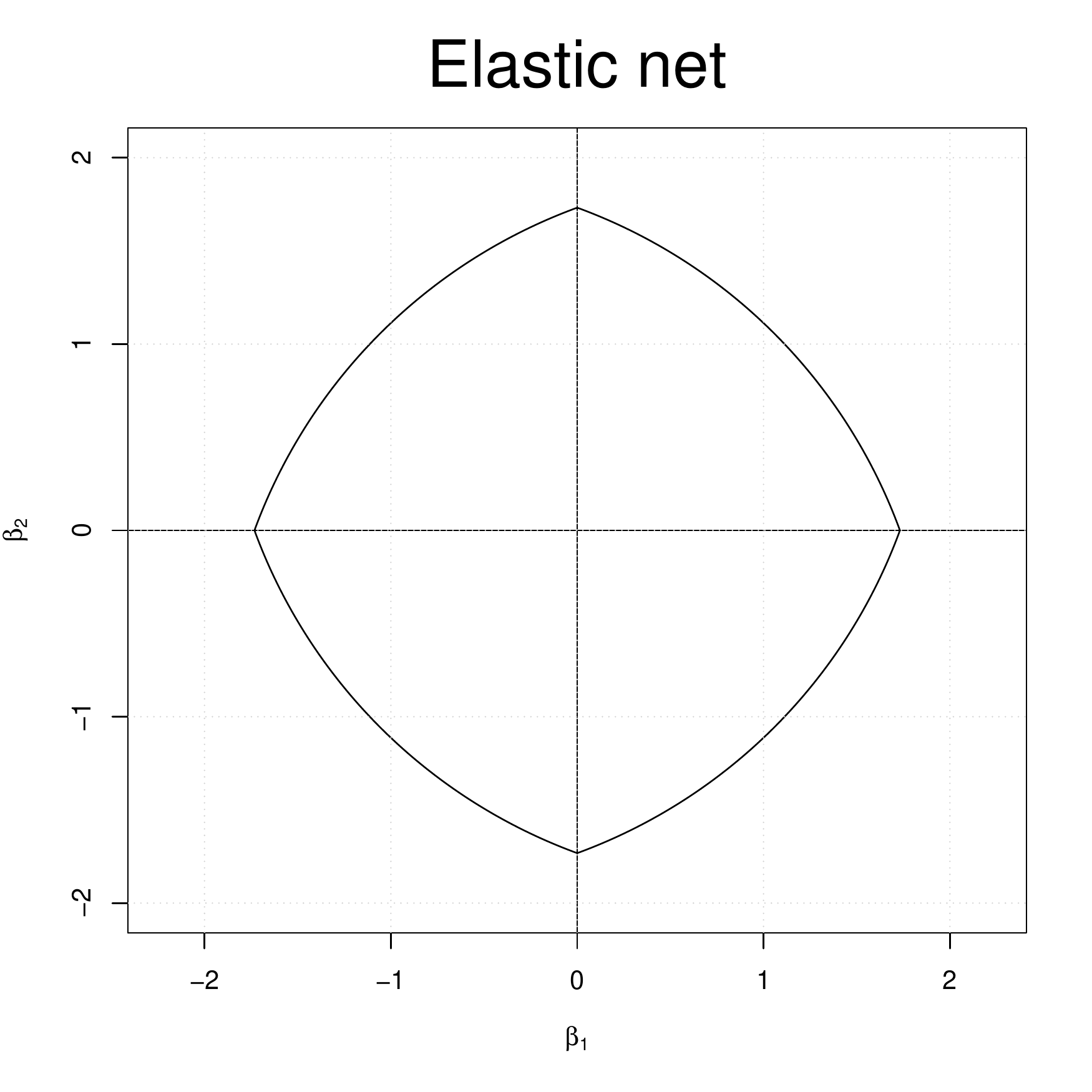}

\includegraphics[width=2in]{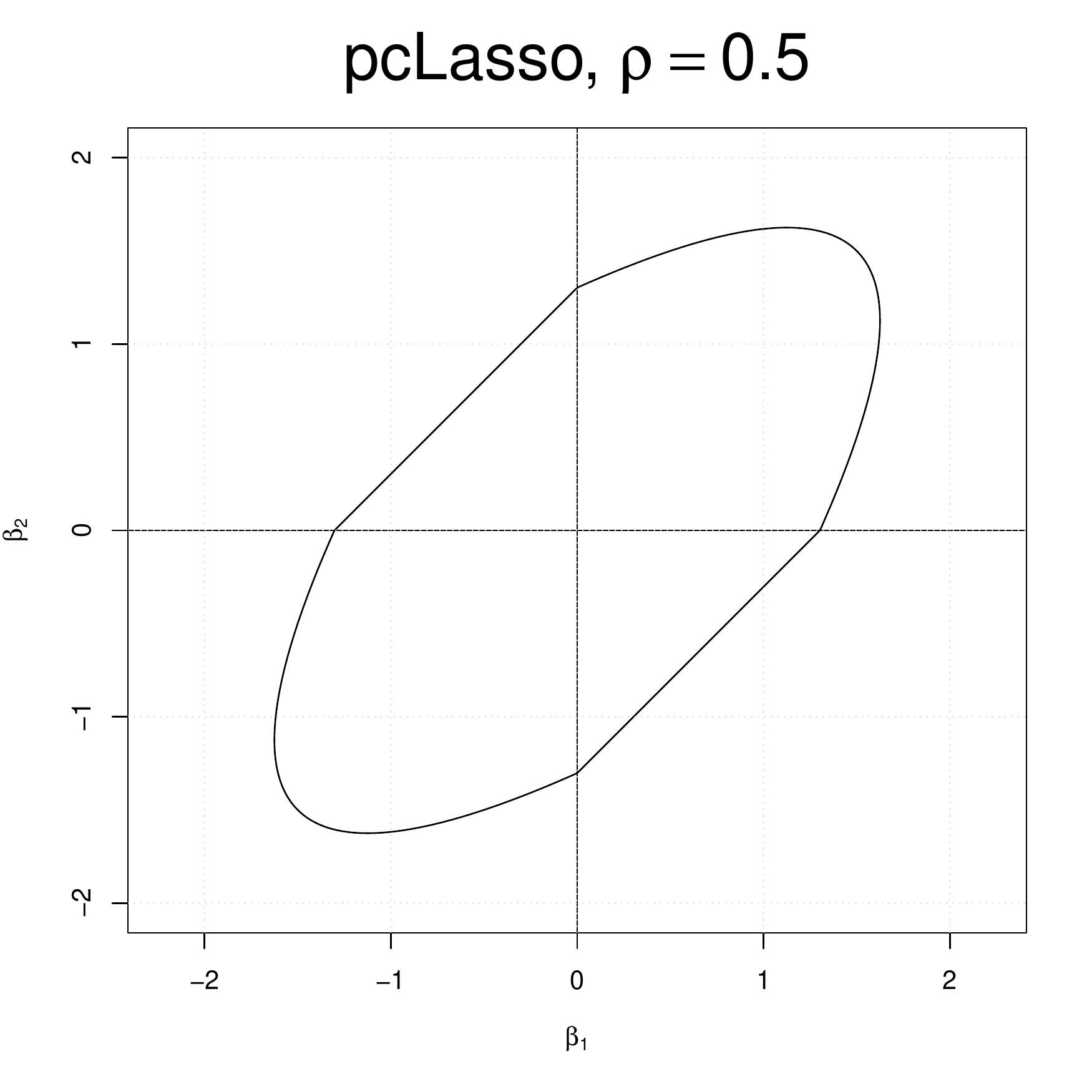} \includegraphics[width=2in]{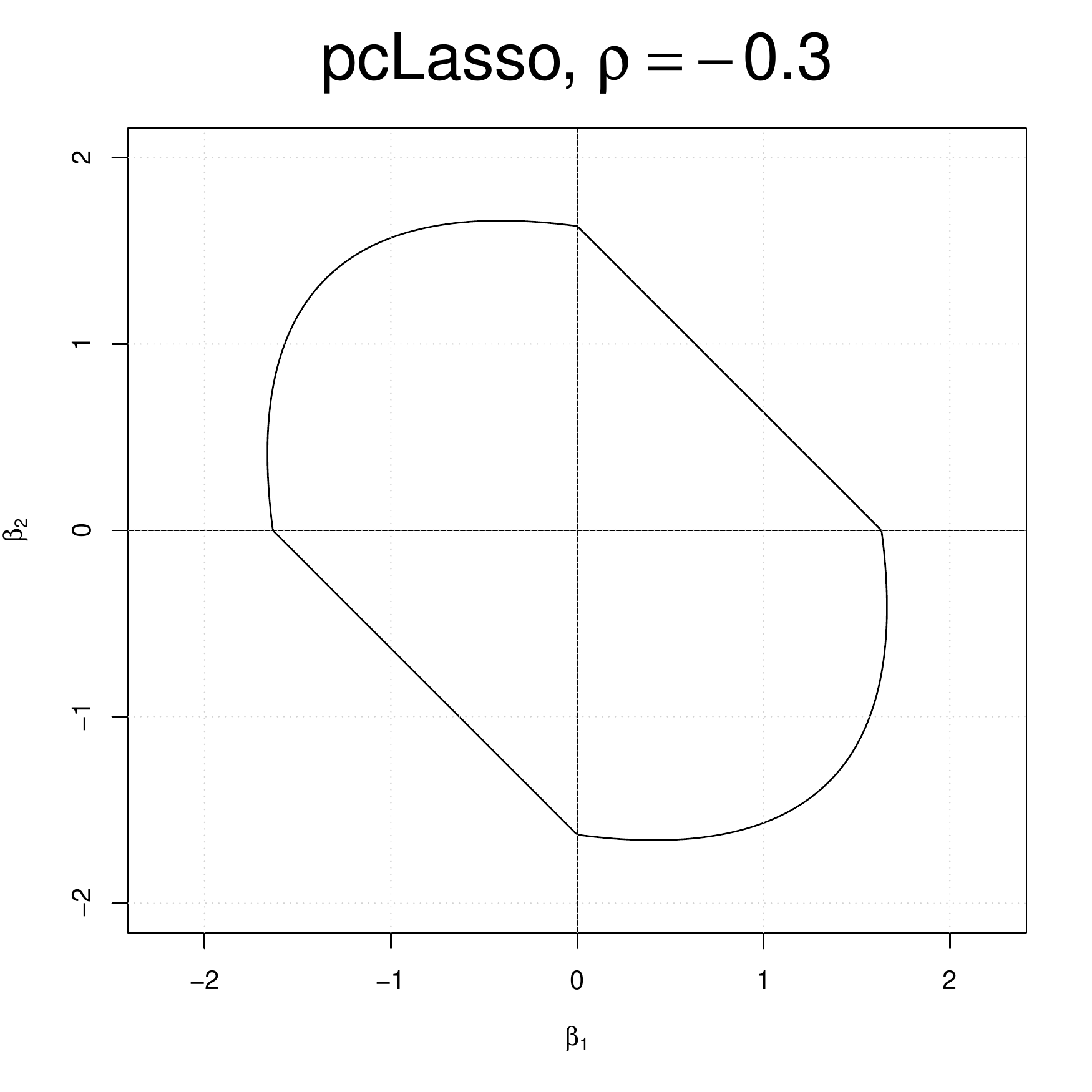} \includegraphics[width=2in]{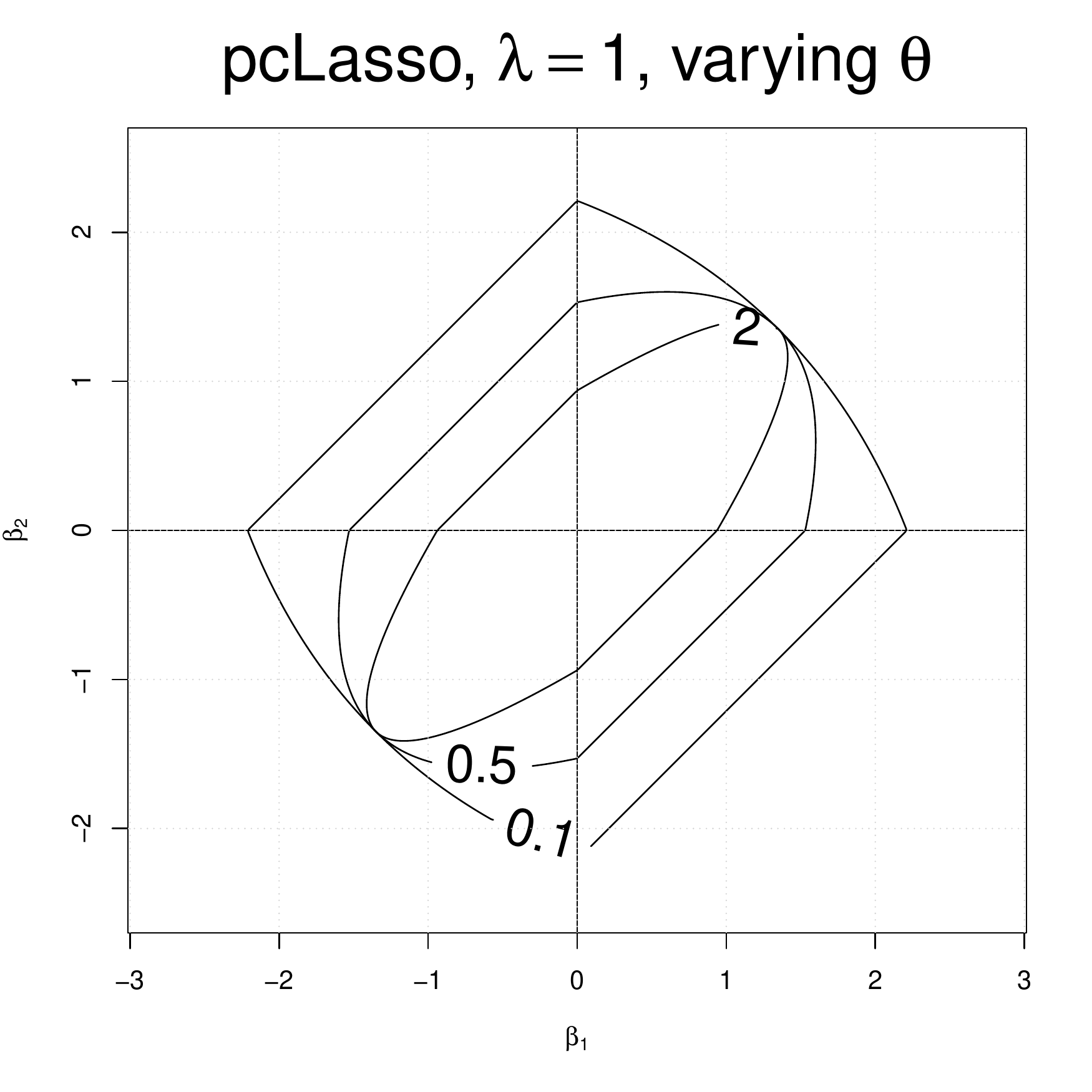}
\caption[fig:contours]{\em Contours for the pcLasso penalty, compared with those for ridge, lasso and elastic net regression. pcLasso penalizes the coefficient vector less in the direction of the leading singular vector. As $\theta \rightarrow 0$, the pcLasso contours become like those for the lasso, while as $\theta \rightarrow \infty$, they become line segments in the direction of the leading singular vector.}
\label{fig:contours}
\end{figure}

Figure \ref{fig:contours3D} shows the penalty contours for a particular set of three predictors, with the ratio $\lambda / \theta$ decreasing as we go from left to right. As this ratio goes from $\infty$ to $0$, the contours move from being the $\ell_1$ ball (as in the lasso) to becoming more elongated in the direction of the first PC.

\begin{figure}[ht]
\includegraphics[width=2in]{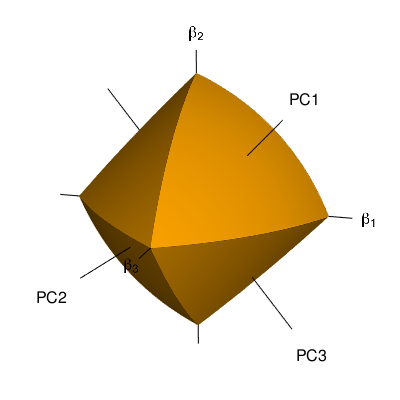} \includegraphics[width=2in]{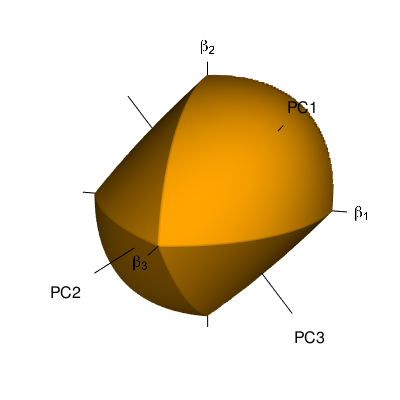} \includegraphics[width=2in]{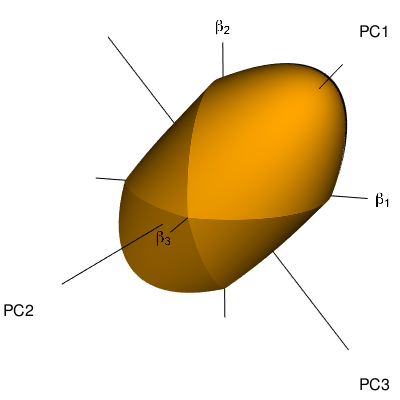}
\caption[fig:contours3D]{\em Contours for the pcLasso penalty for three predictors. For a fixed value of $\theta$, $\lambda$ decreases as we go from left to right. We see that as we give less relative weight to the $\ell_1$ penalty, the penalty contours look less and less like that of the lasso and become more elongated in the direction of the first PC.}
\label{fig:contours3D}
\end{figure}

\subsection{The overlapping groups setting}\label{sec:overlapping-groups}
As in the group lasso \citep{Yuan06modelselection}, there are two approaches one can take to deal with overlapping groups. One approach is to apply the group penalty to the given groups: this however has the undesirable effect of zeroing out a predictor's coefficient if any group to which it belongs is zeroed out. The overlap also causes additional computational challenges as the penalty is no longer separable. A better approach is the ``overlap group lasso'' penalty \citep{JOV2009}, where one replicates columns of $\bX$ to account for multiple memberships, hence creating non-overlapping groups. The model is then fit as if the groups are non-overlapping and the coefficients for each original feature are summed to create the estimated coefficient for that feature. We take this same approach for pcLasso.

\section{Real data examples}
\label{sec:realdata}
\subsection{Blog Feedback dataset}
This dataset is from the UC Irvine collection. Here is the description of the data and task:
\medskip

{\em ``This data originates from blog posts. The raw HTML-documents of the blog posts were crawled and processed.
The prediction task associated with the data is the prediction of the number of comments in the upcoming 24 hours. In order to simulate this situation, we choose a basetime (in the past) and select the blog posts that were published at most 72 hours before the selected base date/time. Then, we calculate all the features of the selected blog posts from the information that was available at the basetime, therefore each instance various ranks, corresponds to a blog post. The target is the number of comments that the blog post received in the next 24 hours relative to the basetime.

In the train data, the basetimes were in the years 2010 and 2011. In the test data the basetimes were in February and March 2012. This simulates the real-world situation in which training data from the past is available to predict events in the future.'' }
\medskip

There are a total of $60,021$ observations; we took a random sample of 1000 observations from training set (years 2010 and 2011),
and 5 days ---2012.03.27 through 2012.03.31--- for the test data (a total of 817 observations). There are a total of 281 attributes (features). We applied pcLasso with no pre-specified groups of features. The left panel of Figure \ref{fig:blog} shows the eigenvalues of the $\bX^T\bX$ matrix, scaled so that the largest eigenvalue is 1. The eigenvalues for a random Gaussian matrix of the same size are also shown for reference. For the Blog Feedback dataset, there is a sharp drop-off in the eigenvalues after the first 3 or 4 eigenvalues, indicating strong correlation among the features.

The right panel shows the test error for the lasso, elastic net, PC regression of various ranks and pcLasso for different values of the ``rat" parameter. The results for elastic net are very close to that for the lasso. We see that pcLasso achieves a somewhat lower test error by shrinking towards the top principal components.

\begin{figure}
\centering
\includegraphics[width=6in]{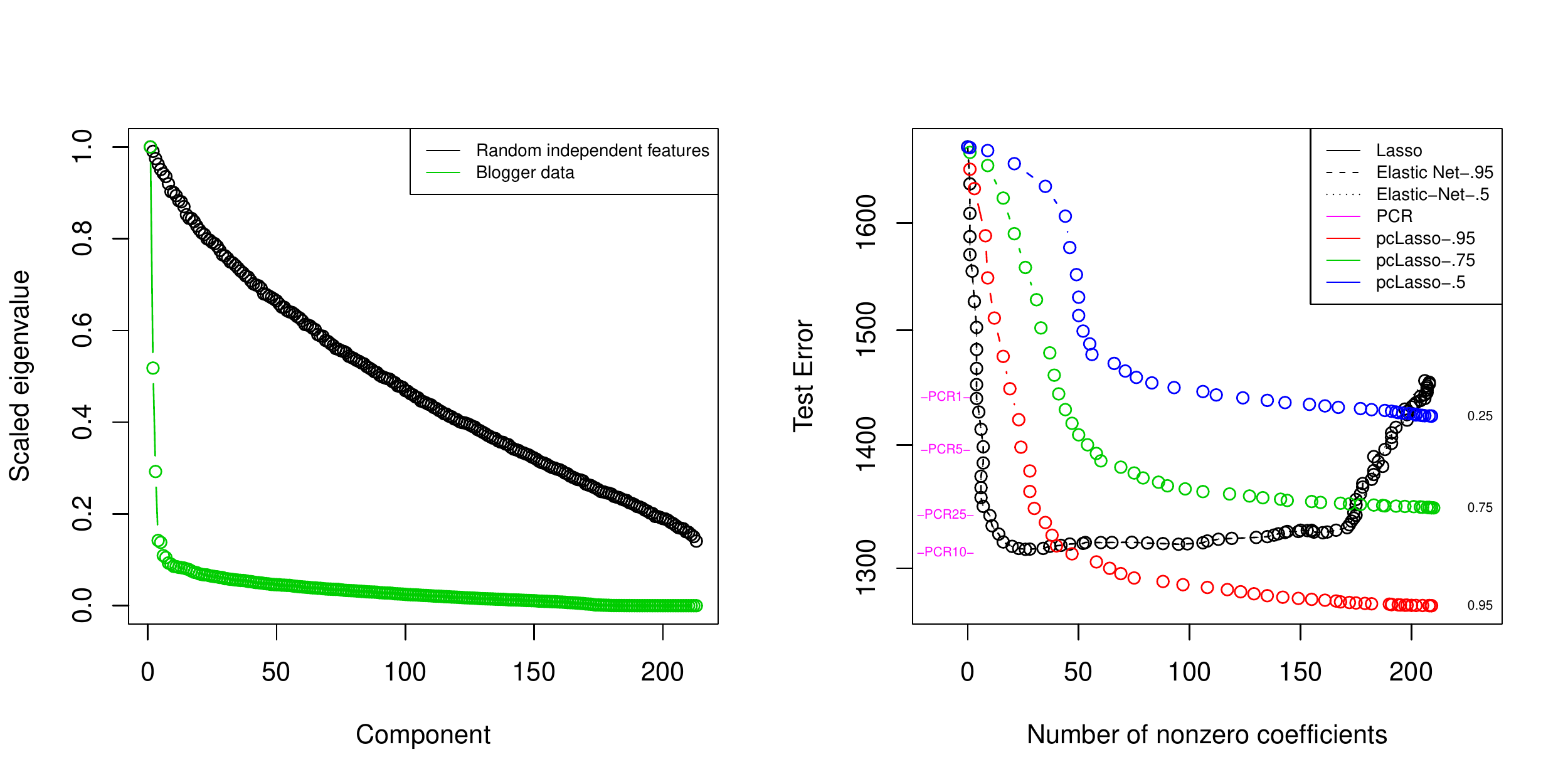}
\caption{\em Results for Blogger Feedback dataset: Left panel shows the eigenvalues of the $\bX^T\bX$ matrix, scaled so that the largest eigenvalue is 1. The eigenvalues for a random Gaussian matrix of the same size are also shown for reference. Right panel shows the test error for different methods. The lasso and elastic net have give similar results on this dataset. The errors for principal components regression (PCR) do not vary across the horizontal axis and so are marked on the left side of the panel.}
\label{fig:blog}
\end{figure}

\subsection{p53 microarray expression data}
Here, we analyze the data from  a gene expression study, as described in \cite{YB2016}, taken from \cite{sub2005}. It  involves the mutational status of the gene p53 in cell lines. The study aims to identify pathways that correlate with the mutational status of p53, which regulates gene expression in response to various signals of cellular stress. The data consists of 50 cell lines, 17 of which are classified as normal and 33 of which carry mutations in the p53 gene. To be consistent with the analysis in \cite{sub2005}, 308 gene sets that have size between 15 and 500 are included in our analysis. These gene sets contain a total of 4301 genes and is available in the R package {\tt grpregOverlap}. When the data is expanded to handle the overlapping groups, it contains a total of $13,237$ columns.

We divided the data into 10 cross-validation (CV) folds and applied the lasso, pcLasso and the group lasso using the R package {\tt grpregOverlap}. (The data was too large for the sparse group lasso package, \texttt{SGL}.) The left panel of Figure \ref{fig:p53fig} plots the number of non-zero pathways (i.e. pathways with at least one zero coefficient) against the number of non-zero coefficients for each method, as the complexity parameter of each method is varied. We see that pcLasso induces some group-level sparsity, although not as strongly as the group lasso. However, the right panel shows that pcLasso achieves higher cross-validation area under the curve (AUC) than the other approaches.

\begin{figure}
\includegraphics[width=3in]{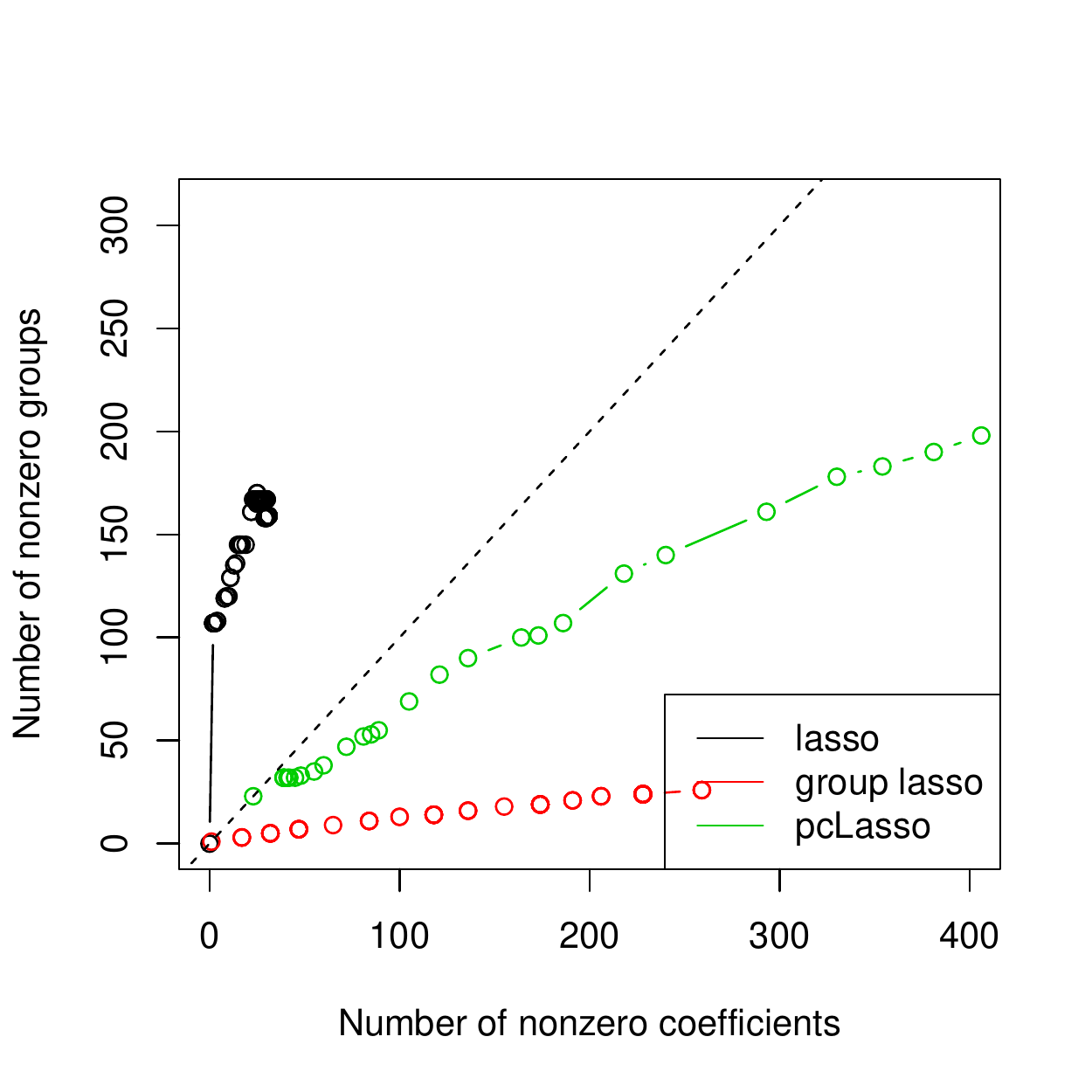}
\includegraphics[width=3in]{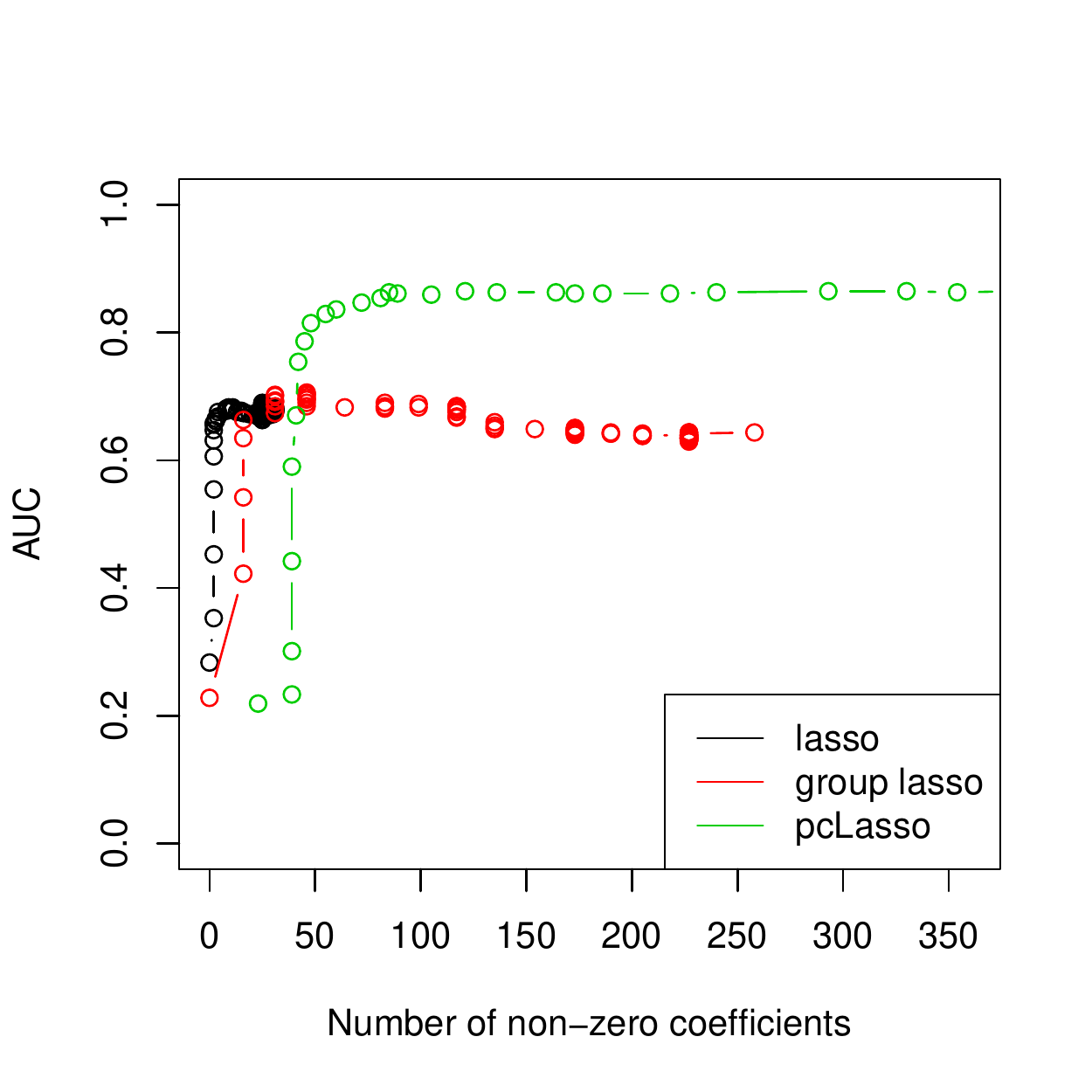}
\caption[fig:fig2]{\em Results for p53 dataset: Left panel shows the number of non-zero pathways vs. the number of non-zero coefficients for the lasso, group lasso and pcLasso. pcLasso exhibits some group-level sparsity, but not as strongly as the group lasso. The right panel is plot of CV area under the curve (AUC) vs. the number of non-zero coefficients in the selected model. pcLasso achieves the best result here.}
\label{fig:p53fig}
\end{figure}

\section{Computation of the pcLasso solution}
\label{sec:computation}
Consider first the simpler case of just one group. Let $\bX=\bU\bD\bV^T$  be the singular value decomposition of $\bX$ with $\bD={\rm diag} (d_1,\ldots, d_m)$, where $m = \text{rank}(\bX)$, and  let $\bA=\bV \bD_{d_1^2-d^2_j}\bV ^T$. The objective function for pcLasso is
\begin{eqnarray}
J(\beta)= \frac{1}{2} \ltwo{\by-\bX\beta}^2 + \lambda  \lone{\beta} + \frac{\theta}{2} \beta^T \bA\beta.
\end{eqnarray}

Since $J$ is convex in $\beta$ and the non-smooth part of the penalty is separable, we can minimize $J$ by applying coordinate descent. The coordinate-wise update has the form
\begin{equation*}
\tilde \beta_j \leftarrow  \frac{\mathcal{S} \Bigl( \sum_i x_{ij}r_i^{(j)}   -\theta s_j,\lambda \Bigr)} {\sum_i x_{ij}^2+\theta A_{jj} }, \end{equation*}
where $r_i^{(j)} = y_i - \sum_{j' \neq j} x_{ij'} \beta_{j'}$ is the partial residual with predictor $j$ removed, $s_j=\sum_\ell A_{j\ell}\beta_\ell-A_{jj}\beta_j$, and $\mathcal{S}$ is the soft-thresholding operator.

These updates can be easily generalized to the general case of $K$ non-overlapping groups \eqref{eqn:pcLasso} and where observations are given different weights (observation $i$ is given weight $w_i$). Using the notation of Section \ref{sec:pcLasso}, we apply coordinate descent as shown in Algorithm {\ref{alg1}.

\bigskip
\begin{algorithm}
\caption{ \em Computation of the pcLasso solution}
\label{alg1}
\begin{enumerate}
\item Input: response $\by_{n\times 1}$, features $\bX_{n\times p}$,  $K$ non-overlapping groups of features  each of
size $p_k$,  fixed value of $\theta\geq 0$. Observation weights $w_i \geq 0$.

Compute the SVD of the columns of $X$ corresponding to each group:
$$\bX_k=\bU_k \bD_k \bV_k^T.$$
Compute $\bA^k=  \bV_k \bD_{d_{k1}^2-d_{kj}^2}\bV_k ^T$ for each group $k$. Given weights $w_i$, let $v_j=\sum_i \tilde w_i x_{ij}^2$, where $\tilde w_i=n w_i/\sum_i w_i$. Set $\hat\beta=0$.

\item Define a grid of values $ (\lambda_{max}, \ldots \lambda_{min})$, where $\lambda_{max}$ is the smallest value of $\lambda$ yielding $\hat\beta=0$. For $\lambda \in  (\lambda_{max}, \ldots \lambda_{min})$:
\begin{description}
\item For $j=1,2,\ldots p,1,2,\ldots$ until convergence:
\begin{description}
\item {(a)} Compute the partial residual $r_i^{(j)} = y_i - \sum_{j' \neq j} x_{ij'}\beta_{j'}$.

\item {(b)} Compute $s_j=\sum_\ell A^k_{j\ell}\beta_\ell-A^k_{jj}\beta_j$, where $k$ is the group containing predictor $j$.

\item {(c)} Update $$\tilde \beta_j \leftarrow  \frac{\mathcal{S} \Bigl( \sum_i \tilde w_i x_{ij}r_i^{(j)}  -\theta s_j,\lambda \Bigr)} {v_j+\theta  A^k_{jj} }. $$
\end{description}
\end{description}
\end{enumerate}

\end{algorithm}
We also generalize this procedure to the binomial/logistic model, using the same Newton-Raphson (iteratively reweighted least squares) framework used in {\tt glmnet} \citep{FHT2010}.

\subsection{pcLasso's grouping effect}
We saw in Figure \ref{fig:p53fig} that pcLasso can have a sparse grouping effect, even though it does not use a two-norm penalty. We investigate this further here. For Figure \ref{fig:sparsegroups}, we generated $n = 50$ observations with 750 predictors, arranged in 50 groups of 15 predictors. Within each group, the predictors were either independent (left panel) or had a strong rank-1 component (right panel). The figure plots the number of non-zero groups (i.e. groups with at least one zero coefficient) against the number of non-zero coefficients for the lasso, the sparse group lasso (SGL) \citep{simon2013} and pcLasso. The latter two methods were run with different tuning parameters. We see that pcLasso shows no grouping advantage over the lasso in the left panel, but produces moderately strong grouping in the right panel. One might argue that this is reasonable: grouping only occurs when the groups have some strong internal structure. The sparse group lasso, on the other hand, always displays a strong grouping effect.

\begin{figure}
\includegraphics[width=6in]{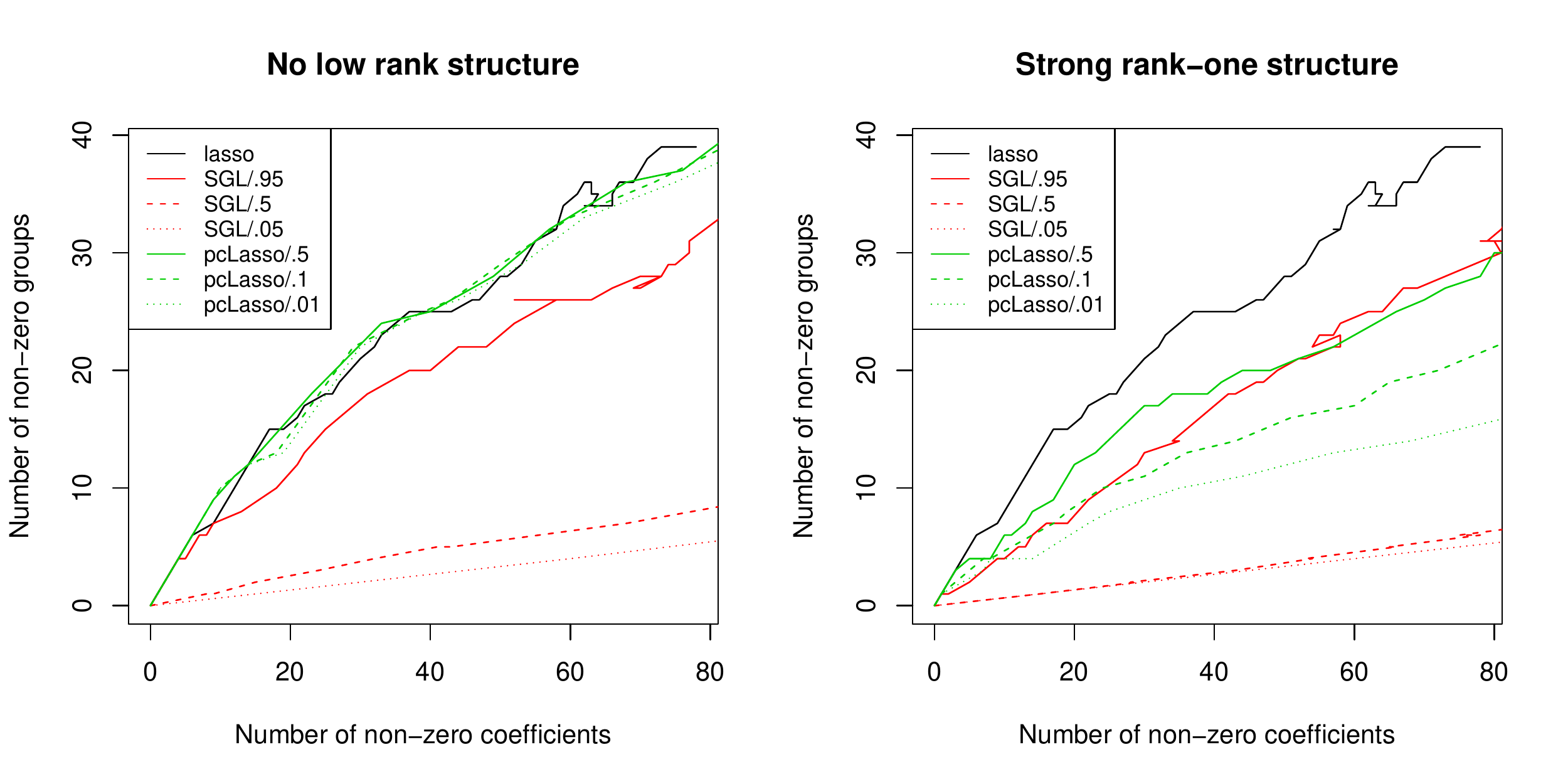}
\caption{\em Grouping results for the lasso, sparse group lasso (SGL) and pcLasso. $n = 50$ and $p = 750$, with the predictors coming in 50 groups of size 15 each. Left panel: Predictors were independent. pcLasso has similar grouping behavior as the lasso. Right panel: Within each group, predictors had a strong rank-1 component. Here, pcLasso shows a stronger grouping effect than the lasso. SGL has a strong grouping effect in both settings.}
\label{fig:sparsegroups}
\end{figure}

\subsection{Connection to the group lasso and a generalization of pcLasso}
The group lasso and sparse group lasso objective functions are given by
\begin{eqnarray}
J_{GL}(\beta)&=&\frac{1}{2} ||\by-\bX\beta||^2 + \lambda_1\sum_k||\beta_k||_2, \\
J_{SGL}(\beta)&=&\frac{1}{2} ||\by-\bX\beta||^2 + \lambda  ||\beta||_1 + \lambda_1\sum_k||\beta_k||_2,
\end{eqnarray}
respectively.  In the original group lasso paper of \cite{Yuan06modelselection} it was assumed that the $\bX_k$, the block of columns in $\bX$ corresponding to group $k$, were orthonormal, i.e. $\bX_k^T\bX_k=\bI$. This makes the computation considerable easier and as pointed out by \cite{SimonTibs}, is equivalent to a penalty of the form $\sum_k \ltwo{\bX_k\beta_k}$. The R package {\tt grpreg} uses this orthogonalization and states that it is essential to its computational performance \citep{Breheny2015}. Since $\ltwo{\bX_k\beta_k} = \ltwo{\bD_k \bV_k^T \beta_k}$, this orthogonalization penalizes the top eigenvectors {\rm more}, in direct contrast to pcLasso which puts {\em less} penalty on the top eigenvectors. Of course, the group lasso without the $\ell_1$ term does not deliver sparsity in the features. We also note that this orthogonalization does not work with the sparse group lasso, as the sparsity among the original features would be lost, and does not work whenever $p_k > n$ for any group $k$.

One could envision a variation of pcLasso, namely a version of the sparse group lasso that uses an objective function of the form
\begin{equation}
\frac{1}{2} ||\by-\bX\beta||^2 + \lambda  ||\beta||_1 + \lambda_1\sum_k||\bR_k^{1/2} \beta_k||_2,
\end{equation}
where $\bR_k$ is any weight matrix. For example, $\bR_k=\bV_k \bD_{d_{k1}^2-d_{kj}^2}\bV_k ^T$ as in \eqref{eqn:pcLasso} would put less penalty on the higher eigenvectors and induce strong group sparsity. We have not experimented with this yet, as the computation seems challenging due to the presence of $\ell_2$ norms.

\subsection{Comparison of timings for model-fitting}

Table \ref{tab:timings} shows a comparison of timings for Algorithm \ref{alg1} against that for other methods, all available as R packages. The competitors include the lasso \texttt{glmnet}, sparse group lasso {\tt SGL} and group lasso packages {\tt gglasso} and {\tt grpreg}. As mentioned earlier, the latter uses orthogonalization within groups to speed up the computations, but this changes the problem. Among these competitors, only {\tt SGL} provides sparsity at the level of individual features. All functions were called with default settings.

We see that {\tt SGL} and {\tt gglasso} start to slow down considerably as the problem size gets moderately large, while the times for {\tt grpreg} are similar to that for pcLasso. However, when the cost for the initial SVD is separated out, the speed of pcLasso is roughly comparable to that of {\tt glmnet}. In practice one can compute the SVD upfront for the given feature matrix, and then use it for subsequent applications of pcLasso. For example, in cross-validation, one could compute just one SVD for the entire feature matrix, rather than one SVD in each fold. Table \ref{tab:timings2} considers
larger problem sizes and we see these same general trends.

\begin{table}[ht]
\centering
\begin{tabular}{rrrrrrrrrr}
  \hline
 & n & p & \texttt{glmnet} & \texttt{SGL} & \texttt{gglasso} & \texttt{grpreg} & SVD for pcLasso & Rest & Total pcLasso \\
  \hline
1 & 100 & 100 & 0.05 & 3.59 & 3.31 & 0.02 & 0.00 & 0.02 & 0.02 \\
  2 & 100 & 200 & 0.02 & 7.14 & 0.39 & 0.03 & 0.01 & 0.01 & 0.01 \\
  3 & 100 & 500 & 0.02 & 29.65 & 2.40 & 0.08 & 0.03 & 0.02 & 0.05 \\
  4 & 500 & 1000 & 0.19 &  & 14.89 & 1.12 & 0.25 & 0.13 & 0.38 \\
  5 & 1000 & 2000 & 0.67 &  &  & 6.38 & 1.50 & 0.54 & 2.04 \\
  6 & 1000 & 5000 & 1.10 &  &  & 17.26 & 11.16 & 2.35 & 13.51 \\
   \hline
\end{tabular}
\caption[tab:timings]{\em Comparison of timings for various algorithms. The predictors are pre-assigned to 10 groups. Time in seconds, average over three runs for an entire path of solutions. The last three columns show the pcLasso model-fitting times broken up by the initial SVD(s) and the rest of the computation. \texttt{SGL} refers sparse group lasso. \texttt{gglasso} and \texttt{grpreg} are group lasso packages; the latter does orthogonalization of predictors.}
\label{tab:timings}
\end{table}

\begin{table}[ht]
\centering
\begin{tabular}{rrrrrrrr}
  \hline
  & n & p & \texttt{glmnet} & \texttt{grpreg} & SVD for pcLasso & Rest & Total pcLasso \\
  \hline
1 & 500 & 1000 & 0.19 & 1.92 & 0.27 & 0.17 & 0.43 \\
  2 & 1000 & 2000 & 0.76 & 5.33 & 1.76 & 0.54 & 2.30 \\
  3 & 2000 & 5000 & 2.49 & 36.57 & 24.15 & 3.81 & 27.96 \\
  4 & 2000 & 10000 & 4.50 & 139.63 & 106.56 & 12.96 & 119.52 \\
   \hline
\end{tabular}
\caption[tab:timings2]{\em As in the previous table, but for larger problem sizes and focusing on just {\tt glmnet}, {\tt grpreg} and pcLasso.}
\label{tab:timings2}
\end{table}

\section{Degrees of freedom}
\label{sec:df}
Given a vector of response values $\by$ with corresponding fits $\hat{\by}$, \cite{Ef86} defines the degrees of freedom as
\begin{equation}
\text{df}(\hat{\by}) = \frac{\sum_i \text{Cov}(\by_i, \hat{\by}_i)}{\sigma^2}.
\end{equation}

The degrees of freedom measures the flexibility of the fit: the larger the degrees of freedom, the more closely the fit matches the response values. For fits of the form $\hat{\by} = \bM \by$, the degrees of freedom is given by $\text{df}(\hat{\by}) = \text{tr}(\bM)$. For the lasso, the number of non-zero elements in the solution is an unbiased estimate of the degrees of freedom \citep{lassodf}. \cite{Zou-thesis} derived an unbiased estimate for the degrees of freedom of the elastic net: if the elastic net solves
\begin{equation}
\minimize_{\beta} \quad \ltwo{\by - \bX\beta}^2 + \lambda_1 \lone{\beta} + \lambda_2 \ltwo{\beta}^2,
\end{equation}
then
\begin{equation}
\widehat{df} = \text{tr} \left[ \bX_{\mathcal{A}} (\bX_{\mathcal{A}}^T \bX_{\mathcal{A}} + \lambda_2 {\bf I})^{-1} \bX_{\mathcal{A}}^T \right]
\end{equation}
is an unbiased estimate for the degrees of freedom, where $\mathcal{A}$ is the active set of the fit.

Using similar techniques as that of \cite{Zou-thesis}, we can derive an unbiased estimate for the degrees of freedom of pcLasso when the number of groups $K$ is equal to one:

\begin{theorem}\label{thm:df}
Let pcLasso be the solution to
\begin{equation}
\minimize_\beta \quad \ltwo{\by - \bX\beta}^2 + \lambda \lone{\beta} + \theta \beta^T \bV \bD_{d_1^2 - d_j^2} \bV^T \beta,
\end{equation}
where $\bD_{d_1^2 - d_j^2}$ is a $p \times p$ matrix with entries $d_1^2 - d_j^2$ ($m = \text{rank}(\bX), d_{m+1} = \dots = d_p = 0$). Let $\bW = (\bV \bD_{d_1^2 - d_j^2} \bV^T)^{1/2}$. Then an unbiased estimate for the degrees of freedom for pcLasso is
\begin{equation}\label{eqn:df-formula}
\widehat{df} = \text{tr} \left[ \bX_{\mathcal{A}} (\bX_{\mathcal{A}}^T \bX_{\mathcal{A}} + \theta \bW_{\mathcal{A}}^T \bW_{\mathcal{A}})^{-1} \bX_{\mathcal{A}}^T \right],
\end{equation}
where $\mathcal{A}$ is the active set of the fit.
\end{theorem}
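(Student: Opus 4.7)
The plan is to mimic the strategy used by Zou, Hastie and Tibshirani for the elastic net: reduce pcLasso to a plain lasso via data augmentation, then invoke the lasso degrees-of-freedom result of \cite{lassodf} on the augmented problem, and finally translate the answer back in terms of the original response $\by$.

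\textbf{Step 1: Augmentation.} Write $\bW = (\bV \bD_{d_1^2-d_j^2} \bV^T)^{1/2}$, which is well defined since $\bD_{d_1^2-d_j^2}$ has non-negative diagonal entries. Define the augmented response and design
\begin{equation*}
\by^\ast = \begin{pmatrix} \by \\ \bzero_p \end{pmatrix}, \qquad \bX^\ast = \begin{pmatrix} \bX \\ \sqrt{\theta}\,\bW \end{pmatrix}.
\end{equation*}
A direct expansion shows that
\begin{equation*}
\ltwo{\by^\ast - \bX^\ast\beta}^2 = \ltwo{\by-\bX\beta}^2 + \theta\,\beta^T \bV \bD_{d_1^2-d_j^2} \bV^T \beta,
\end{equation*}
so that the pcLasso problem is equivalent to the lasso $\minimize_\beta \ltwo{\by^\ast - \bX^\ast\beta}^2 + \lambda\lone{\beta}$ with augmented data $(\bX^\ast,\by^\ast)$.

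\textbf{Step 2: Active-set representation of the fit.} On the active set $\mathcal{A}$, the KKT conditions for this lasso give
\begin{equation*}
\hat\beta_{\mathcal{A}} = \bigl((\bX^\ast_{\mathcal{A}})^T \bX^\ast_{\mathcal{A}}\bigr)^{-1}\!\Bigl((\bX^\ast_{\mathcal{A}})^T \by^\ast - \tfrac{\lambda}{2}\mathrm{sign}(\hat\beta_{\mathcal{A}})\Bigr) = (\bX_{\mathcal{A}}^T\bX_{\mathcal{A}} + \theta\,\bW_{\mathcal{A}}^T\bW_{\mathcal{A}})^{-1}\!\Bigl(\bX_{\mathcal{A}}^T\by - \tfrac{\lambda}{2}\mathrm{sign}(\hat\beta_{\mathcal{A}})\Bigr),
\end{equation*}
using that the zero part of $\by^\ast$ kills the $\bW$-block in $(\bX^\ast_{\mathcal{A}})^T \by^\ast$. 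Hence the fitted values in the original coordinates are
\begin{equation*}
\hat{\by} = \bX_{\mathcal{A}}\hat\beta_{\mathcal{A}} = \bX_{\mathcal{A}}(\bX_{\mathcal{A}}^T\bX_{\mathcal{A}} + \theta\,\bW_{\mathcal{A}}^T\bW_{\mathcal{A}})^{-1}\bX_{\mathcal{A}}^T\by - \tfrac{\lambda}{2}\bX_{\mathcal{A}}(\bX_{\mathcal{A}}^T\bX_{\mathcal{A}} + \theta\,\bW_{\mathcal{A}}^T\bW_{\mathcal{A}})^{-1}\mathrm{sign}(\hat\beta_{\mathcal{A}}).
\end{equation*}
For fixed $(\mathcal{A}, \mathrm{sign}(\hat\beta_{\mathcal{A}}))$ this is an affine function of $\by$, whose Jacobian with respect to $\by$ is exactly the matrix $\bH := \bX_{\mathcal{A}}(\bX_{\mathcal{A}}^T\bX_{\mathcal{A}} + \theta\,\bW_{\mathcal{A}}^T\bW_{\mathcal{A}})^{-1}\bX_{\mathcal{A}}^T$.

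\textbf{Step 3: Divergence / Stein argument.} Under the Gaussian noise model, Stein's unbiased risk identity gives $\mathrm{df}(\hat{\by}) = \mathbb{E}\bigl[\nabla_{\by}\cdot\hat{\by}(\by)\bigr]$ whenever $\hat{\by}$ is Lipschitz in $\by$. Convexity and strict positive-definiteness of the quadratic part ensure that $\hat{\by}$ is continuous and piecewise affine in $\by$, and a generic-position argument along the lines of \cite{lassodf} (adapted to the augmented design $\bX^\ast$) shows that the active set and sign pattern are locally constant at $\by$ with probability one. Combining with Step~2, the divergence equals $\mathrm{tr}(\bH)$ almost surely, yielding
\begin{equation*}
\widehat{\mathrm{df}} = \mathrm{tr}\!\left[\bX_{\mathcal{A}}(\bX_{\mathcal{A}}^T\bX_{\mathcal{A}} + \theta\,\bW_{\mathcal{A}}^T\bW_{\mathcal{A}})^{-1}\bX_{\mathcal{A}}^T\right]
\end{equation*}
as an unbiased estimator of $\mathrm{df}(\hat{\by})$.

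\textbf{Main obstacle.} The reformulation as an augmented lasso and the trace computation are mechanical; the subtle part is justifying the Stein/divergence step. Specifically, one must verify (i) that $\hat\beta(\by)$ is almost-everywhere differentiable in $\by$ with the active set and sign pattern stable in a neighborhood of almost every $\by$, and (ii) that $\hat{\by}(\by)$ is uniformly Lipschitz so that Stein's identity applies. This is the analogue of the ``general position'' argument in \cite{lassodf}, and one needs to check that appending the deterministic block $\sqrt{\theta}\,\bW$ to $\bX$ does not destroy the general-position property that drives their proof; this is where most of the care is needed.
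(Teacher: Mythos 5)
Your proposal is correct and follows exactly the route the paper indicates: the paper itself gives no written proof of Theorem \ref{thm:df}, saying only that it uses ``similar techniques as that of \cite{Zou-thesis}'' (data augmentation to a lasso, active-set KKT representation, and the divergence/Stein argument of \cite{lassodf}), which is precisely what you carry out. One small bonus you could note: since $\bX^T\bX + \theta\bV\bD_{d_1^2-d_j^2}\bV^T$ is strictly positive definite for $\theta>0$ (every eigendirection gets weight at least $\min(\theta,1)d_1^2$), the inverse in \eqref{eqn:df-formula} always exists, which slightly simplifies the general-position concerns you flag relative to the plain lasso.
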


We verify this estimate through a simulation exercise. Consider the model
\begin{equation}\label{eqn:sim-model}
\by^* = \bX \beta + \mathcal{N}(0,1) \sigma.
\end{equation}

Given this model, for each value of $\lambda$, we can evaluate the degrees of freedom of pcLasso via Monte Carlo methods. For $b = 1, \dots, B$, we generate a new response vector $\by^{*b}$ according to \eqref{eqn:sim-model}. We fit pcLasso to this data, generating predictions $\hat{\by}^{*b}$. This gives us the Monte Carlo estimate
\begin{align*}
df(\lambda) &\approx \sum_{i=1}^n \widehat{\text{Cov}}(\hat{\by}_i^*, \by_i^*) / \sigma^2, \\
\widehat{\text{Cov}}(\hat{\by}_i^*, \by_i^*) &= \frac{1}{B} \sum_{b=1}^B [\hat{\by}_i^{*b} - a_i] [\by_i^{*b} - \by_i^*],
\end{align*}
where the $a_i$'s can be any fixed known constants (usually taken to be 0). On the other hand, each pcLasso fit gives us an active set $\mathcal{A}^{*b}$ which allows us to compute an estimate for the degrees of freedom $\widehat{df}(\lambda)^{*b}$ based on the formula \eqref{eqn:df-formula}. With $B$ replications, we can estimate $\mathbb{E}[\widehat{df}(\lambda)] \approx \frac{1}{B}\sum_{b=1}^B \widehat{df}(\lambda)^{*b}$.

Figure \ref{fig:df-formula} provides evidence for the correctness of \eqref{eqn:df-formula}. The figures on the left are plots of the true value of the degrees of freedom against the mean of the estimate given by \eqref{eqn:df-formula}, with each point corresponding to a value of $\lambda$. We can see that for both values of $\theta$, there is close agreement between the true value $df(\lambda)$ and the expectation of the estimate $\mathbb{E}[\widehat{df}(\lambda)]$. The figures of the right are plots of the bias $\mathbb{E}[\widehat{df}(\lambda)] - df(\lambda)$ along with pointwise 95\% confidence intervals. The zero horizontal line lies inside these confidence intervals.

\begin{figure}
\centering
\includegraphics[width=2.5in]{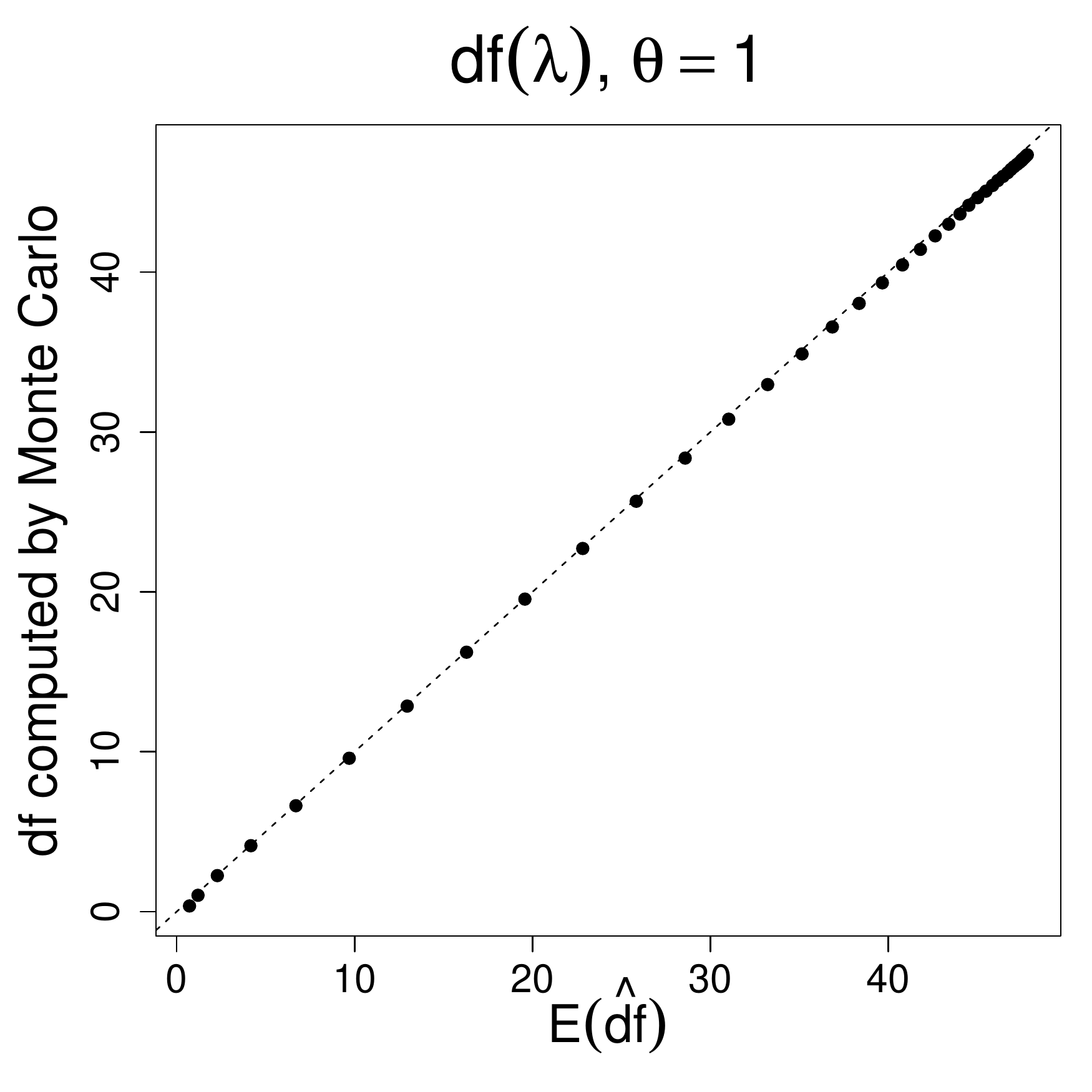} \includegraphics[width=2.5in]{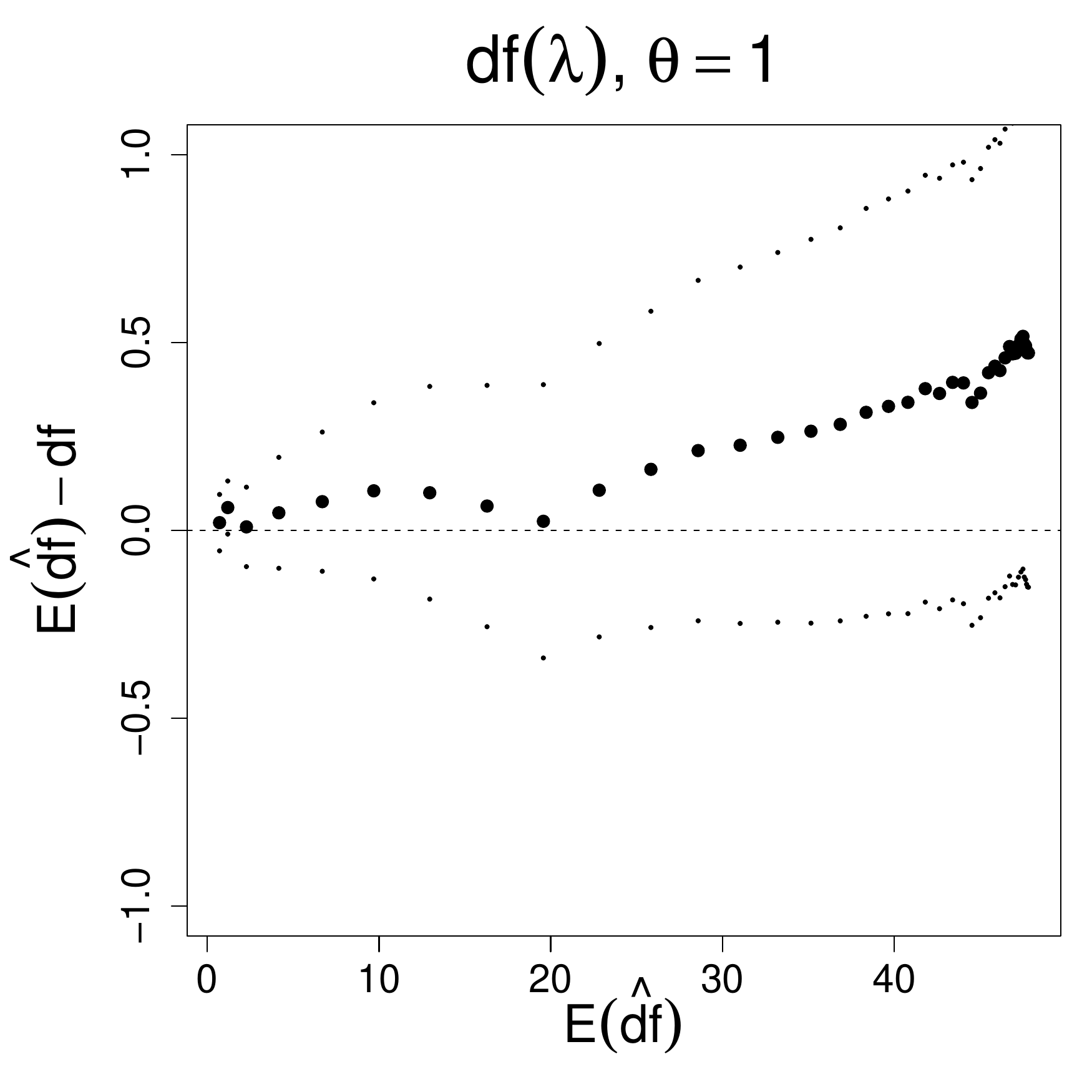}

\includegraphics[width=2.5in]{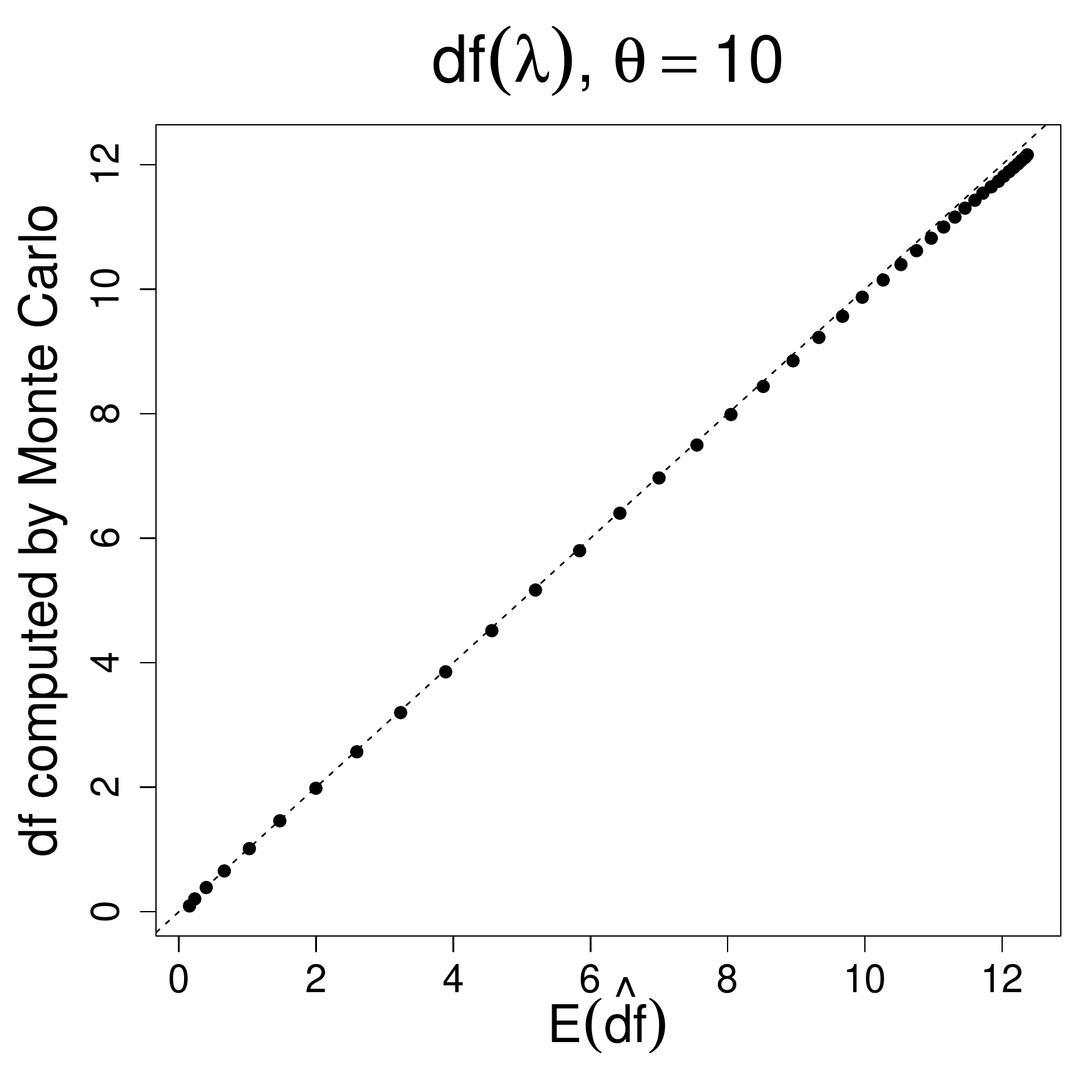} \includegraphics[width=2.5in]{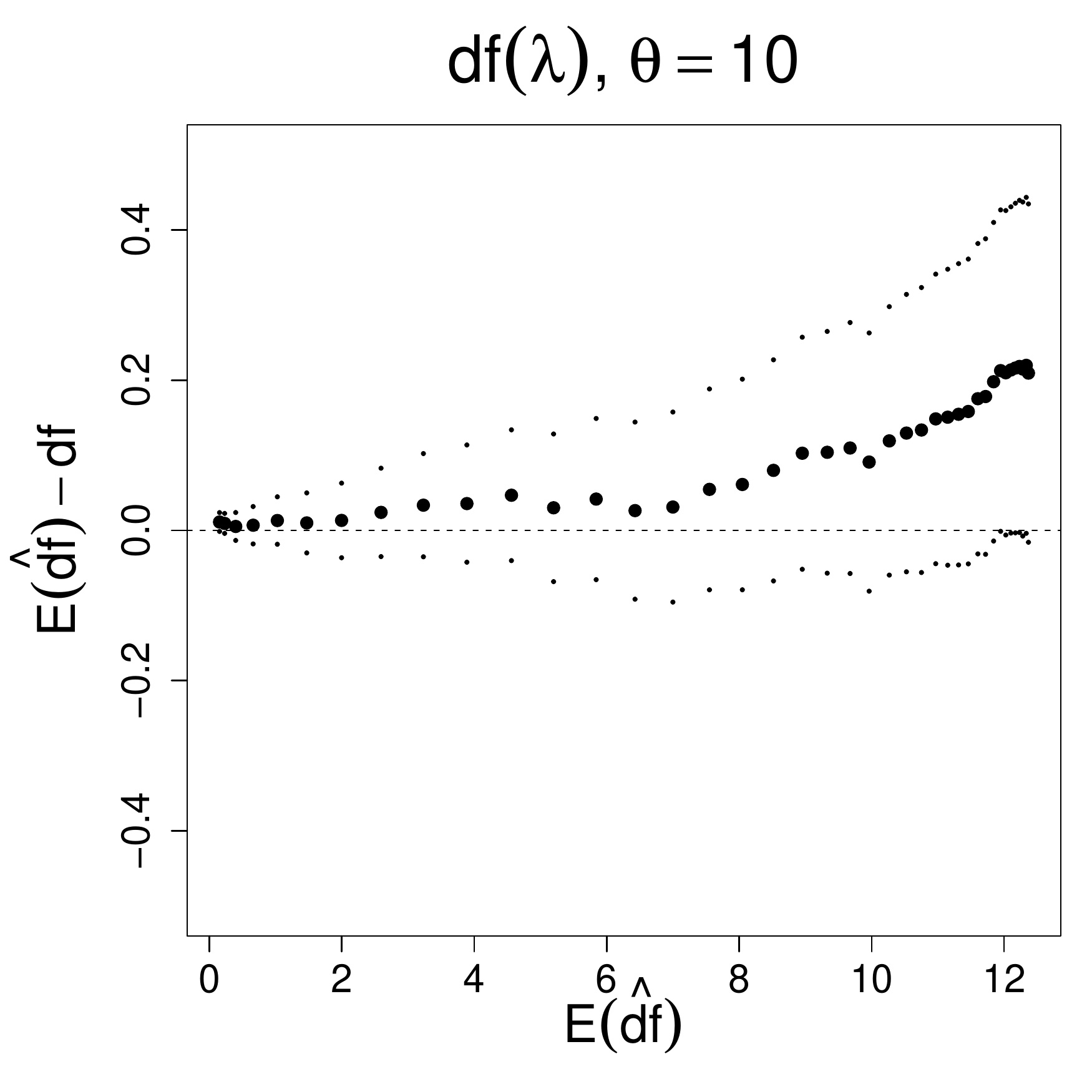}
\caption[fig:df-formula]{\em. Degrees of freedom for pcLasso. The synthetic model is $\by^* = \bX \beta + \mathcal{N}(0,1) \sigma$ with $\beta = 0$, i.e. the null model, and $\sigma = 2$. We set $n = 500$, $p = 100$, and did $500$ Monte Carlo replications. The top two plots are for $\theta = 1$ while the bottom two plots are for $\theta = 10$. In the panels on the left, we compare the true $df(\lambda)$ against $\mathbb{E}[\widehat{df}(\lambda)]$, with the dotted line being the $45^\circ$ line (i.e. perfect match). In the panels on the right, the larger dots represent the bias $\mathbb{E}[\widehat{df}(\lambda)] - df(\lambda)$ across the range of $\mathbb{E}[\widehat{df}(\lambda)]$, and the smaller dots represent the pointwise 95\% confidence intervals. Note that the zero horizontal line lies inside these confidence intervals.}
\label{fig:df-formula}
\end{figure}

 \section{A simulation study}\label{sec:sim}
We tested the performance of pcLasso against other methods in an extensive simulation study. The general framework of the simulation is as follows: The training data has a sample size of $n$ with $p$ features which fall into $K$ groups. Denote the design matrix for group $k$ by $\bX_k$, and let the SVD of $\bX_k$ be $\bX_k = \bU_k \bD_k \bV_k^T$. The true signal is a linear combination of the eigenvectors of the $\bX_k$'s, i.e. $signal = \sum_k \bX_k \bV_k {\bf b}_k$, where the ${\bf b}_k$ are the coefficients of the linear combination. The response $\by$ is the signal corrupted by additive Gaussian noise. We consider three different scenarios:

\begin{itemize}
\item ``Home court" for pcLasso: The signal is related to the top eigenvectors of each group, i.e. the non-zero entries of the ${\bf b}_k$ are all at the beginning.
\item ``Neutral court" for pcLasso: The signal is related to random eigenvectors of each group, i.e. the non-zero entries of the ${\bf b}_k$ are at random positions.
\item ``Hostile court" for pcLasso: The signal is related to the bottom eigenvectors of each group, i.e. the non-zero entries of the ${\bf b}_k$ are all at the end.
\end{itemize}

To induce sparsity in the set-up, we also set ${\bf b}_k = {\bf 0}$ for some $k$, corresponding to group $k$ having no effect on the response. Within each set-up, we looked at a range of signal-to-noise (SNR) ratios in the response, as well as whether there were correlations between predictors in the same group.

For pcLasso, we used the following cross-validation (CV) procedure to select the tuning parameters: For each value of $rat = 0.25, 0.5, 0.75, 0.9, 0.95$ and $1$, we ran pcLasso along a path of $\lambda$ values with \texttt{rat = rat}. (We found that these values of $rat$ covered a good range of models in practice.) For each run, the value of $\lambda$ which gave the smallest CV error was selected, i.e. the \texttt{lambda.min} value as in \texttt{glmnet}. We then compared the CV error across the 6 values of $rat$ and selected the value of $rat$ with the smallest CV error and its accompanying $\lambda$ value. A second version of pcLasso was also run using the same procedure, but with the $\lambda$ values selected by the one standard error rule, i.e. the \texttt{lambda.1se} value as in \texttt{glmnet}.

To compare the methods, we considered the mean-squared error (MSE) achieved on 5,000 test points, as well as the support size of the fitted model. We benchmarked pcLasso against the following methods:

\begin{itemize}
\item The null model, i.e. the mean of the responses in the training data.
\item Elastic net with CV across $\alpha = 0, 0.2, 0.4, 0.6, 0.8$ and $1$, with $\lambda$ values selected both by \texttt{lambda.min} and \texttt{lambda.1se}.
\item Lasso with CV, with $\lambda$ values selected both by \texttt{lambda.min} and \texttt{lambda.1se}.
\item Principal components (PC) regression with CV across ranks.
\end{itemize}

Overall, we found that in ``home court" settings, pcLasso performs the best in terms of test MSE across the range of SNR and feature correlation settings. The gain in performance appears to be larger in low SNR settings compared to high SNR settings. In ``neutral court" and ``hostile court" settings, pcLasso performs on par with the elastic net and the lasso. This is because the cross-validation step often picks $rat = 1$ in these settings, under which pcLasso is equivalent to the lasso. In terms of support size, when there is sparsity pcLasso with $\lambda$ values selected by the one standard error rule tends to pick models which have support size close to the underlying truth.

Below, we present an  illustrative sampling of the results: see Appendix \ref{sec:simstudy-detail} for more comprehensive results across a wider range of simulation settings.

In the first setting, $n = 200$, $p = 1,000$ with the features coming in 10 groups of 100 predictors. The response is a linear combination of the top two eigenvectors of just the first group. The performance on test MSE is shown in Figure \ref{fig:sim-study1}. pcLasso clearly outperforms the other methods when the predictors within each group are uncorrelated. The performance gain is smaller when predictors within each group are correlated with each other. In terms of support size, while pcLasso with $\lambda$ values selected both by \texttt{lambda.min} seems to select models which are too large, pcLasso with $\lambda$ values selected both by \texttt{lambda.1se} has support size closer to the underlying truth.

\begin{figure}
\centering
\includegraphics[width=2.5in]{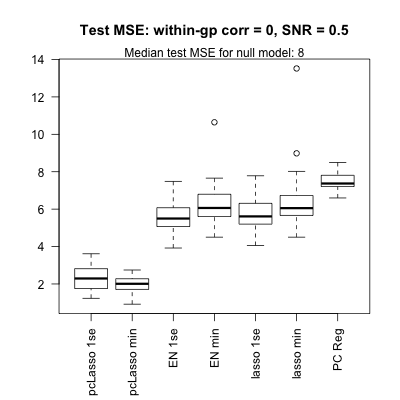} \includegraphics[width=2.5in]{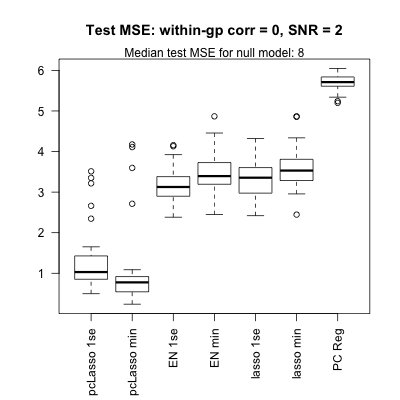}

\includegraphics[width=2.5in]{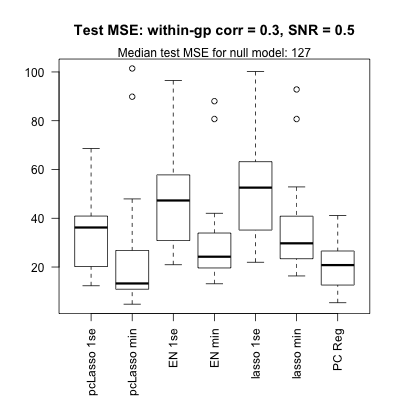} \includegraphics[width=2.5in]{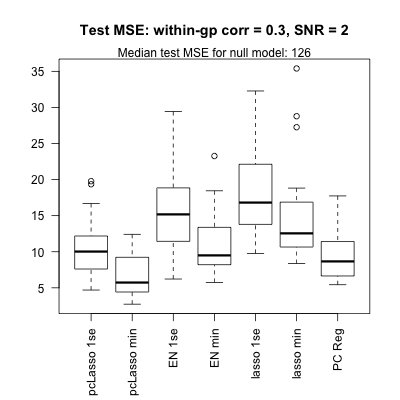}

\caption[fig:sim-study1]{\em ``Home court'' simulation: $n=200$, $p = 1,000$ with features coming in 10 groups of 100 predictors. The response is a linear combination of the top two eigenvectors of just the first group. For the top two figures, the predictors are uncorrelated while for the bottom two figures, each pair of predictors within each group has correlation $0.3$. pcLasso outperforms the other methods across a range of signal-to-noise (SNR) ratios, with the performance gain being larger in the case of uncorrelated predictors.}
\label{fig:sim-study1}
\end{figure}

In the second setting, $n = 200$, $p = 200$ with the features coming in 10 groups of 20 predictors. The response is a linear combination of two random eigenvectors of just the first group. The performance on test MSE is shown in Figure \ref{fig:sim-study2}. pcLasso performs comparably to both the elastic net and the lasso, both when the predictors are uncorrelated and when they are correlated. In terms of support size, pcLasso with $\lambda$ values selected both by \texttt{lambda.1se} has support size close to the underlying truth, especially when SNR is high.

 \begin{figure}
\centering
\includegraphics[width=2.5in]{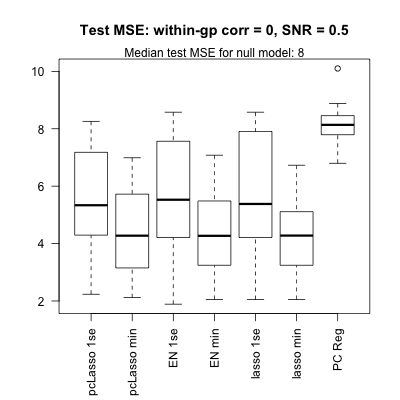} \includegraphics[width=2.5in]{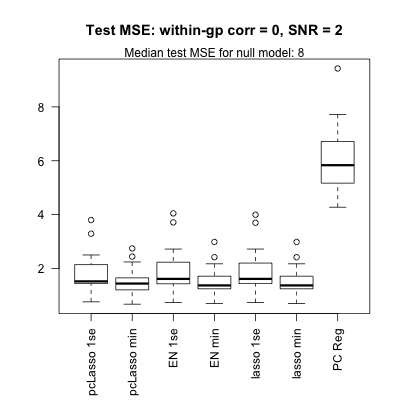}

\includegraphics[width=2.5in]{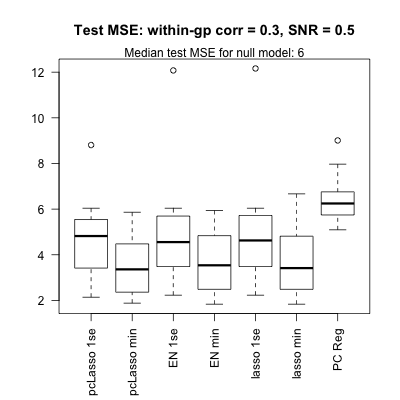} \includegraphics[width=2.5in]{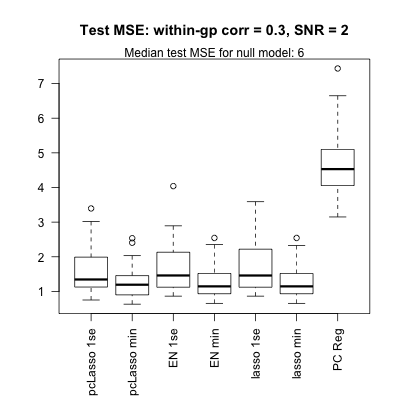}

\caption[fig:sim-study2]{\em ``Neutral court" simulation: $n=200$, $p = 200$ with features coming in 10 groups of 20 predictors. The response is a linear combination of two random eigenvectors of just the first group. For the top two figures, the predictors are uncorrelated while for the bottom two figures, each pair of predictors within each group has correlation $0.3$.  pcLasso performs comparably to the elastic net and the lasso.}
\label{fig:sim-study2}
\end{figure}

In the third setting, $n = 200$, $p = 50$ with the features coming in 5 groups of 10 predictors. The response is a linear combination of the bottom eigenvectors of the first two groups. The performance on test MSE is shown in Figure \ref{fig:sim-study3}. pcLasso performs comparably to both the elastic net and the lasso, both when the predictors are uncorrelated and when they are correlated. In terms of support size, pcLasso with $\lambda$ values selected both by \texttt{lambda.1se} has support size close to the underlying truth when the signal-to-noise ratio is high.

 \begin{figure}
\centering
\includegraphics[width=2.5in]{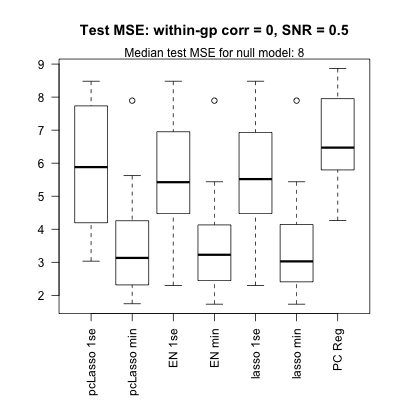} \includegraphics[width=2.5in]{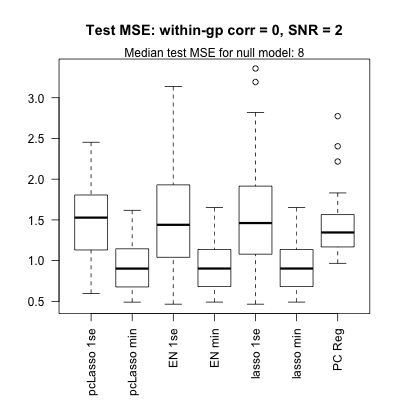}

\includegraphics[width=2.5in]{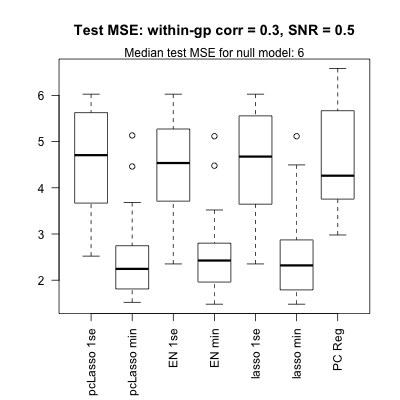} \includegraphics[width=2.5in]{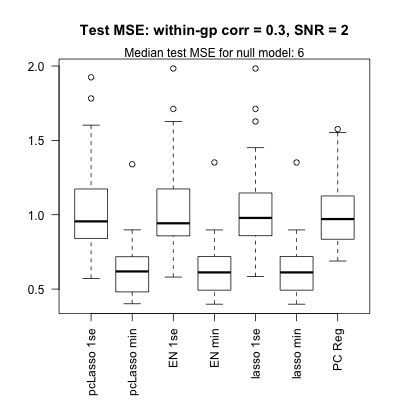}

\caption[fig:sim-study3]{\em ``Hostile court'' simulation: $n = 200$, $p = 50$ with features coming in 5 groups of 10 uncorrelated predictors. The response is a linear combination of the bottom eigenvectors of the first two groups. For the top two figures, the predictors are uncorrelated while for the bottom two figures, each pair of predictors within each group has correlation $0.3$. pcLasso performs comparably to the elastic net and the lasso.}
\label{fig:sim-study3}
\end{figure}







\section{Theoretical properties of pcLasso}\label{sec:theory}
In this section, we present some theoretical properties of pcLasso when the number of groups $K$ is equal to $1$, and compare them with that for the lasso as presented in Chapter 11 of \cite{hastie2015statistical}. In particular, we provide non-asymptotic bounds for $\ell_2$ and prediction error for pcLasso which are better than that for the lasso in certain settings. We note that the results in the section can be extended analogously to pcLasso with $K$ non-overlapping groups if the $\bX_k$ are orthogonal to each other.

To make the proofs simpler, we consider a slightly different penalty for pcLasso: instead of the second penalty term having $\bD_{d_1^2 - d_j^2}$ as an $m \times m$ matrix, where $m = \text{rank}(X)$, we have $\bD_{d_1^2 - d_j^2}$ as a $p \times p$ diagonal matrix with $d_{m+1} = \ldots = d_p = 0$. Let $\bA = \bV \bD_{d_1^2 - d_j^2} \bV^T$.

As the proofs are rather technical, we state just the results here; proofs can be found in Appendix \ref{sec:proofs}.

\subsection{Set-up}
We assume that the true underlying model is
\begin{equation}
\by = \bX \beta^* + \bw,
\end{equation}
where $\bX \in \bbR^{n \times p}$ is the design matrix, $\by, \bw \in \bbR^n$ are the response and noise vectors respectively, and $\beta^* \in \bbR^p$ is the true unknown coefficient vector. We denote the support of $\beta^*$ by $S$. When we solve an optimization problem for $\beta$, we denote the estimate by $\widehat{\beta}$ and the error by $\widehat{\nu} = \widehat{\beta} - \beta^*$.

In this section, assume that $\theta > 0$ is fixed. (If $\theta = 0$, pcLasso is equivalent to the lasso.) Define
\begin{equation}
\widetilde{\by} = \begin{pmatrix} \by \\ 0 \end{pmatrix}, \quad \widetilde{\bX} = \begin{pmatrix} \bX \\ \sqrt{n\theta}\sqrt{\bA} \end{pmatrix}, \quad \widetilde{\bw} = \begin{pmatrix} \bw \\ -\sqrt{n\theta} \sqrt{\bA} \beta^* \end{pmatrix}.
\end{equation}
Thus, $\widetilde{\by} = \widetilde{\bX}\beta^* + \widetilde{\bw}$. The key idea is that with this notation, pcLasso is equivalent to the lasso for the augmented matrices $\widetilde{\bX}$ and $\widetilde{\by}$. Explicitly, the constrained form of pcLasso solves
\begin{equation}\label{eqn:cpcLasso}
\minimize_\beta \quad \ltwo{\widetilde{\by} - \widetilde{\bX} \beta }^2 \qquad \subjectto \lone{\beta} \leq R,
\end{equation}
while the Lagrangian form \eqref{eqn:pcLasso1} solves
\begin{equation}
\minimize_{\beta} \quad \frac{1}{2n}\ltwo{\widetilde{\by} - \widetilde{\bX} \beta }^2 + \lambda \lone{\beta}. \label{eqn:lpcLasso}
\end{equation}

(We have changed the fraction in front of the RSS term to $\frac{1}{2n}$ so that the results are more directly comparable to those in \cite{hastie2015statistical}.) In the high-dimensional setting, it is standard to impose a \textit{restricted eigenvalue condition} on the design matrix $\bX$:
\begin{equation}\label{eqn:rec-lasso}
\frac{1}{n}\nu^T \bX^T \bX \nu \geq \gamma \ltwo{\nu}^2 \qquad \text{for all nonzero } \nu \in \mathcal{C},
\end{equation}
with $\gamma > 0$ and $\mathcal{C}$ an appropriately chosen constraint set. We assume additionally that $d_1^2 > n\gamma$. This is not a restrictive assumption: since $d_1^2$ is the top eigenvalue of $\bX^T \bX$, we automatically have $d_1^2 \geq n\gamma$. Equality can only happen if $\mathcal{C}$ is a subset of the eigenspace associated with $d_1^2$.

Since $\bA$ is a positive semidefinite matrix, \eqref{eqn:rec-lasso} holds trivially for the augmented matrix $\widetilde{\bX}$ as well:
\begin{equation*}
\nu^T \widetilde{\bX}^T \widetilde{\bX} \nu = \nu^T (\bX^T \bX + n\theta \bA)\nu \geq n\gamma \ltwo{\nu}^2.
\end{equation*}
The following key lemma shows that we can improve the constant on the RHS of \eqref{eqn:rec-lasso}. This will give us better rates of convergence for pcLasso.
\begin{lemma}\label{lem:betterbound2}
If $\bX$ satisfies the restricted eigenvalue condition \eqref{eqn:rec-lasso}, then
\begin{equation}
\nu^T \widetilde{\bX}^T \widetilde{\bX} \nu \geq \min \left[ (1 - n\theta)n\gamma + n\theta d_1^2, d_1^2 \right] \ltwo{\nu}^2 \qquad \text{for all nonzero } \nu \in \mathcal{C}.
\end{equation}
\end{lemma}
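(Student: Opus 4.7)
The plan is to reduce the whole statement to a single clean identity for $\widetilde{\bX}^T \widetilde{\bX}$ and then do a short case split on the sign of $1 - n\theta$. First I would observe that, because $\bV$ is a full $p \times p$ orthogonal matrix and $\bD_{d_1^2 - d_j^2}$ is taken as a $p \times p$ diagonal matrix (with $d_{m+1} = \cdots = d_p = 0$), one has
\[ \bA \;=\; \bV\,\mathrm{diag}(d_1^2 - d_j^2)\,\bV^T \;=\; d_1^2\, \bV\bV^T \;-\; \bV\,\mathrm{diag}(d_j^2)\,\bV^T \;=\; d_1^2\,\bI - \bX^T\bX. \]
Substituting into the definition of the augmented design gives the key identity
\[ \widetilde{\bX}^T \widetilde{\bX} \;=\; \bX^T\bX + n\theta\,\bA \;=\; (1 - n\theta)\,\bX^T\bX + n\theta\, d_1^2\, \bI. \]

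Next, for any nonzero $\nu \in \mathcal{C}$, I would expand
\[ \nu^T \widetilde{\bX}^T \widetilde{\bX}\, \nu \;=\; (1-n\theta)\,\nu^T \bX^T\bX\, \nu \;+\; n\theta\, d_1^2\, \ltwo{\nu}^2 \]
and handle two regimes. When $1 - n\theta \geq 0$, the coefficient in front of the quadratic form is nonnegative, so I apply the restricted eigenvalue condition \eqref{eqn:rec-lasso} directly and obtain the lower bound $\bigl[(1-n\theta)\,n\gamma + n\theta\, d_1^2\bigr]\,\ltwo{\nu}^2$. When $1 - n\theta < 0$, that coefficient is negative, so I would instead use the trivial spectral upper bound $\nu^T \bX^T \bX\, \nu \leq d_1^2\, \ltwo{\nu}^2$ (valid because $d_1^2$ is the top eigenvalue of $\bX^T\bX$); multiplying by a negative number flips the inequality and gives the lower bound $d_1^2\,\ltwo{\nu}^2$. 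Combining the two regimes produces precisely the $\min[\cdot,\cdot]$ appearing in the statement.

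I do not anticipate a real obstacle here: the hard part is really just spotting the identity $\bA = d_1^2\,\bI - \bX^T\bX$, after which the rest is a routine case analysis. The only piece of care is confirming that the $\min$ selects the correct bound in each regime, which reduces to the automatic inequality $n\gamma \leq d_1^2$ (since the top eigenvalue of $n^{-1}\bX^T\bX$ dominates any restricted eigenvalue constant). That comparison shows $(1-n\theta)\,n\gamma + n\theta\, d_1^2$ is the smaller of the two when $n\theta \leq 1$ and $d_1^2$ is the smaller when $n\theta > 1$, so the unified $\min$ expression is exactly what the two cases deliver.
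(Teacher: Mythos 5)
Your proof is correct, and it reaches the same two-case bound as the paper but by a cleaner route. The paper's proof expands $\nu$ in the eigenbasis of $\bX^T\bX$, writes the quadratic form coordinate-wise, introduces a free parameter $\alpha \in [0,1]$ to split $\sum_j d_j^2 a_j^2$ between the restricted eigenvalue bound and the remaining terms, and then chooses $\alpha = 1 - n\theta$ (when $n\theta \leq 1$) or $\alpha = 0$ (when $n\theta \geq 1$) to extract the two branches of the $\min$. Your observation that $\bA = d_1^2 \bI - \bX^T\bX$ --- which is legitimate in the theory section's setup, where $\bD_{d_1^2 - d_j^2}$ is $p \times p$ and $\bV$ is a full orthogonal matrix, exactly as the paper's own coordinate expansion $\nu = \sum_{j=1}^p a_j v_j$ presupposes --- collapses all of this into the identity $\widetilde{\bX}^T\widetilde{\bX} = (1-n\theta)\bX^T\bX + n\theta d_1^2 \bI$, after which the case split on the sign of $1 - n\theta$ (RE lower bound when the coefficient is nonnegative, spectral upper bound $\nu^T\bX^T\bX\nu \leq d_1^2\ltwo{\nu}^2$ flipped when it is negative) is immediate. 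The two arguments are mathematically equivalent --- your identity is precisely the paper's $\alpha = 1-n\theta$ decomposition in matrix form --- but yours dispenses with the $\alpha$-optimization and the three-way $\min$ in the paper's intermediate inequality, and your closing check that $n\gamma \leq d_1^2$ determines which branch of the $\min$ is active matches the paper's standing assumption $d_1^2 > n\gamma$. No gaps.
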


We note also that the augmented matrix actually satisfies an \textit{unrestricted eigenvalue condition}:
\begin{lemma}\label{lem:uec}
For any design matrix $\bX$, the augmented data matrix $\widetilde{\bX}$ satisfies
\begin{equation}
\nu^T \widetilde{\bX}^T \widetilde{\bX} \nu \geq \min \left( n\theta, 1 \right) d_1^2 \ltwo{\nu}^2 \qquad \text{for all } \nu \in \mathbb{R}^p.
\end{equation}
\end{lemma}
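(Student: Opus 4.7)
The plan is to diagonalize $\widetilde{\bX}^T \widetilde{\bX}$ in the basis of right singular vectors of $\bX$ and then bound each eigenvalue below by $\min(n\theta,1)d_1^2$. The key observation is that $\widetilde{\bX}^T\widetilde{\bX} = \bX^T\bX + n\theta\bA$, and both summands are simultaneously diagonalized by $\bV$: indeed $\bX^T\bX = \bV\bD^2\bV^T$ with $\bD^2 = \mathrm{diag}(d_1^2,\dots,d_p^2)$ (using $d_{m+1}=\dots=d_p=0$), while by definition $\bA = \bV\bD_{d_1^2-d_j^2}\bV^T$. So the problem reduces to a purely one-dimensional (eigenvalue-by-eigenvalue) bound.

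Concretely, I would write
\begin{equation*}
\widetilde{\bX}^T\widetilde{\bX} \;=\; \bV\bigl[\bD^2 + n\theta\,\bD_{d_1^2-d_j^2}\bigr]\bV^T,
\end{equation*}
whose $j$-th eigenvalue is $\mu_j := d_j^2 + n\theta(d_1^2 - d_j^2) = (1-n\theta)d_j^2 + n\theta\,d_1^2$. The lemma then follows from the pointwise bound $\mu_j \geq \min(n\theta,1)\,d_1^2$ for every $j$, together with the fact that for any $\nu\in\mathbb{R}^p$ one has $\nu^T\widetilde{\bX}^T\widetilde{\bX}\nu \geq \lambda_{\min}\bigl(\widetilde{\bX}^T\widetilde{\bX}\bigr)\ltwo{\nu}^2$.

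The pointwise bound splits into two easy cases. If $n\theta \le 1$, then $(1-n\theta)d_j^2 \ge 0$, so $\mu_j \ge n\theta\,d_1^2$. If $n\theta \ge 1$, rewrite $\mu_j = d_1^2 + (1-n\theta)(d_j^2 - d_1^2)$; since $1-n\theta\le 0$ and $d_j^2 - d_1^2 \le 0$, the correction term is nonnegative and $\mu_j \ge d_1^2$. Combining, $\mu_j \ge \min(n\theta,1)\,d_1^2$ in both cases, which gives the result.

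There is really no main obstacle: once one recognizes the simultaneous diagonalization, the proof collapses to the two-line case analysis above. The only mild subtlety worth flagging is that the bound is \emph{unrestricted} (holds for all $\nu\in\mathbb{R}^p$, including the eigendirections where $\bX^T\bX$ is rank-deficient); this is precisely because the penalty matrix $\bA$ adds mass $n\theta\,d_1^2$ along every direction for which $d_j = 0$, which is exactly the mechanism that produces the $\min(n\theta,1)d_1^2$ lower bound when $d_j^2$ itself is small.
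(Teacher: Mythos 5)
Your proof is correct and follows essentially the same route as the paper: both diagonalize $\widetilde{\bX}^T\widetilde{\bX} = \bX^T\bX + n\theta\bA$ in the basis $\bV$, identify the eigenvalues $n\theta d_1^2 + (1-n\theta)d_j^2$, and bound them below by $\min(n\theta,1)d_1^2$ via the same two-case analysis on whether $n\theta \leq 1$ or $n\theta \geq 1$. The only cosmetic difference is that the paper writes the argument coordinate-wise as $\nu = \sum_j a_j v_j$ rather than invoking $\lambda_{\min}$ of the diagonalized matrix.
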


This will give us a better ``slow rate" for convergence of prediction error, as well as bounds on $\ell_2$ error without requiring a restricted eigenvalue condition.

\subsection{Bounds on $\ell_2$ error}
 The following theorem establishes a bound for $\ell_2$ error of the constrained form of pcLasso:
 \begin{theorem}\label{thm:l2bound-c}
Suppose that $\bX$ satisfies the restricted eigenvalue bound \eqref{eqn:rec-lasso} with parameter $\gamma > 0$ over the set $\{ \nu \in \bbR^p: \lone{\nu_{S^c}} \leq \lone{\nu_{S}} \}$. Then any estimate $\widehat{\beta}$ based on the constrained pcLasso \eqref{eqn:cpcLasso} with $R = \lone{\beta^*}$ satisfies the bound
\begin{equation}\label{eqn:l2bound}
\ltwo{\widehat{\beta} - \beta^*} \leq \frac{4 \sqrt{|S|} \linf{\bX^T \bw - n\theta \bA \beta^*} }{ \min \left[ (1 - n\theta)n\gamma + n\theta d_1^2, d_1^2 \right] } \leq \frac{4 \sqrt{|S|} \left(n\theta \linf{\bA \beta^*} + \linf{\bX^T \bw}\right) }{ \min \left[ (1 - n\theta)n\gamma + n\theta d_1^2, d_1^2 \right] }.
\end{equation}
In particular, if $\widehat{\beta}$ is aligned with the first principal component of $\bX$, then
\begin{equation}\label{eqn:l2bound-1pc}
\ltwo{\widehat{\beta} - \beta^*} \leq \frac{4 \sqrt{|S|} \linf{\bX^T \bw} }{\min \left[ (1 - n\theta)n\gamma + n\theta d_1^2, d_1^2 \right]},
\end{equation}
which is a \textbf{better} rate of convergence than that for the lasso.
\end{theorem}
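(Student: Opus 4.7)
The plan is to follow the standard ``basic inequality'' template for the constrained lasso (as presented in Chapter 11 of \cite{hastie2015statistical}), applied to the augmented regression $\widetilde{\by} = \widetilde{\bX}\beta^* + \widetilde{\bw}$, and then substitute the improved curvature bound from Lemma~\ref{lem:betterbound2} at the end. Let $\widehat{\nu} = \widehat{\beta} - \beta^*$. Since $\beta^*$ is feasible for \eqref{eqn:cpcLasso} (because $R = \lone{\beta^*}$) and $\widehat{\beta}$ minimizes $\ltwo{\widetilde{\by} - \widetilde{\bX}\beta}^2$ over the feasible set, expanding both sides of the inequality $\ltwo{\widetilde{\by} - \widetilde{\bX}\widehat{\beta}}^2 \leq \ltwo{\widetilde{\by} - \widetilde{\bX}\beta^*}^2$ and cancelling yields the basic inequality
\begin{equation*}
\ltwo{\widetilde{\bX} \widehat{\nu}}^2 \;\leq\; 2\, \widehat{\nu}^T \widetilde{\bX}^T \widetilde{\bw}.
\end{equation*}
Hölder's inequality bounds the right-hand side by $2 \lone{\widehat{\nu}} \linf{\widetilde{\bX}^T \widetilde{\bw}}$.

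Next I would verify that $\widehat{\nu}$ lies in the cone $\mathcal{C} = \{\nu : \lone{\nu_{S^c}} \leq \lone{\nu_S}\}$, which is needed to invoke the restricted eigenvalue condition. This is the standard argument: feasibility $\lone{\widehat{\beta}} \leq \lone{\beta^*}$ combined with $\beta^*_{S^c}=0$ gives $\lone{\beta^*_S} \geq \lone{\beta^*_S + \widehat{\nu}_S} + \lone{\widehat{\nu}_{S^c}}$, and applying the reverse triangle inequality to the first term on the right yields $\lone{\widehat{\nu}_{S^c}} \leq \lone{\widehat{\nu}_S}$. Consequently $\lone{\widehat{\nu}} \leq 2 \lone{\widehat{\nu}_S} \leq 2\sqrt{|S|} \ltwo{\widehat{\nu}}$ by Cauchy--Schwarz.

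Now I would apply Lemma~\ref{lem:betterbound2} to lower bound $\ltwo{\widetilde{\bX}\widehat{\nu}}^2 \geq M \ltwo{\widehat{\nu}}^2$, where $M := \min\bigl[(1-n\theta)n\gamma + n\theta d_1^2,\; d_1^2\bigr]$. Chaining this with the previous inequalities gives
\begin{equation*}
M \ltwo{\widehat{\nu}}^2 \;\leq\; 4 \sqrt{|S|}\, \ltwo{\widehat{\nu}}\, \linf{\widetilde{\bX}^T \widetilde{\bw}},
\end{equation*}
and dividing through by $\ltwo{\widehat{\nu}}$ yields $\ltwo{\widehat{\nu}} \leq 4\sqrt{|S|} \linf{\widetilde{\bX}^T \widetilde{\bw}} / M$. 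To convert this to the stated form, compute directly from the definitions that $\widetilde{\bX}^T \widetilde{\bw} = \bX^T \bw - n\theta \bA \beta^*$, which is precisely the first bound in \eqref{eqn:l2bound}; the second bound follows from the triangle inequality on $\linf{\cdot}$. For the final claim \eqref{eqn:l2bound-1pc}, observe that $\bA = \bV \bD_{d_1^2 - d_j^2} \bV^T$ annihilates the leading right singular vector $v_1$, so if $\beta^*$ (equivalently $\widehat{\beta}$, after reinterpreting appropriately) points along $v_1$ then $\bA \beta^* = 0$ and the $n\theta \linf{\bA\beta^*}$ term drops out.

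The main obstacle is not really conceptual, since the argument is a direct transcription of the classical constrained lasso analysis to the augmented system; the delicate bookkeeping is in (i) verifying that the cone containment still holds despite the augmentation (it does, because the cone condition is derived from feasibility on $\lone{\cdot}$, which is unchanged), and (ii) inserting Lemma~\ref{lem:betterbound2} at just the right step so that the improvement over the vanilla $n\gamma$ curvature constant in the standard lasso bound actually propagates through. The comparison to the lasso in \eqref{eqn:l2bound-1pc} then amounts to noting that $M \geq n\gamma$ strictly whenever $d_1^2 > n\gamma$.
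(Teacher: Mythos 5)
Your proposal is correct and follows essentially the same route as the paper: the basic inequality for the augmented system $\widetilde{\by}=\widetilde{\bX}\beta^*+\widetilde{\bw}$, H\"{o}lder plus the cone containment $\lone{\widehat{\nu}_{S^c}}\leq\lone{\widehat{\nu}_S}$ (the paper isolates this as Lemma \ref{lem:cpcLasso_rec}) and Cauchy--Schwarz on the right, and Lemma \ref{lem:betterbound2} on the left. You also correctly flag, as the paper's proof implicitly does, that the alignment hypothesis in \eqref{eqn:l2bound-1pc} is really a condition on $\beta^*$ (so that $\bA\beta^*=0$) rather than on $\widehat{\beta}$.
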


What is interesting is that we can actually obtain a similar (weaker) bound without the restricted eigenvalue condition; the lasso does not have such a bound.

\begin{theorem}\label{thm:l2bound-c2}
For any design matrix $\bX$, any estimate $\widehat{\beta}$ based on the constrained pcLasso \eqref{eqn:cpcLasso} with $R = \lone{\beta^*}$ satisfies the bound
\begin{equation}
\ltwo{\widehat{\beta} - \beta^*} \leq \frac{4 \sqrt{|S|} \linf{\bX^T \bw - n\theta \bA \beta^*} }{ \min \left( n\theta, 1 \right) d_1^2 }.
\end{equation}
\end{theorem}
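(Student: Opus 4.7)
The plan is to mimic the proof of Theorem \ref{thm:l2bound-c}, but to use Lemma \ref{lem:uec} in place of Lemma \ref{lem:betterbound2}. Because Lemma \ref{lem:uec} holds over all of $\bbR^p$ rather than only on a cone, no restricted eigenvalue hypothesis on $\bX$ is needed; on the other hand, the lower bound $\min(n\theta,1)d_1^2$ is weaker than what we obtain from Lemma \ref{lem:betterbound2}, which is the price we pay.

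First I would exploit the \emph{basic inequality} from the lasso-on-augmented-data reformulation. Since $R = \lone{\beta^*}$, the vector $\beta^*$ is feasible for \eqref{eqn:cpcLasso}, so by optimality of $\widehat{\beta}$,
\begin{equation*}
\ltwo{\widetilde{\by} - \widetilde{\bX}\widehat{\beta}}^2 \;\leq\; \ltwo{\widetilde{\by} - \widetilde{\bX}\beta^*}^2 \;=\; \ltwo{\widetilde{\bw}}^2.
\end{equation*}
Writing $\widehat{\nu} = \widehat{\beta} - \beta^*$ and expanding yields the standard inequality $\widehat{\nu}^T \widetilde{\bX}^T \widetilde{\bX} \widehat{\nu} \leq 2 \widetilde{\bw}^T \widetilde{\bX} \widehat{\nu}$.

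Next I would handle the two sides separately. For the right-hand side, a direct computation from the definitions of $\widetilde{\bX}$ and $\widetilde{\bw}$ gives $\widetilde{\bw}^T \widetilde{\bX} = \bw^T \bX - n\theta\, \beta^{*T} \bA$, so Hölder's inequality produces
\begin{equation*}
2\,\widetilde{\bw}^T \widetilde{\bX} \widehat{\nu} \;\leq\; 2\,\linf{\bX^T \bw - n\theta\, \bA \beta^*} \,\lone{\widehat{\nu}}.
\end{equation*}
For the left-hand side I would invoke Lemma \ref{lem:uec} to obtain $\widehat{\nu}^T \widetilde{\bX}^T \widetilde{\bX} \widehat{\nu} \geq \min(n\theta,1)\, d_1^2 \,\ltwo{\widehat{\nu}}^2$. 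The point here, compared with Theorem \ref{thm:l2bound-c}, is that this step does not require $\widehat{\nu}$ to sit in any cone, which is exactly why the restricted eigenvalue hypothesis can be dropped.

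To convert the $\ell_1$ norm $\lone{\widehat{\nu}}$ on the right into an $\ell_2$ norm, I would still need the cone containment $\lone{\widehat{\nu}_{S^c}} \leq \lone{\widehat{\nu}_S}$, which however follows \emph{purely from feasibility}, not from any eigenvalue condition. Indeed, using $\lone{\widehat{\beta}} \leq \lone{\beta^*}$ together with the triangle inequality applied coordinate-wise on $S$ and $S^c$ gives this directly. Then $\lone{\widehat{\nu}} \leq 2\lone{\widehat{\nu}_S} \leq 2\sqrt{|S|}\,\ltwo{\widehat{\nu}}$. Combining this with the two displayed bounds and canceling one factor of $\ltwo{\widehat{\nu}}$ yields exactly
\begin{equation*}
\ltwo{\widehat{\nu}} \;\leq\; \frac{4\sqrt{|S|}\,\linf{\bX^T \bw - n\theta\, \bA \beta^*}}{\min(n\theta, 1)\, d_1^2}.
\end{equation*}
No step here is technically hard; the only conceptual subtlety — and the step I would be most careful to verify — is that the cone containment survives without any restricted eigenvalue assumption, since it comes only from the constraint $\lone{\widehat{\beta}} \leq \lone{\beta^*}$ and not from the geometry of $\bX$. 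That observation is what makes it legitimate to pair the cone inequality on the $\lone{\widehat{\nu}}$ side with the unrestricted lower bound from Lemma \ref{lem:uec} on the $\ltwo{\widehat{\nu}}$ side.
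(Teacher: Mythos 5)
Your proposal is correct and follows essentially the same route as the paper: the paper's proof of Theorem \ref{thm:l2bound-c2} is verbatim the proof of Theorem \ref{thm:l2bound-c} with the lower bound on $\widehat{\nu}^T\widetilde{\bX}^T\widetilde{\bX}\widehat{\nu}$ supplied by Lemma \ref{lem:uec} instead of Lemma \ref{lem:betterbound2}. Your observation that the cone containment $\lone{\widehat{\nu}_{S^c}} \leq \lone{\widehat{\nu}_S}$ comes purely from feasibility (Lemma \ref{lem:cpcLasso_rec}) and therefore survives without any restricted eigenvalue hypothesis is exactly the point that makes the argument go through.
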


We can obtain similar results for the Lagrangian form of pcLasso:
\begin{theorem}\label{thm:l2bound-l}
Suppose that $\bX$ satisfies the restricted eigenvalue bound \eqref{eqn:rec-lasso} with parameter $\gamma > 0$ over the set $\{ \nu \in \bbR^p: \lone{\nu_{S^c}} \leq 3\lone{\nu_{S}} \}$. Then any estimate $\widehat{\beta}$ based on the Lagrangian form of pcLasso \eqref{eqn:lpcLasso} with $\lambda \geq \frac{2}{n} \linf{\bX^T \bw - n\theta \bA \beta^*} > 0$ satisfies the bound
\begin{equation}\label{eqn:l2bound-l}
\ltwo{\widehat{\beta} - \beta^*} \leq \frac{3\lambda \sqrt{|S|}}{ \min \left[ (1 - n\theta)n\gamma + n\theta d_1^2, d_1^2 \right] / n}.
\end{equation}

If we remove the restricted eigenvalue condition on $\bX$, then the bound \eqref{eqn:l2bound-l} holds but with the denominator of the RHS being $\min \left( n\theta, 1 \right) d_1^2 / n$ instead.
\end{theorem}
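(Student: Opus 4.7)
The plan is to reduce everything to a standard Lagrangian-lasso argument on the augmented data $(\widetilde{\bX}, \widetilde{\by})$, exploiting the equivalence \eqref{eqn:lpcLasso}. Two observations make this work: first, $\widetilde{\bX}^T \widetilde{\bw} = \bX^T \bw - n\theta \bA \beta^*$ (since $\sqrt{\bA}^T \sqrt{\bA} = \bA$), so $\linf{\widetilde{\bX}^T\widetilde{\bw}}/n$ is exactly the quantity appearing in the hypothesis on $\lambda$; second, Lemma \ref{lem:betterbound2} replaces the usual restricted-eigenvalue constant by the improved constant $\min[(1-n\theta)n\gamma + n\theta d_1^2, d_1^2]/n$.

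First, I would write down the basic inequality that follows from optimality of $\widehat{\beta}$ for \eqref{eqn:lpcLasso}. Substituting $\widetilde{\by} = \widetilde{\bX}\beta^* + \widetilde{\bw}$ and letting $\widehat{\nu} = \widehat{\beta} - \beta^*$, this rearranges to
\[
\frac{1}{2n}\ltwo{\widetilde{\bX}\widehat{\nu}}^2 \;\leq\; \frac{1}{n}\,\widetilde{\bw}^T \widetilde{\bX} \widehat{\nu} + \lambda\bigl(\lone{\beta^*} - \lone{\widehat{\beta}}\bigr).
\]
I then bound the cross term via H\"older: $\frac{1}{n}|\widetilde{\bw}^T \widetilde{\bX} \widehat{\nu}| \leq \frac{1}{n}\linf{\bX^T \bw - n\theta \bA \beta^*}\lone{\widehat{\nu}} \leq \frac{\lambda}{2}\lone{\widehat{\nu}}$, invoking the hypothesis on $\lambda$. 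Splitting $\ell_1$ norms across $S$ and $S^c$ and using the reverse triangle inequality gives $\lone{\beta^*} - \lone{\widehat{\beta}} \leq \lone{\widehat{\nu}_S} - \lone{\widehat{\nu}_{S^c}}$.

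Combining these yields the two familiar consequences of the Lagrangian-lasso basic inequality: (a) the cone constraint $\lone{\widehat{\nu}_{S^c}} \leq 3\lone{\widehat{\nu}_S}$, so that the restricted-eigenvalue hypothesis on $\bX$ applies to $\widehat{\nu}$, and (b) the estimate $\ltwo{\widetilde{\bX}\widehat{\nu}}^2 \leq 3n\lambda \lone{\widehat{\nu}_S}$. Applying Lemma \ref{lem:betterbound2} to lower-bound the left-hand side by $\min[(1-n\theta)n\gamma + n\theta d_1^2, d_1^2]\ltwo{\widehat{\nu}}^2$ and Cauchy--Schwarz to upper-bound $\lone{\widehat{\nu}_S} \leq \sqrt{|S|}\,\ltwo{\widehat{\nu}}$, then dividing through by $\ltwo{\widehat{\nu}}$, delivers \eqref{eqn:l2bound-l}. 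For the ``no restricted eigenvalue'' clause, I would swap in Lemma \ref{lem:uec}, which lower-bounds $\nu^T \widetilde{\bX}^T \widetilde{\bX} \nu$ by $\min(n\theta,1) d_1^2\ltwo{\nu}^2$ for \emph{every} $\nu \in \bbR^p$; the cone placement of $\widehat{\nu}$ is then irrelevant, and the same manipulation yields the claimed bound with the denominator $\min(n\theta, 1)d_1^2 / n$.

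The argument is essentially bookkeeping; the only delicate step is recognising the noise identity $\widetilde{\bX}^T \widetilde{\bw} = \bX^T \bw - n\theta \bA \beta^*$, which forces the particular form of the threshold on $\lambda$ and explains why the pcLasso penalty ends up ``inside'' the effective noise vector rather than inflating the regularization parameter. Everything downstream is the textbook Lagrangian-lasso analysis applied to the augmented problem, with Lemmas \ref{lem:betterbound2} and \ref{lem:uec} providing the sharpened curvature constants.
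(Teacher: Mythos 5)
Your proposal is correct and follows essentially the same route as the paper: you derive the basic inequality for the augmented problem (the paper's \eqref{eqn:basic-pcLasso2}), apply H\"older's inequality with the noise identity $\widetilde{\bX}^T\widetilde{\bw} = \bX^T\bw - n\theta\bA\beta^*$ (the paper's \eqref{eqn:basic-pcLasso3}), establish the cone condition (the paper's Lemma \ref{lem:lpcLasso_rec}), and then invoke Lemma \ref{lem:betterbound2} (or Lemma \ref{lem:uec} for the unrestricted clause) together with Cauchy--Schwarz. All the steps check out, so there is nothing to add.
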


\subsection{Bounds on prediction error}
Like the lasso, pcLasso has a ``slow rate" convergence for prediction error:
\begin{theorem}\label{thm:predbound-lslow}
For any design matrix $\bX$, consider the Lagrangian form of pcLasso \eqref{eqn:lpcLasso} with $\lambda \geq \frac{2}{n} \linf{\bX^T \bw - n\theta \bA \beta^*}$. We have the following bound on prediction error:
\begin{equation}
\frac{1}{n}\ltwo{\bX(\widehat{\beta} - \beta^*)}^2 \leq 12\lambda \lone{\beta^*}.
\end{equation}
\end{theorem}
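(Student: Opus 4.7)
The plan is to reduce pcLasso in its Lagrangian form \eqref{eqn:lpcLasso} to an ordinary lasso problem on the augmented data $(\widetilde{\bX},\widetilde{\by})$ and then run the standard ``basic inequality'' argument used for the lasso slow rate (e.g.\ Theorem 11.2 of \cite{hastie2015statistical}), being careful that the noise-like cross term now involves $\widetilde{\bX}^T\widetilde{\bw} = \bX^T\bw - n\theta\bA\beta^*$, which is exactly the quantity controlled by the hypothesis on $\lambda$. At the very end I will discard the nonnegative augmented-penalty contribution to return from $\|\widetilde{\bX}\widehat\nu\|_2^2$ to $\|\bX\widehat\nu\|_2^2$.

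First I would write the optimality inequality: since $\widehat\beta$ minimizes the augmented-data lasso objective, comparing its value at $\widehat\beta$ with its value at $\beta^*$ gives
\begin{equation*}
\tfrac{1}{2n}\ltwo{\widetilde{\by}-\widetilde{\bX}\widehat\beta}^2 + \lambda\lone{\widehat\beta}
\;\leq\;
\tfrac{1}{2n}\ltwo{\widetilde{\by}-\widetilde{\bX}\beta^*}^2 + \lambda\lone{\beta^*}.
\end{equation*}
Substituting $\widetilde{\by}-\widetilde{\bX}\beta^* = \widetilde{\bw}$ and expanding with $\widehat\nu=\widehat\beta-\beta^*$, the $\|\widetilde{\bw}\|_2^2$ terms cancel and I obtain
\begin{equation*}
\tfrac{1}{2n}\ltwo{\widetilde{\bX}\widehat\nu}^2
\;\leq\;
\tfrac{1}{n}\widetilde{\bw}^T\widetilde{\bX}\widehat\nu
+ \lambda\bigl(\lone{\beta^*}-\lone{\widehat\beta}\bigr).
\end{equation*}

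Next I compute $\widetilde{\bX}^T\widetilde{\bw} = \bX^T\bw - n\theta\,\bA\beta^*$, which is precisely the vector whose $\ell_\infty$ norm the hypothesis on $\lambda$ controls. By Hölder's inequality and the assumption $\lambda \geq \frac{2}{n}\linf{\bX^T\bw - n\theta\bA\beta^*}$,
\begin{equation*}
\tfrac{1}{n}\widetilde{\bw}^T\widetilde{\bX}\widehat\nu
\;\leq\;
\tfrac{1}{n}\linf{\bX^T\bw - n\theta\bA\beta^*}\,\lone{\widehat\nu}
\;\leq\;
\tfrac{\lambda}{2}\lone{\widehat\nu}.
\end{equation*}
Then use the triangle inequality $\lone{\widehat\nu}\leq \lone{\widehat\beta}+\lone{\beta^*}$, combine with the $\lambda(\lone{\beta^*}-\lone{\widehat\beta})$ term, and absorb $-\tfrac{\lambda}{2}\lone{\widehat\beta}\leq 0$ to reach an inequality of the form
\begin{equation*}
\tfrac{1}{2n}\ltwo{\widetilde{\bX}\widehat\nu}^2 \;\leq\; c\,\lambda\,\lone{\beta^*},
\end{equation*}
for some absolute constant $c$ (my bookkeeping gives $c=\tfrac{3}{2}$, well inside the stated constant of $12$).

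Finally, because $\bA$ is positive semidefinite,
\begin{equation*}
\ltwo{\widetilde{\bX}\widehat\nu}^2
= \ltwo{\bX\widehat\nu}^2 + n\theta\,\widehat\nu^T\bA\widehat\nu
\;\geq\; \ltwo{\bX\widehat\nu}^2,
\end{equation*}
so the previous display immediately implies $\tfrac{1}{n}\ltwo{\bX(\widehat\beta-\beta^*)}^2 \leq 2c\,\lambda\lone{\beta^*}$, which is at most $12\lambda\lone{\beta^*}$ as claimed. I don't anticipate any real obstacle: the only nonroutine part is identifying $\widetilde{\bX}^T\widetilde{\bw}$ with the vector whose $\ell_\infty$ norm appears in the $\lambda$ threshold, which is exactly why the threshold in the theorem is stated with the shifted quantity $\bX^T\bw - n\theta\bA\beta^*$ rather than just $\bX^T\bw$.
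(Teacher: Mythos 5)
Your proposal is correct and follows essentially the same route as the paper's proof: reduce to the lasso on the augmented data, write the basic optimality inequality, identify $\widetilde{\bX}^T\widetilde{\bw}=\bX^T\bw-n\theta\bA\beta^*$ and bound the cross term via H\"{o}lder's inequality and the choice of $\lambda$, then discard the nonnegative $\bA$-contribution. The only difference is bookkeeping: the paper first establishes $\lone{\widehat{\nu}}\leq 4\lone{\beta^*}$ and then re-applies the basic inequality to obtain the constant $12$, whereas your one-pass use of the triangle inequality gives the sharper constant $3$, which of course still implies the stated bound.
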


By using Lemma \ref{lem:uec}, we can get a smaller bound for the RHS, although the expression is much more complicated:
\begin{theorem}\label{thm:predbound-lslow2}
For any design matrix $\bX$, consider the Lagrangian form of pcLasso \eqref{eqn:lpcLasso} with $\lambda \geq \frac{2}{n} \linf{\bX^T \bw - n\theta \bA \beta^*}$. Let $\kappa = \min (n\theta, 1) d_1^2$. We have the following bound on prediction error:
\begin{equation}
\frac{1}{n}\ltwo{\bX(\widehat{\beta} - \beta^*)}^2 \leq \frac{ 3\lambda (-\lambda p + \sqrt{\lambda^2 p^2 + 32\lambda \lone{\beta^*} \kappa p} )}{4\kappa}.
\end{equation}
\end{theorem}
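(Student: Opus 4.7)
The strategy is to mirror the slow-rate argument behind Theorem \ref{thm:predbound-lslow}, but to tighten it by replacing the crude $\ell_1$-norm bookkeeping with a quadratic coupling obtained from Lemma \ref{lem:uec}. First I would derive the basic inequality for the augmented lasso representation of pcLasso. Expanding $J_{\text{Lagrangian}}(\widehat{\beta}) \le J_{\text{Lagrangian}}(\beta^*)$ with the $\tfrac{1}{2n}$ normalization, plugging in $\widetilde{\by} = \widetilde{\bX}\beta^* + \widetilde{\bw}$, and using the identity $\widetilde{\bX}^T\widetilde{\bw} = \bX^T\bw - n\theta\bA\beta^*$, a single application of Hölder's inequality combined with the assumption $\lambda \ge \tfrac{2}{n}\linf{\bX^T\bw - n\theta\bA\beta^*}$ yields
\[
\tfrac{1}{2n}\ltwo{\widetilde{\bX}\widehat{\nu}}^2 \;\le\; \tfrac{\lambda}{2}\lone{\widehat{\nu}} + \lambda\bigl(\lone{\beta^*} - \lone{\widehat{\beta}}\bigr).
\]

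Next, I would use Lemma \ref{lem:uec} to couple $\lone{\widehat{\nu}}$ back to $\ltwo{\widetilde{\bX}\widehat{\nu}}$. Since $\widetilde{\bX}^T\widetilde{\bX} \succeq \kappa\, \bI$ on all of $\bbR^p$, we have $\ltwo{\widehat{\nu}}^2 \le \ltwo{\widetilde{\bX}\widehat{\nu}}^2/\kappa$, and combining with Cauchy-Schwarz gives
\[
\lone{\widehat{\nu}} \;\le\; \sqrt{p}\,\ltwo{\widehat{\nu}} \;\le\; \sqrt{p/\kappa}\;\ltwo{\widetilde{\bX}\widehat{\nu}}.
\]
Using $\lone{\widehat{\beta}} \ge 0$ (so that $\lone{\beta^*} - \lone{\widehat{\beta}} \le \lone{\beta^*}$) and substituting the preceding bound into the basic inequality produces, after multiplication by $2n$, a quadratic inequality of the form $Y^2 - aY - b \le 0$ in $Y := \ltwo{\widetilde{\bX}\widehat{\nu}}$, with coefficients built from $\lambda$, $p$, $\kappa$, $n$ and $\lone{\beta^*}$. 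Solving by the quadratic formula, retaining the positive root, squaring, and then invoking $\ltwo{\bX\widehat{\nu}}^2 \le \ltwo{\widetilde{\bX}\widehat{\nu}}^2$ (because $\bA \succeq 0$) supplies an upper bound on $\tfrac{1}{n}\ltwo{\bX\widehat{\nu}}^2$.

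The main technical obstacle is the algebraic simplification: one must show that the resulting root reduces exactly to $\tfrac{3\lambda(-\lambda p + \sqrt{\lambda^2 p^2 + 32\lambda\lone{\beta^*}\kappa p})}{4\kappa}$, meaning that the various $n$'s appearing in the intermediate steps must combine cleanly with the definition $\kappa = \min(n\theta,1)d_1^2$. The identification is cleanest if one recognizes the target expression as the positive root of the implicit quadratic $\tfrac{2\kappa E^2}{3p\lambda^2} + E = 12\lambda\lone{\beta^*}$, where $E = \tfrac{1}{n}\ltwo{\bX\widehat{\nu}}^2$; the left-hand side is then a refinement of the slow-rate bound that uses $\kappa$ as a correction term.

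A useful sanity check is to verify the $\kappa \to 0$ limit: Taylor-expanding $\sqrt{\lambda^2 p^2 + 32\lambda\lone{\beta^*}\kappa p}$ around $\kappa = 0$ to first order gives $\lambda p + 16\lone{\beta^*}\kappa/\lambda p \cdot \lambda p = \lambda p + 16\lone{\beta^*}\kappa$, so the bound collapses to $12\lambda\lone{\beta^*}$, recovering Theorem \ref{thm:predbound-lslow}. This confirms that Theorem \ref{thm:predbound-lslow2} is always at least as sharp as the plain slow rate, and strictly sharper whenever $\kappa > 0$.
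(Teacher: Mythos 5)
Your overall strategy---the basic inequality for the augmented lasso, Lemma \ref{lem:uec}, Cauchy--Schwarz to trade $\lone{\cdot}$ for $\ltwo{\cdot}$, and then solving a quadratic---is the same family of argument as the paper's, but two execution choices mean your derivation will not land on the stated bound. First, you discard too much when you bound $\lone{\beta^*}-\lone{\widehat{\beta}}\le\lone{\beta^*}$. The paper instead uses the triangle inequality $\lone{\beta^*+\widehat{\nu}}\ge\lone{\widehat{\nu}}-\lone{\beta^*}$, exactly as in the first half of the proof of Theorem \ref{thm:predbound-lslow}, to obtain $\frac{1}{2n}\ltwo{\widetilde{\bX}\widehat{\nu}}^2\le-\frac{\lambda}{2}\lone{\widehat{\nu}}+2\lambda\lone{\beta^*}$; the \emph{negative} coefficient on $\lone{\widehat{\nu}}$ is essential. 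With your $+\frac{\lambda}{2}\lone{\widehat{\nu}}$ on the right, the quadratic $Y^2-aY-b\le 0$ in $Y=\ltwo{\widetilde{\bX}\widehat{\nu}}$ has $a>0$, so its admissible root is at least $a$ even when $\lone{\beta^*}=0$; your final bound therefore cannot vanish as $\lone{\beta^*}\to 0$, whereas the theorem's bound does (your own Taylor expansion shows $-\lambda p+\sqrt{\lambda^2p^2+32\lambda\lone{\beta^*}\kappa p}\to 0$). Second, the $n$'s do not ``combine cleanly'' along your route: tracking them through $Y\le\frac{1}{2}\bigl(n\lambda\sqrt{p/\kappa}+\sqrt{n^2\lambda^2p/\kappa+8n\lambda\lone{\beta^*}}\bigr)$ and then forming $\frac{1}{n}Y^2$ leaves an uncancelled term of order $n\lambda^2p/\kappa$ with no counterpart in the theorem. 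Your $\kappa\to 0$ sanity check validates the \emph{statement}, not your derivation, so it does not catch either problem.

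The repair is to set up the quadratic in $\lone{\widehat{\nu}}$ rather than in $\ltwo{\widetilde{\bX}\widehat{\nu}}$, which is what the paper does. Lemma \ref{lem:uec} together with $\lone{\widehat{\nu}}^2\le p\ltwo{\widehat{\nu}}^2$ gives $\widehat{\nu}^T\widetilde{\bX}^T\widetilde{\bX}\widehat{\nu}\ge\frac{\kappa}{p}\lone{\widehat{\nu}}^2$; combining this with the signed inequality above yields the paper's quadratic $\frac{\kappa}{p}\lone{\widehat{\nu}}^2+\frac{\lambda}{2}\lone{\widehat{\nu}}-2\lambda\lone{\beta^*}\le 0$, whose positive root is exactly $\lone{\widehat{\nu}}\le\frac{-\lambda p+\sqrt{\lambda^2p^2+32\lambda\lone{\beta^*}\kappa p}}{4\kappa}$. (Even here the $\frac{1}{2n}$ normalization in front of the quadratic form must be handled with care, so the bookkeeping you flagged as the ``main technical obstacle'' is genuinely delicate.) The conclusion is then reached not by squaring a bound on $Y$ but by reusing the second half of the proof of Theorem \ref{thm:predbound-lslow}, namely $\frac{1}{2n}\ltwo{\bX\widehat{\nu}}^2\le\frac{3\lambda}{2}\lone{\widehat{\nu}}$, which is where the factor $3\lambda$ in the stated expression comes from.
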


This means that pcLasso has a better ``slow rate" convergence than the lasso.

The preceding two theorems hold for all design matrices $\bX$. If we are willing to assume that $\bX$ satisfies the restricted eigenvalue condition, we recover the same ``fast rate" convergence for prediction error as that for the lasso.
\begin{theorem}\label{thm:predbound-lfast}
Suppose that $\bX$ satisfies the restricted eigenvalue bound \eqref{eqn:rec-lasso} with parameter $\gamma > 0$ over the set $\{ \nu \in \bbR^p: \lone{\nu_{S^c}} \leq 3\lone{\nu_{S}} \}$. For the Lagrangian form of pcLasso with $\lambda \geq \frac{2}{n} \linf{\bX^T \bw - n\theta \bA \beta^*}$, we have
\begin{equation}
\frac{1}{n}\ltwo{\bX(\widehat{\beta} - \beta^*)}^2 \leq \frac{9|S| \lambda^2}{\min \left[ (1 - n\theta)n\gamma + n\theta d_1^2, d_1^2 \right] / n}.
\end{equation}
\end{theorem}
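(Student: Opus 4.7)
The plan is to exploit the augmented-data reformulation of pcLasso and mimic the standard ``fast rate'' proof for the lasso, substituting the improved lower bound of Lemma \ref{lem:betterbound2} in place of the usual RE constant. Throughout I will write $\widehat{\nu} = \widehat{\beta} - \beta^*$ and $\kappa = \min\!\left[(1 - n\theta)n\gamma + n\theta d_1^2,\; d_1^2\right]$, and I will repeatedly use the identity $\widetilde{\bX}^T \widetilde{\bw} = \bX^T \bw - n\theta \bA \beta^*$, which is an immediate consequence of the block definitions of $\widetilde{\bX}$ and $\widetilde{\bw}$.

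First, I would establish the basic inequality. Since $\widehat{\beta}$ minimizes $\frac{1}{2n}\ltwo{\widetilde{\by} - \widetilde{\bX}\beta}^2 + \lambda\lone{\beta}$, comparison with $\beta = \beta^*$ together with $\widetilde{\by} - \widetilde{\bX}\beta^* = \widetilde{\bw}$ yields
\begin{equation*}
\frac{1}{2n}\ltwo{\widetilde{\bX}\widehat{\nu}}^2 \;\leq\; \frac{1}{n}\widehat{\nu}^T \widetilde{\bX}^T \widetilde{\bw} + \lambda\bigl(\lone{\beta^*} - \lone{\widehat{\beta}}\bigr).
\end{equation*}
By Hölder's inequality and the choice $\lambda \geq \tfrac{2}{n}\linf{\widetilde{\bX}^T\widetilde{\bw}}$, the first term is bounded by $\tfrac{\lambda}{2}\lone{\widehat{\nu}}$. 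For the second term, decompose $\widehat{\nu} = \widehat{\nu}_S + \widehat{\nu}_{S^c}$ and apply the triangle inequality in the usual way to obtain $\lone{\beta^*} - \lone{\widehat{\beta}} \leq \lone{\widehat{\nu}_S} - \lone{\widehat{\nu}_{S^c}}$. Combining, I get
\begin{equation*}
\frac{1}{2n}\ltwo{\widetilde{\bX}\widehat{\nu}}^2 \;\leq\; \tfrac{3\lambda}{2}\lone{\widehat{\nu}_S} - \tfrac{\lambda}{2}\lone{\widehat{\nu}_{S^c}}.
\end{equation*}

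Since the left-hand side is nonnegative, rearranging forces $\lone{\widehat{\nu}_{S^c}} \leq 3\lone{\widehat{\nu}_S}$, so $\widehat{\nu}$ lies in the cone on which the restricted eigenvalue condition is assumed. Invoking Lemma \ref{lem:betterbound2} then gives $\ltwo{\widetilde{\bX}\widehat{\nu}}^2 \geq \kappa\, \ltwo{\widehat{\nu}}^2$. Combined with $\lone{\widehat{\nu}_S} \leq \sqrt{|S|}\ltwo{\widehat{\nu}_S} \leq \sqrt{|S|}\ltwo{\widehat{\nu}}$ and dropping the negative $-\tfrac{\lambda}{2}\lone{\widehat{\nu}_{S^c}}$ term, the basic inequality becomes
\begin{equation*}
\kappa\, \ltwo{\widehat{\nu}}^2 \;\leq\; 3n\lambda \sqrt{|S|}\,\ltwo{\widehat{\nu}},
\end{equation*}
which delivers $\ltwo{\widehat{\nu}} \leq 3n\lambda \sqrt{|S|}/\kappa$. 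Feeding this back into the basic inequality gives $\ltwo{\widetilde{\bX}\widehat{\nu}}^2 \leq 9 n^2 \lambda^2 |S|/\kappa$, and since $\ltwo{\bX\widehat{\nu}}^2 \leq \ltwo{\bX\widehat{\nu}}^2 + n\theta\,\widehat{\nu}^T\bA\widehat{\nu} = \ltwo{\widetilde{\bX}\widehat{\nu}}^2$, dividing by $n$ yields the claimed bound $\tfrac{1}{n}\ltwo{\bX\widehat{\nu}}^2 \leq 9|S|\lambda^2/(\kappa/n)$.

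The only real subtlety is checking that the constraint set used in the hypothesis (with slack factor $3$) matches the cone that emerges from the basic inequality; once that is secured, all remaining steps are routine and parallel to the lasso analysis in Chapter 11 of \cite{hastie2015statistical}. The main conceptual payoff is that Lemma \ref{lem:betterbound2} allows us to replace the lasso's curvature constant $n\gamma$ by the larger quantity $\kappa$, which is the sole source of improvement over the standard lasso rate.
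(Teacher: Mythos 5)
Your proof is correct and follows essentially the same route as the paper's: the basic inequality from the augmented lasso formulation, the cone condition $\lone{\widehat{\nu}_{S^c}} \leq 3\lone{\widehat{\nu}_S}$ (which the paper isolates as Lemma \ref{lem:lpcLasso_rec}), H\"{o}lder plus the choice of $\lambda$, Cauchy--Schwarz on $\lone{\widehat{\nu}_S}$, and Lemma \ref{lem:betterbound2} as the curvature bound. The only cosmetic difference is that you first extract the $\ell_2$ bound on $\ltwo{\widehat{\nu}}$ and substitute it back, whereas the paper cancels a factor of $\ltwo{\widetilde{\bX}\widehat{\nu}}$ directly; both yield the identical constant.
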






\section{Discussion}
\label{sec:discussion}

We have proposed a new method for supervised learning that exploits the correlation and group structure of predictors to potentially improve prediction performance. It combines the power of PC regression with the sparsity of the lasso. There are several interesting avenues for further research:
\begin{itemize}
\item {\em Efficient screening rules}: To speed up the computation of the lasso solution, {\tt glmnet} uses strong rules \citep{TBFHSTT2012} to discard predictors which are likely to have a coefficient of zero. These strong rules can greatly reduce the number of variables entering the optimization, thus speeding up computation. Using similar techniques, we can derive strong rules for pcLasso to improve computational efficiency. These rules are provided in Appendix \ref{sec:strongrules}, and merit further investigation.

\item {\em Use of different kinds of correlation or grouping information.} Rather than use the covariance of the observed covariates in a group, one could use other kinds of external information to form this covariance, for example gene or protein pathways. Alternatively, one could use unsupervised clustering to generate the feature groups.

\item {\em Optimality of the quadratic penalty term}: Considering the general class of penalty functions of the form $\beta^T \bV \bZ \bV^T \beta$, one could ask if pcLasso's choice of $\bZ = \bD_{d_1^2 - d_j^2}$ is ``optimal" in any sense. One could also look for $\bZ$ which satisfy certain notions of optimality.
\end{itemize}

An R language package {\tt pcLasso} will soon be made available on the CRAN repository.

\medskip

{\bf Acknowledgements:}  We'd like to thank Trevor Hastie, Jacob Bien and Noah Simon for helpful comments. Robert Tibshirani was supported by
NIH grant 5R01 EB001988-16 and NSF grant 19 DMS1208164.

\appendix
\section{Details of the pcLasso penalty contours for two predictors}\label{sec:contours}
Assume that we only have two predictors which are standardized to have mean $0$ and sum of squares $1$. Then $\bX^T \bX = \begin{pmatrix} 1 & \rho \\ \rho & 1 \end{pmatrix}$, where $\rho$ is the sample correlation between the two predictors. Let $\bA = \bV \bD_{d_1^2 - d_j^2} \bV^T$. If $\rho > 0$, the expressions for the SVD of $\bX$ and $\bA$ are
\begin{equation*}
\bX = \bU \cdot \begin{pmatrix} 1 + \rho & 0 \\ 0 & 1 - \rho \end{pmatrix} \cdot \begin{pmatrix} 1 / \sqrt{2} & 1 / \sqrt{2} \\ 1 / \sqrt{2} & -1 / \sqrt{2} \end{pmatrix}, \qquad \bA = 2\rho \begin{pmatrix} 1 & -1 \\ -1 & 1 \end{pmatrix}.
\end{equation*}

If $\rho < 0$, the corresponding expressions are
\begin{equation*}
\bX = \bU \cdot \begin{pmatrix} 1 - \rho & 0 \\ 0 & 1 + \rho \end{pmatrix} \cdot \begin{pmatrix} 1 / \sqrt{2} & -1 / \sqrt{2} \\ 1 / \sqrt{2} & 1 / \sqrt{2} \end{pmatrix}, \qquad \bA = -2\rho \begin{pmatrix} 1 & 1 \\ 1 & 1 \end{pmatrix}.
\end{equation*}

With these expressions, the pcLasso penalty can be written explicitly as
\begin{equation*} \begin{cases} \lambda(|\beta_1| + |\beta_2|) + 2 \theta \rho (\beta_1 - \beta_2)^2 &\text{if } \rho > 0, \\ \lambda(|\beta_1| + |\beta_2|) - 2 \theta \rho (\beta_1 + \beta_2)^2 &\text{if } \rho < 0.  \end{cases}
\end{equation*}

Considering the signs of $\beta_1$ and $\beta_2$, we can get an even more explicit description of the contours. We provide the description for $\rho > 0$ below:
\begin{itemize}
\item $\beta_1 \geq 0, \beta_2 \geq 0$: The parabola $\beta_2 = -\dfrac{2\sqrt{2}\theta \rho}{\lambda}\beta_1^2 + \dfrac{C}{\sqrt{2}\lambda}$, rotated $45^\circ$ clockwise.
\item $\beta_1 \geq 0, \beta_2 \leq 0$: The line $\beta_2 = \beta_1 + \dfrac{\lambda - \sqrt{\lambda^2 + 8\theta \rho C}}{4 \theta \rho}$.
\item $\beta_1 \leq 0, \beta_2 \leq 0$: The parabola $\beta_2 = \dfrac{2\sqrt{2}\theta \rho}{\lambda}\beta_1^2 - \dfrac{C}{\sqrt{2}\lambda}$, rotated $45^\circ$ clockwise.
\item $\beta_1 \leq 0, \beta_2 \geq 0$: The line $\beta_2 = \beta_1 - \dfrac{\lambda - \sqrt{\lambda^2 + 8\theta \rho C}}{4 \theta \rho}$.
\end{itemize}

\section{Strong rules for pcLasso}\label{sec:strongrules}
Recall that in computing the pcLasso solution, we typically hold $\theta$ fixed and compute the solution for a path of $\lambda$ values. By casting the pcLasso optimization problem as a lasso problem with a different response and design matrix, we can derive strong rules for pcLasso from that for the lasso.

We first derive strong rules for pcLasso where predictors do not have preassigned groups, i.e. the solution to \eqref{eqn:pcLasso1}. If we define
\begin{equation}\label{eqn:strongrulestilde}
\widetilde{\by} = \begin{pmatrix} \by \\ 0 \end{pmatrix}, \qquad \widetilde{\bX} = \begin{pmatrix} \bX \\ \sqrt{\theta} \bA^{1/2} \end{pmatrix},
\end{equation}
where $\bA = \bV \bD_{d_1^2 - d_j^2} \bV^T$, then pcLasso is equivalent to the lasso with response vector $\widetilde{\by}$ and design matrix $\widetilde{\bX}$. Fix $\theta$, fix a path of $\lambda$ values $\lambda_1 \geq \lambda_2 \geq \dots$, and let $\hat{\beta}(\lambda_i)$ denote the pcLasso solution at $\lambda = \lambda_i$. Applying the lasso strong rules to this set-up, the sequential strong rule for pcLasso for discarding predictor $j$ at $\lambda = \lambda_i$ is
\begin{align}
&&|\bX_j^T (\by - \bX\hat{\beta}(\lambda_{i-1})) - \theta (\bA^{1/2})_j \bA^{1/2}\hat{\beta}(\lambda_{i-1})| &< 2\lambda_i - \lambda_{i-1}, \nonumber \\
\Leftrightarrow \qquad&& |\bX_j^T (\by - \bX\hat{\beta}(\lambda_{i-1})) - \theta (\bA \hat{\beta}(\lambda_{i-1}))_j | &< 2\lambda_i - \lambda_{i-1}. \label{eqn:strongrules1group}
\end{align}

Next we derive strong rules for pcLasso where predictors come in preassigned non-overlapping groups, i.e. the solution to \eqref{eqn:pcLasso}. If we define $\bA_k = \bV_k \bD_{d_{k1}^2 - d_{kj}^2} \bV_k^T$ for $k = 1, \dots, K$, $\bA$ as the block diagonal matrix
\[ \bA = \begin{pmatrix} \bA_1 & & 0 \\ & \ddots & \\ 0 & & \bA_K \end{pmatrix}, \]
and $\widetilde{\by}$ and $\widetilde{\bX}$ as in \eqref{eqn:strongrulestilde}, then pcLasso is again equivalent to the lasso with $\widetilde{\by}$ and $\widetilde{\bX}$. This results in the same sequential strong rule as in the single group case \eqref{eqn:strongrules1group}, although the matrix $\bA$ is defined differently.

\section{Proofs for Section \ref{sec:theory}}\label{sec:proofs}
Before presenting proofs for Section \ref{sec:theory}, we first prove two technical lemmas which show that the pcLasso solution lies in the constraint set which we will use for the restricted eigenvalue condition \eqref{eqn:rec-lasso}.

\begin{lemma}\label{lem:cpcLasso_rec}
If we solve the constrained form of pcLasso \eqref{eqn:cpcLasso} with $R = \lone{\beta^*}$, then $\lone{\widehat{\nu}_{S^c}} \leq \lone{\widehat{\nu}_S}$.
\end{lemma}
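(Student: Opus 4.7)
The plan is to use the standard basic inequality argument from lasso theory, adapted to the fact that $\beta^*$ itself is feasible for the constraint $\lone{\beta} \leq R$ when $R = \lone{\beta^*}$.

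First, I would note that since $R = \lone{\beta^*}$, the true coefficient vector $\beta^*$ is feasible for the constrained problem \eqref{eqn:cpcLasso}. Because $\widehat{\beta}$ is an optimizer, we must have $\lone{\widehat{\beta}} \leq R = \lone{\beta^*}$. This is the only property of $\widehat{\beta}$ I will need; the quadratic objective and the augmented data $(\widetilde{\by}, \widetilde{\bX})$ play no role here.

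Next, I would decompose $\widehat{\beta} = \beta^* + \widehat{\nu}$ by coordinates in $S$ and $S^c$. Since $\beta^*_{S^c} = 0$ by definition of $S$, we have $\widehat{\beta}_{S^c} = \widehat{\nu}_{S^c}$, while on $S$ the reverse triangle inequality gives $\lone{\widehat{\beta}_S} = \lone{\beta^*_S + \widehat{\nu}_S} \geq \lone{\beta^*_S} - \lone{\widehat{\nu}_S} = \lone{\beta^*} - \lone{\widehat{\nu}_S}$. Adding the two pieces yields
\begin{equation*}
\lone{\widehat{\beta}} \;\geq\; \lone{\beta^*} - \lone{\widehat{\nu}_S} + \lone{\widehat{\nu}_{S^c}}.
\end{equation*}

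Finally, combining this with $\lone{\widehat{\beta}} \leq \lone{\beta^*}$ and cancelling $\lone{\beta^*}$ from both sides gives exactly $\lone{\widehat{\nu}_{S^c}} \leq \lone{\widehat{\nu}_S}$, as required. There is no real obstacle here: the entire argument is a two-line feasibility-plus-triangle-inequality manipulation, and it is essentially the same cone condition derived for the constrained lasso. The only subtlety worth flagging is that the result follows purely from feasibility of $\beta^*$, so it does not depend on $\theta$, on $\bA$, or on the structure of $\widetilde{\bX}$ at all.
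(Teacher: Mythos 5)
Your proof is correct and is essentially identical to the paper's: both rest on the single fact that $\widehat{\beta}$ is feasible, hence $\lone{\widehat{\beta}} \leq R = \lone{\beta^*}$, combined with the decomposition over $S$ and $S^c$ and the reverse triangle inequality on the $S$-block. The only cosmetic difference is that you route the feasibility observation through $\beta^*$ being feasible and $\widehat{\beta}$ being an optimizer, whereas the constraint $\lone{\widehat{\beta}} \leq R$ follows directly from feasibility of $\widehat{\beta}$ alone; your closing remark that the result is independent of $\theta$, $\bA$, and $\widetilde{\bX}$ is accurate and consistent with the paper's argument.
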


\begin{proof}
Since $\widehat{\beta}$ is feasible for \eqref{eqn:cpcLasso}, by the triangle inequality,
\begin{equation*}
\lone{\beta_S^*} = R \geq \lone{\beta^* + \widehat{\nu}} = \lone{\beta_S^* + \widehat{\nu}_S} + \lone{\widehat{\nu}_{S^c}} \geq \lone{\beta_S^*} - \lone{\widehat{\nu}_S} + \lone{\widehat{\nu}_{S^c}}.
\end{equation*}
\end{proof}

\begin{lemma}\label{lem:lpcLasso_rec}
If we solve the Lagrangian form of pcLasso \eqref{eqn:lpcLasso} with $\lambda \geq \frac{2}{n} \linf{\bX^T \bw - n\theta \bA \beta^*}$, then $\lone{\widehat{\nu}_{S^c}} \leq 3\lone{\widehat{\nu}_S}$.
\end{lemma}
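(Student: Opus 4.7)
The plan is to follow the standard ``basic inequality'' argument for the lasso, exploiting the fact (already established in the setup) that the Lagrangian pcLasso \eqref{eqn:lpcLasso} is literally a lasso on the augmented data $(\widetilde{\by},\widetilde{\bX})$. I will use optimality of $\widehat{\beta}$ against the candidate $\beta^*$, together with the noise-side identity $\widetilde{\by} = \widetilde{\bX}\beta^* + \widetilde{\bw}$, to derive an inequality of the form $\tfrac{1}{2n}\ltwo{\widetilde{\bX}\widehat{\nu}}^2 \leq \tfrac{1}{n}\widetilde{\bw}^T\widetilde{\bX}\widehat{\nu} + \lambda(\lone{\beta^*}-\lone{\widehat{\beta}})$. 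Dropping the nonnegative left-hand side leaves a purely scalar inequality involving $\widehat{\nu}$, from which the cone condition pops out after the usual decomposition $\widehat{\nu}=\widehat{\nu}_S+\widehat{\nu}_{S^c}$.

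The first step I want to make explicit is the computation of the ``effective noise'' inner product:
\[
\widetilde{\bw}^T\widetilde{\bX} \;=\; \bw^T\bX \;+\; \bigl(-\sqrt{n\theta}\sqrt{\bA}\,\beta^*\bigr)^T\bigl(\sqrt{n\theta}\sqrt{\bA}\bigr) \;=\; \bw^T\bX - n\theta\,(\beta^*)^T\bA,
\]
so that Hölder gives $|\widetilde{\bw}^T\widetilde{\bX}\widehat{\nu}| \leq \linf{\bX^T\bw - n\theta \bA\beta^*}\,\lone{\widehat{\nu}}$. The choice $\lambda \geq \tfrac{2}{n}\linf{\bX^T\bw - n\theta\bA\beta^*}$ is exactly what is needed so that this cross term is bounded by $\tfrac{\lambda}{2}\lone{\widehat{\nu}}$.

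Next I would apply the reverse triangle inequality to the penalty slack:
\[
\lone{\beta^*} - \lone{\widehat{\beta}} \;=\; \lone{\beta_S^*} - \lone{\beta_S^*+\widehat{\nu}_S} - \lone{\widehat{\nu}_{S^c}} \;\leq\; \lone{\widehat{\nu}_S} - \lone{\widehat{\nu}_{S^c}},
\]
and split $\lone{\widehat{\nu}} = \lone{\widehat{\nu}_S} + \lone{\widehat{\nu}_{S^c}}$. Combining with the dropped (nonnegative) quadratic term yields
\[
0 \;\leq\; \tfrac{\lambda}{2}\bigl(\lone{\widehat{\nu}_S}+\lone{\widehat{\nu}_{S^c}}\bigr) + \lambda\bigl(\lone{\widehat{\nu}_S}-\lone{\widehat{\nu}_{S^c}}\bigr) \;=\; \lambda\bigl(\tfrac{3}{2}\lone{\widehat{\nu}_S} - \tfrac{1}{2}\lone{\widehat{\nu}_{S^c}}\bigr),
\]
which rearranges to the desired cone inclusion $\lone{\widehat{\nu}_{S^c}} \leq 3\lone{\widehat{\nu}_S}$.

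There is no real obstacle here; the whole proof is a mechanical transcription of the standard lasso cone argument, with the single twist that the ``noise'' $\widetilde{\bw}$ has a deterministic piece $-\sqrt{n\theta}\sqrt{\bA}\beta^*$ coming from the augmentation. The only place to be careful is to verify that the threshold $\tfrac{2}{n}\linf{\bX^T\bw - n\theta\bA\beta^*}$ on $\lambda$ is the right one after the $\sqrt{\bA}$'s combine into $\bA$; everything else is routine.
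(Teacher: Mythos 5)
Your proof is correct and follows essentially the same route as the paper's: the basic inequality from $G(\widehat{\nu}) \leq G(0)$ on the augmented problem, the identification $\widetilde{\bX}^T\widetilde{\bw} = \bX^T\bw - n\theta\bA\beta^*$ feeding into H\"{o}lder, the reverse triangle inequality on the penalty slack, and dropping the nonnegative quadratic term to extract the cone condition. No meaningful differences.
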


\begin{proof}
Define $G(\nu) = \dfrac{1}{2n}\ltwo{\widetilde{\by} - \widetilde{\bX}(\beta^* + \nu)}^2 + \lambda \lone{\beta^* + \nu}$. By definition, $G(\widehat{\nu}) \leq G(0)$, so
\begin{align}
\frac{1}{2n}\ltwo{\widetilde{\bw} - \widetilde{\bX}\widehat{\nu}}^2 + \lambda \lone{\beta^* + \widehat{\nu}} &\leq \frac{1}{2n}\ltwo{\widetilde{\bw}}^2 + \lambda \lone{\beta^*}, \nonumber \\
\frac{1}{2n}\widehat{\nu}^T (\bX^T \bX + n \theta \bA) \widehat{\nu} &\leq \frac{\widetilde{\bw}^T \widetilde{\bX} \widehat{\nu}}{n} + \lambda (\lone{\beta^*} - \lone{\beta^* + \widehat{\nu}}). \label{eqn:basic-pcLasso1}
\end{align}

Note that $\lone{\beta^* + \widehat{\nu}} = \lone{\beta_S^* + \widehat{\nu}_S} + \lone{\widehat{\nu}_{S^c}} \geq \lone{\beta_S^*} - \lone{\widehat{\nu}_S} + \lone{\widehat{\nu}_{S^c}}$; substituting this in the above gives
\begin{equation}\label{eqn:basic-pcLasso2}
\frac{1}{2n}\widehat{\nu}^T (\bX^T \bX + n\theta \bA) \widehat{\nu} \leq \frac{\widetilde{\bw}^T \widetilde{\bX} \widehat{\nu}}{n} + \lambda (\lone{\widehat{\nu}_S} - \lone{\widehat{\nu}_{S^c}}).
\end{equation}

By H\"{o}lder's inequality and our choice of $\lambda$, we have
\begin{equation}\label{eqn:basic-pcLasso3}
\frac{\widetilde{\bw}^T \widetilde{\bX} \widehat{\nu}}{n} \leq \frac{1}{n} \linf{\widetilde{\bX}^T \widetilde{\bw}} \lone{\widehat{\nu}} = \frac{1}{n} \linf{\bX^T \bw - n\theta \bA \beta^*} \lone{\widehat{\nu}} \leq \frac{1}{2}\lambda \lone{\widehat{\nu}}.
\end{equation}
Substituting this inequality into \eqref{eqn:basic-pcLasso2} and noting that the LHS of \eqref{eqn:basic-pcLasso2} is always non-negative, we obtain the desired conclusion.

\end{proof}

\subsection{Proof of Lemma \ref{lem:betterbound2}}\label{sec:proof:betterbound2}
Let $\nu \in \mathcal{C}$ be expressed as $\nu = \sum_{j=1}^p a_j v_j$, where the $v_j$'s are the columns of $\bV$. Direct computation gives
\begin{equation}\label{eqn:coordinates}
\ltwo{\nu}^2 = \sum_{j=1}^p a_j^2, \qquad \nu^T \bX^T \bX \nu = \sum_{j=1}^p d_j^2 a_j^2, \qquad \nu^T (n\theta \bA) \nu = \sum_{j=2}^p n\theta(d_1^2 - d_j^2)a_j^2.
\end{equation}

Thus, for any $\alpha \in [0, 1]$,
\begin{align}
\nu^T (\widetilde{\bX}^T \widetilde{\bX}) \nu &= \nu^T (\bX^T \bX + n\theta \bA) \nu \nonumber \\
&= \sum_{j=1}^p d_j^2 a_j^2 + \sum_{j=2}^p n\theta(d_1^2 - d_j^2)a_j^2 \nonumber \\
&= \alpha \sum_{j=1}^p d_j^2 a_j^2 + (1-\alpha)d_1^2 a_1^2 + \sum_{j=2}^p \left[ n\theta d_1^2 + (1 - \alpha - n\theta) d_j^2 \right] a_j^2 \nonumber \\
&\geq \alpha n \gamma \sum_{j=1}^p a_j^2 + (1-\alpha)d_1^2 a_1^2 + \min \left[ n\theta d_1^2, n\theta d_1^2 + (1-\alpha-n\theta)d_2^2 \right] \sum_{j=2}^p a_j^2, \nonumber \\
\nu^T (\widetilde{\bX}^T \widetilde{\bX}) \nu &\geq \left[ \alpha n\gamma + \min \left( (1-\alpha)d_1^2, n\theta d_1^2, n\theta d_1^2 + (1-\alpha- n\theta)d_2^2 \right) \right]\ltwo{\nu}^2. \label{eqn:betterbound}
\end{align}

If $n\theta \leq 1$, setting $\alpha = 1 - n\theta$ in \eqref{eqn:betterbound} gives
\begin{align*}
\nu^T (\widetilde{\bX}^T \widetilde{\bX}) \nu &\geq \left[ (1- n\theta) n\gamma + \min \left( n\theta d_1^2, n\theta d_1^2, n\theta d_1^2 \right) \right]\ltwo{\nu}^2 \\
&= [(1- n\theta)n\gamma + n\theta d_1^2] \ltwo{\nu}^2.
\end{align*}

If $n\theta \geq 1$, setting $\alpha = 0$ in \eqref{eqn:betterbound} gives
\begin{align*}
\nu^T (\widetilde{\bX}^T \widetilde{\bX}) \nu &\geq \min \left( d_1^2, n\theta (d_1^2 - d_2^2) + d_2^2 \right)\ltwo{\nu}^2 \\
&\geq \min \left( d_1^2, (d_1^2 - d_2^2) + d_2^2 \right)\ltwo{\nu}^2 \\
&= d_1^2 \ltwo{\nu}^2.
\end{align*}

\subsection{Proof of Lemma \ref{lem:uec}}\label{sec:proof:uec}
Let $\nu \in \mathbb{R}^p$ be expressed as $\nu = \sum_{j=1}^p a_j v_j$, where the $v_j$'s are the columns of $\bV$. Using the formulas in \eqref{eqn:coordinates},

\begin{align*}
\nu^T (\widetilde{\bX}^T \widetilde{\bX}) \nu &= \nu^T (\bX^T \bX + n\theta \bA) \nu \\
&= \sum_{j=1}^p d_j^2 a_j^2 + \sum_{j=2}^p n\theta(d_1^2 - d_j^2)a_j^2 \\
&= \sum_{j=1}^p \left[ n\theta d_1^2 + (1-n\theta) d_j^2 \right] a_j^2 \\
&\geq \min_j \left[ n\theta d_1^2 + (1-n\theta) d_j^2 \right] \sum_{j=1}^p a_j^2 \\
&\begin{cases} \geq n\theta d_1^2 \ltwo{\nu}^2 &\text{if } n\theta \leq 1, \\ \geq d_1^2\ltwo{\nu}^2 &\text{if } n\theta \geq 1  \end{cases} \\
&= \min (n\theta, 1) d_1^2 \ltwo{\nu}^2.
\end{align*}

\subsection{Proof of Theorem \ref{thm:l2bound-c}}\label{proof:thm:l2bound-c}
Let $\widehat{\nu} = \widehat{\beta} - \beta^*$. By definition of $\widehat{\beta}$ and using the relation $\widetilde{\by} = \widetilde{\bX}\beta^* + \widetilde{\bw}$,
\begin{align}
\ltwo{\widetilde{\by} - \widetilde{\bX} \widehat{\beta} }^2 &\leq \ltwo{\widetilde{y} - \widetilde{\bX} \beta^* }^2, \nonumber \\
\ltwo{\widetilde{\bw} - \widetilde{\bX} \widehat{\nu}}^2 &\leq \ltwo{\widetilde{\bw}}^2, \nonumber \\
\widehat{\nu}^T (\bX^T \bX + n\theta \bA) \widehat{\nu} &\leq 2(\widetilde{\bX}^T \widetilde{\bw})^T \widehat{\nu}. \label{eqn:l2bound-proof1}
\end{align}

By Lemma \ref{lem:betterbound2}, the LHS is $\geq n \eta \ltwo{\widehat{\nu}}^2$, where $n\eta = \min \left[ (1 - n\theta)n\gamma + n\theta d_1^2, d_1^2 \right]$. We can get an upper bound on the RHS:
\begin{align*}
(\widetilde{\bX}^T \widetilde{\bw})^T \widehat{\nu} &\leq \linf{\widetilde{\bX}^T \widetilde{\bw}} \lone{\widehat{\nu}} &\text{(H\"{o}lder's inequality)} \\
&= \linf{\bX^T \bw - n\theta \bA \beta^*} (\lone{\widehat{\nu}_{S}} + \lone{\widehat{\nu}_{S^c}}) \\
&\leq 2 \linf{\bX^T \bw - n\theta \bA \beta^*} \lone{\widehat{\nu}_{S}} &\text{(Lemma \ref{lem:cpcLasso_rec})} \\
&\leq 2 \linf{\bX^T \bw - n\theta \bA \beta^*} \sqrt{|S|} \ltwo{\widehat{\nu}}. &\text{(Cauchy-Schwarz)}
\end{align*}

Hence,
\begin{align*}
n\eta \ltwo{\widehat{\nu}}^2 &\leq 4 \linf{\bX^T \bw - n\theta A \beta^*} \sqrt{|S|} \ltwo{\widehat{\nu}}, \\
\ltwo{\widehat{\nu}} &\leq \frac{4 \sqrt{|S|} \linf{\bX^T \bw - n\theta \bA \beta^*} }{n\eta}.
\end{align*}

The second inequality in \eqref{eqn:l2bound} is the result of using the triangle inequality on the $\linf{\cdot}$ norm.

When $\beta^*$ is aligned with the first principal component of $\bX$, $\bA \beta^* = 0$. Hence, the first inequality in \eqref{eqn:l2bound} reduces to \eqref{eqn:l2bound-1pc}.

\subsection{Proof of Theorem \ref{thm:l2bound-c2}}\label{proof:thm:l2bound-c2}
The proof of Theorem \ref{thm:l2bound-c2} is exactly the same as that for Theorem \ref{thm:l2bound-c}, except that we bound the LHS of \eqref{eqn:l2bound-proof1} from below by $\min \left( n\theta, 1 \right) d_1^2 \ltwo{\widehat{\nu}}^2$. We can do this due to Lemma \ref{lem:uec}.

\subsection{Proof of Theorem \ref{thm:l2bound-l}}\label{proof:thm:l2bound-l}
As in the proof of Lemma \ref{lem:lpcLasso_rec}, we have \eqref{eqn:basic-pcLasso2}:
\begin{equation*}
\frac{1}{2n}\widehat{\nu}^T (\bX^T \bX + n\theta \bA) \widehat{\nu} \leq \frac{\widetilde{\bw}^T \widetilde{\bX} \widehat{\nu}}{n} + \lambda (\lone{\widehat{\nu}_S} - \lone{\widehat{\nu}_{S^c}}).
\end{equation*}

By Lemma \ref{lem:betterbound2}, the LHS is $\geq \frac{\eta}{2} \ltwo{\widehat{\nu}}^2$, where $n\eta = \min \left[ (1 - n\theta)n\gamma + n\theta d_1^2, d_1^2 \right]$. By our choice of $\lambda$, we may use \eqref{eqn:basic-pcLasso3} to obtain
\begin{align*}
\frac{\eta}{2} \ltwo{\widehat{\nu}}^2 &\leq \frac{1}{2}\lambda \lone{\widehat{\nu}} + \lambda (\lone{\widehat{\nu}_S} - \lone{\widehat{\nu}_{S^c}}) \\
&\leq \frac{3\lambda}{2}\lone{\widehat{\nu}_S} \leq \frac{3\lambda}{2} \sqrt{|S|} \ltwo{\widehat{\nu}}, \\
\ltwo{\widehat{\nu}} &\leq \frac{3\lambda \sqrt{|S|}}{\eta},
\end{align*}
as required.

\subsection{Proof of Theorem \ref{thm:predbound-lslow}}\label{proof:thm:predbound-lslow}
By \eqref{eqn:basic-pcLasso1},
\begin{align*}
0 \leq \frac{1}{2n}\widehat{\nu}^T (\bX^T \bX + n\theta \bA) \widehat{\nu} &\leq \frac{\widetilde{\bw}^T \widetilde{\bX} \widehat{\nu}}{n} + \lambda (\lone{\beta^*} - \lone{\beta^* + \widehat{\nu}}) \\
&\leq \frac{1}{n} \linf{\widetilde{\bX}^T \widetilde{\bw}} \lone{\widehat{\nu}} + \lambda (\lone{\beta^*} - \lone{\beta^* + \widehat{\nu}}) &\text{(H\"{o}lder's inequality)} \\
&\leq \left(\frac{1}{n} \linf{\bX^T \bw - n\theta \bA \beta^*} - \lambda \right)\lone{\widehat{\nu}} + 2 \lambda \lone{\beta^*} &\text{(triangle inequality)} \\
&\leq -\frac{\lambda}{2} \lone{\widehat{\nu}} + 2 \lambda \lone{\beta^*}, &\text{(by choice of $\lambda$)},
\end{align*}
so $\lone{\widehat{\nu}} \leq 4 \lone{\beta^*}$. Using \eqref{eqn:basic-pcLasso1} again,
\begin{align*}
\frac{1}{2n}\ltwo{\bX(\widehat{\beta} - \beta^*)}^2 = \frac{1}{2n}\widehat{\nu}^T \bX^T \bX \widehat{\nu} &\leq \frac{1}{2n} \widehat{\nu}^T (\bX^T \bX + n\theta \bA) \widehat{\nu} \\
&\leq \frac{1}{n}\linf{\widetilde{\bX}^T \widetilde{\bw}} \lone{\widehat{\nu}} + \lambda \lone{\widehat{\nu}} \\
&\leq \frac{3\lambda}{2}\lone{\widehat{\nu}} \\
&\leq 6 \lambda \lone{\beta^*}.
\end{align*}
Multiplying both sides by $2$ establishes the claim.

\subsection{Proof of Theorem \ref{thm:predbound-lslow2}}\label{proof:thm:predbound-lslow2}
Let $\kappa = \min (n\theta, 1) d_1^2$. Mimicking the first part of the proof of Theorem \ref{thm:predbound-lslow} and using Lemma \ref{lem:uec}, we have
\begin{align*}
-\frac{\lambda}{2} \lone{\widehat{\nu}} + 2 \lambda \lone{\beta^*} &\geq \frac{1}{2n}\widehat{\nu}^T (\widetilde{\bX}^T \widetilde{\bX}) \widehat{\nu} \geq \kappa \ltwo{\widehat\nu}^2 \\
&\geq \kappa \frac{\lone{\widehat{\nu}}^2}{p}, &(\text{Cauchy-Schwarz}), \\
\frac{\kappa}{p}\lone{\widehat{\nu}}^2 + \frac{\lambda}{2} \lone{\widehat{\nu}} - 2 \lambda \lone{\beta^*} &\leq 0, \\
\lone{\widehat{\nu}} &\leq \frac{-\lambda/2 + \sqrt{\lambda^2 / 4 + 8\lambda \lone{\beta^*} \kappa / p }}{2\kappa / p} \\
&= \frac{-\lambda p + \sqrt{\lambda^2 p^2 + 32\lambda \lone{\beta^*} \kappa p }}{4\kappa}.
\end{align*}

Thus, mimicking the second part of the proof of Theorem \ref{thm:predbound-lslow},
\begin{align*}
\frac{1}{2n}\ltwo{\bX(\widehat{\beta} - \beta^*)}^2 &\leq \frac{3\lambda}{2}\lone{\widehat{\nu}} \\
&\leq \frac{3\lambda}{2} \frac{-\lambda p + \sqrt{\lambda^2 p^2 + 32\lambda \lone{\beta^*} \kappa p }}{4\kappa}, \\
\frac{1}{n}\ltwo{\bX(\widehat{\beta} - \beta^*)}^2 &\leq \frac{ 3\lambda (-\lambda p + \sqrt{\lambda^2 p^2 + 32\lambda \lone{\beta^*} \kappa p} )}{4\kappa},
\end{align*}
as required. To show that the RHS above is indeed a better bound than that in Theorem \ref{thm:predbound-lslow}:
\begin{align*}
&&12 \lambda \lone{\beta^*} &\geq \frac{ 3\lambda (-\lambda p + \sqrt{\lambda^2 p^2 + 32\lambda \lone{\beta^*} \kappa p} )}{4\kappa} \\
\Leftrightarrow \qquad && 16 \kappa \lone{\beta^*} &\geq -\lambda p + \sqrt{\lambda^2 p^2 + 32\lambda \lone{\beta^*} \kappa p} \\
\Leftrightarrow \qquad && (\lambda p + 16 \kappa \lone{\beta^*})^2 &\geq \lambda^2 p^2 + 32\lambda \lone{\beta^*} \kappa p \\
\Leftrightarrow \qquad && 256 \kappa^2 \lone{\beta^*}^2 &\geq 0,
\end{align*}
which is obviously true.

\subsection{Proof of Theorem \ref{thm:predbound-lfast}}\label{proof:thm:predbound-lfast}
We have
\begin{align*}
\frac{1}{2n}\widehat{\nu}^T (\bX^T \bX + n\theta A) \widehat{\nu} &\leq \frac{\widetilde{\bw}^T \widetilde{\bX} \widehat{\nu}}{n} + \lambda (\lone{\widehat{\nu}_S} - \lone{\widehat{\nu}_{S^c}}) &\text{(\eqref{eqn:basic-pcLasso2} in Lemma \ref{lem:lpcLasso_rec})} \\
&\leq \frac{1}{n} \linf{\widetilde{\bX}^T \widetilde{\bw}} \lone{\widehat{\nu}} + \lambda \lone{\widehat{\nu}} &\text{(H\"{o}lder's inequality)} \\
&\leq \frac{3 \lambda}{2} \lone{\widehat{\nu}} &\text{(choice of $\lambda$)} \\
&\leq \frac{3 \lambda}{2} \sqrt{|S|} \ltwo{\widehat{\nu}}. &\text{(Cauchy-Schwarz inequality)}
\end{align*}

By Lemma \ref{lem:lpcLasso_rec}, $\widehat{\nu}$ lies in the set $\{ \nu \in \bbR^p: \lone{\nu_{S^c}} \leq 3\lone{\nu_{S}} \}$, and so the restricted eigenvalue condition and Lemma \ref{lem:betterbound2} apply, i.e. $\ltwo{\widehat{\nu}}^2 \leq \frac{1}{n\eta} \ltwo{\widetilde{\bX}\widehat{\nu}}^2$, where $n \eta = \min \left[ (1 - n\theta)n\gamma + n\theta d_1^2, d_1^2 \right]$. Hence,
\begin{align*}
\frac{1}{2n}\widehat{\nu}^T (\bX^T \bX + n\theta \bA)\widehat{\nu} &\leq \frac{3 \lambda}{2} \sqrt{|S|} \ltwo{\widehat{\nu}} \leq \frac{3 \lambda}{2} \sqrt{|S|} \sqrt{\frac{1}{n\eta}} \ltwo{\widetilde{\bX}\widehat{\nu}}, \\
\ltwo{\bX\widehat{\nu}} \leq \ltwo{\widetilde{\bX}\widehat{\nu}} &\leq 3n \lambda \sqrt{|S|} \sqrt{\frac{1}{n\eta}}, \\
\frac{\ltwo{\bX\widehat{\nu}}^2}{n} &\leq \frac{9|S| \lambda^2}{\eta},
\end{align*}
as required.


\section{Full details of simulation study in Section \ref{sec:sim}}\label{sec:simstudy-detail}

In this appendix, we present details on how data was generated for the simulation study, as well as results that we obtained for a variety of settings.

\subsection{Data generation details}

Let $n$ be the size of the training data, $p$ be the number of features for each observation, and $K$ be the number of groups of features. Let $size$ be a vector of length $K$ such that its $k$th element denotes the number of features in group $k$.

\subsubsection{Generating $\bX$}

Let $\bX_k$ denote the design matrix for group $k$, and let $\bX$ denote the full design matrix.

\begin{enumerate}
    \item Let $\rho$ denote the correlation between features in each group. For each group $k$, build the covariance matrix $\Sigma_k = \rho {\bf 1} + (1-\rho) \bI \in \mathbb{R}^{size_k \times size_k}$. 
    
    \item Generate the rows of $\bX_k$ independently from $\mathcal{N}(0, \Sigma_k)$.
    
\end{enumerate}

\subsubsection{Generating $\by$}
\begin{enumerate}
    \item Perform singular value decomposition (SVD) on each $\bX_k$ to get $\bU_k$, $\bD_k$ and $\bV_k$.

    \item Let $n_{ev}$ be the number of principal components (PCs) to take from each group, and let $SNR$ be the desired signal-to-noise ratio (SNR).
    
    \item We consider three settings for determining which PCs will form the signal:
    
    \begin{itemize}
        \item \textit{``Home court"}: Set $\bW_k$ to be the first $n_{ev}$ columns of $\bV_k$.
        
         \item \textit{``Neutral court"}: Set $\bW_k$ to be $n_{ev}$ random columns of $\bV_k$.
         
         \item \textit{``Hostile court"}: Set $\bW_k$ to be the bottom $n_{ev}$ columns of $\bV_k$.
    \end{itemize}
    
    \item For each $k$, specify coefficients ${\bf b}_k \in \mathbb{R}^{n_{ev}}$. For simplicity, we set ${\bf b}_k = {\bf 2}$ if the group is related to the response, ${\bf b}_k = {\bf 0}$ otherwise.
    
    \item Set $signal = \sum_k \bX_k \bW_k {\bf b}_k$.
    
    \item Set response $\by = signal + \epsilon$, where the entries of $\epsilon$ are iid $\mathcal{N}(0, V)$, where
    \[ V = \frac{\text{Var}(signal)}{SNR} = \frac{\sum_k {\bf b}_k^T \bW_k^T \Sigma_k \bW_k {\bf b}_k }{SNR}. \]
    
\end{enumerate}

\subsubsection{Generating test data}
\begin{enumerate}
    \item For each $k$, generate $\bX_{test, k}$ such that each row is i.i.d. $\mathcal{N}({\bf 0}, \Sigma_k)$. Generate the $\bX_{test, k}$ independently of each other.
    
    \item Compute $signal_{test} = \sum_k \bX_{test, k} \bW_k {\bf b}_k$.
    
\end{enumerate}

\subsection{Simulation details}

We ran simulations for the following settings:
\begin{itemize}
\item $n = 200$, $p = 50$, no groups, $n_{ev} = 5$: home court, neutral court and hostile court.

\item $n = 200$, $p = 50$, 5 groups of 10 predictors each, $n_{ev} = 1$, signal only depends on the first two groups: home court, neutral court and hostile court.

\item $n = 200$, $p = 200$, 10 groups of 20 predictors each, $n_{ev} = 2$, signal only depends on the first group: home court and neutral court.

\item $n = 200$, $p = 1,000$, 10 groups of 100 predictors each, $n_{ev} = 2$, signal only depends on the first group: home court and neutral court.
\end{itemize}

For each setting above, we ran 30 simulations for each combination of within-group correlation $\rho \in \{ 0, 0.3\}$ and SNR $SNR \in \{0.5, 1, 2\}$.

We compared the following models:
\begin{itemize}
	\item The null model, i.e. the mean of the responses in the training data.
	
	\item pcLasso with CV across $rat = 0.25, 0.5, 0.75, 0.9, 0.95, 1$, with $\lambda$ values selected both by \texttt{lambda.min} and \texttt{lambda.1se}.	
	
	\item Elastic net with CV across $\alpha = 0, 0.2, 0.4, 0.6, 0.8$ and $1$, with $\lambda$ values selected both by \texttt{lambda.min} and \texttt{lambda.1se}.

	\item Lasso with CV, with $\lambda$ values selected both by \texttt{lambda.min} and \texttt{lambda.1se}.
	
	\item Principal components (PC) regression with CV across ranks.
\end{itemize}

For each method, we present a boxplot of the test mean-squared error (MSE) on $5,000$ test points, i.e. $MSE = \mathbb{E}[(\widehat{y}_{test} - signal_{test})^2]$. We also present boxplots of the support size of the model which the method selects. Finally, we present a line histogram of the values of $rat$ which the two versions of pcLasso select during model fitting.

\newpage
\subsubsection{Small $p$ no groups, ``home court" for pcLasso}
$n = 200$, $p = 50$, 1 group of uncorrelated predictors, response related to top 5 eigenvectors.

\begin{figure}[ht]
\includegraphics[width=2.2in]{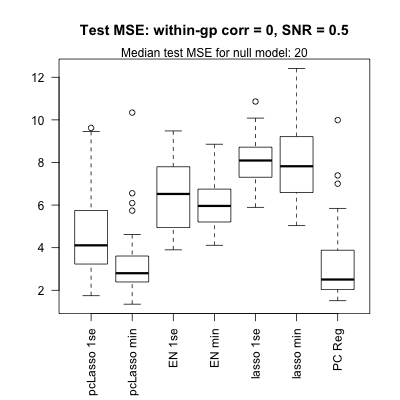}\includegraphics[width=2.2in]{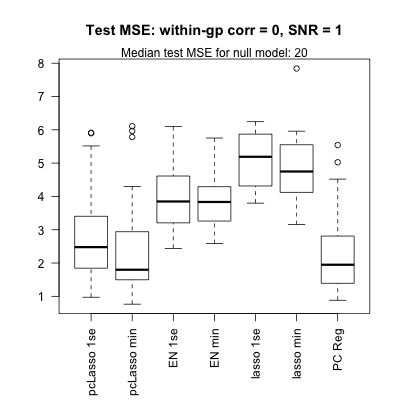}\includegraphics[width=2.2in]{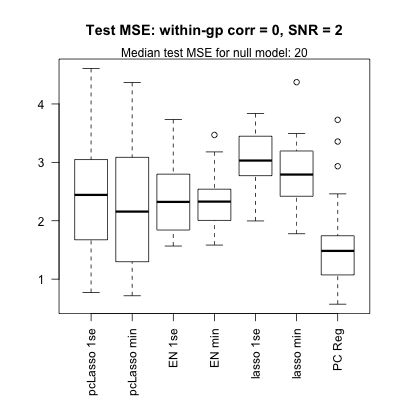}

\includegraphics[width=2.2in]{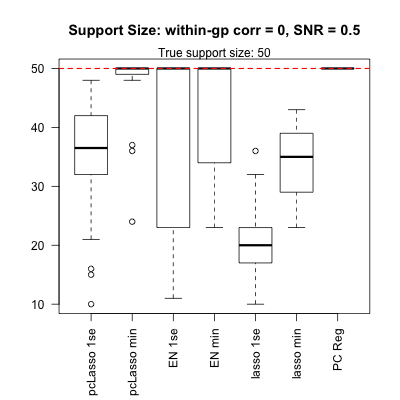}\includegraphics[width=2.2in]{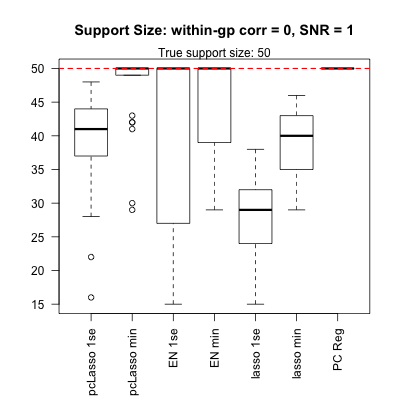}\includegraphics[width=2.2in]{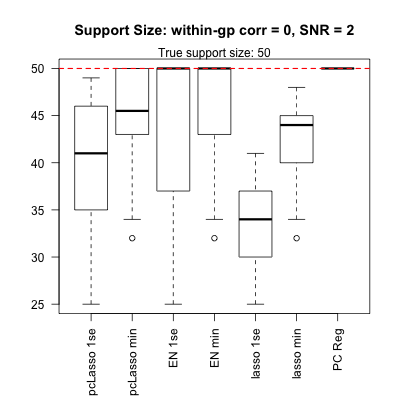}

\includegraphics[width=2.2in]{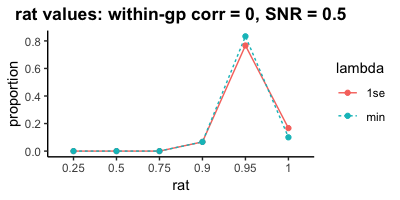}\includegraphics[width=2.2in]{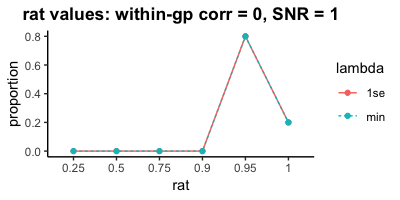}\includegraphics[width=2.2in]{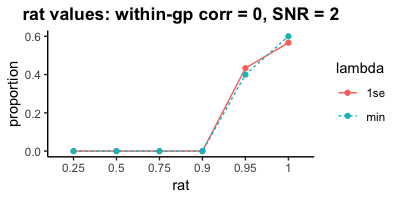}
\end{figure}

\newpage
As above, but with predictors having pairwise correlation of 0.3.

\begin{figure}[ht]
\includegraphics[width=2.2in]{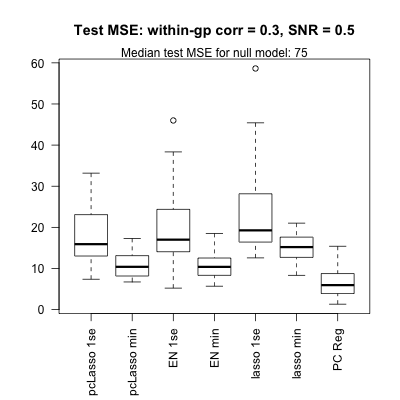}\includegraphics[width=2.2in]{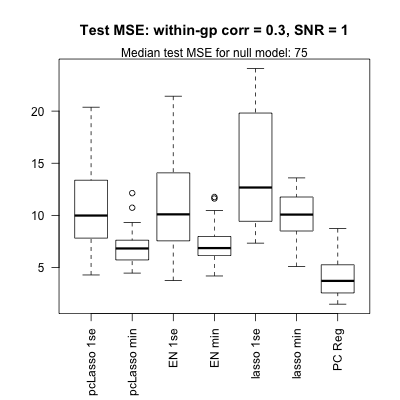}\includegraphics[width=2.2in]{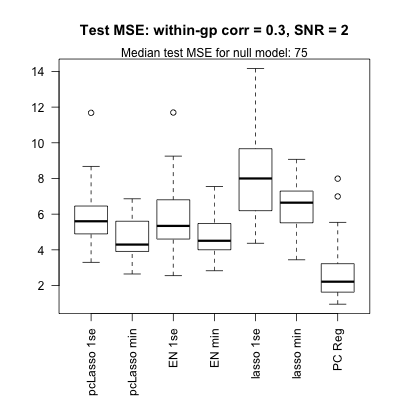}

\includegraphics[width=2.2in]{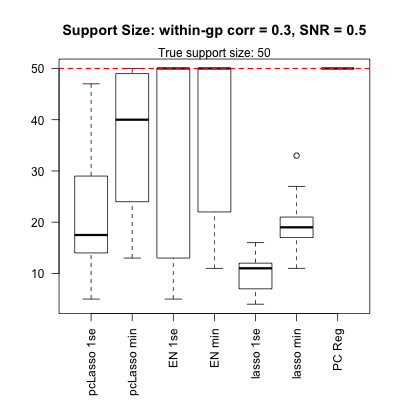}\includegraphics[width=2.2in]{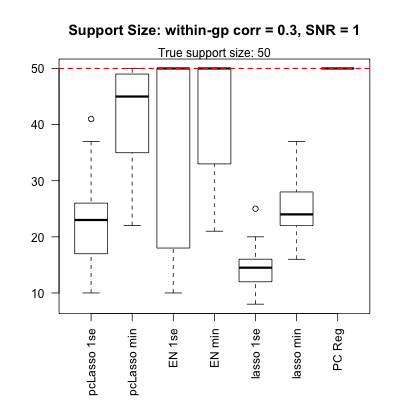}\includegraphics[width=2.2in]{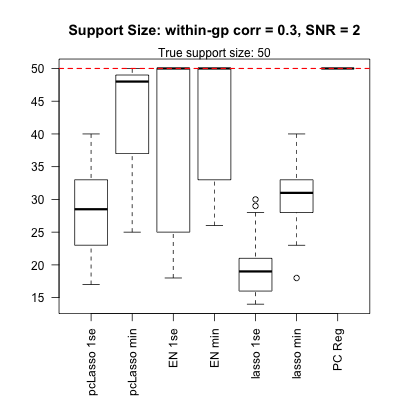}

\includegraphics[width=2.2in]{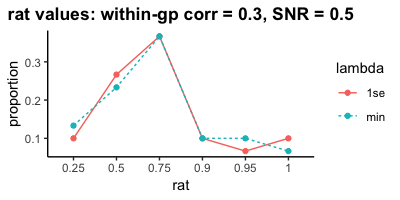}\includegraphics[width=2.2in]{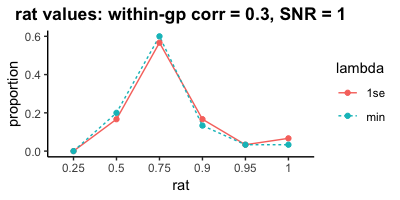}\includegraphics[width=2.2in]{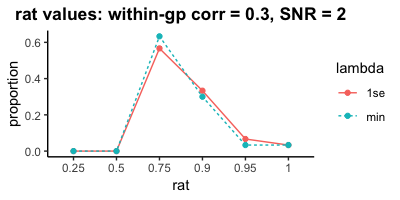}
\end{figure}

\newpage
\subsubsection{Small $p$ no groups, ``neutral court" for pcLasso}
$n = 200$, $p = 50$, 1 group of uncorrelated predictors, response related to 5 random eigenvectors.

\begin{figure}[ht]
\includegraphics[width=2.2in]{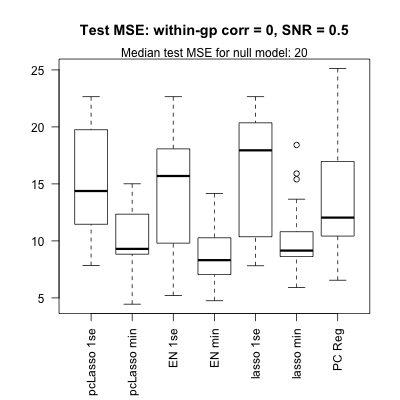}\includegraphics[width=2.2in]{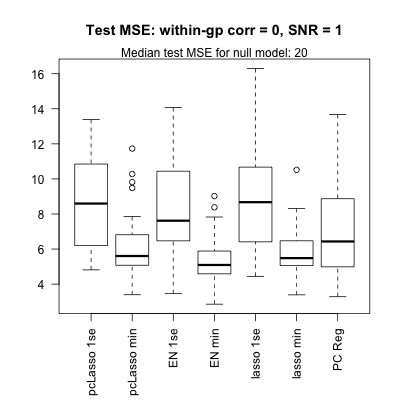}\includegraphics[width=2.2in]{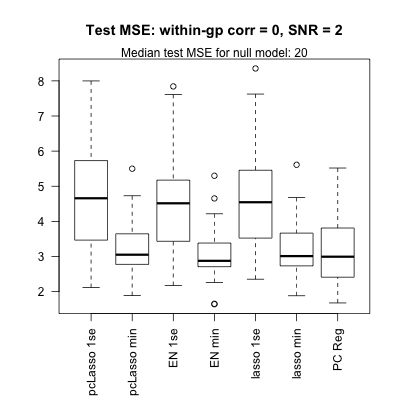}

\includegraphics[width=2.2in]{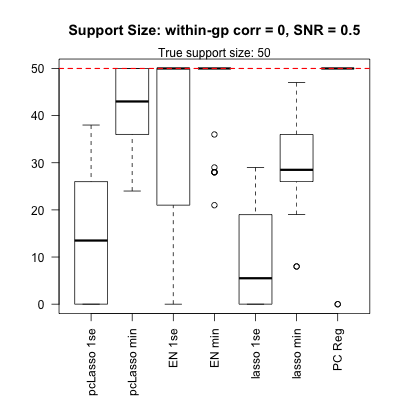}\includegraphics[width=2.2in]{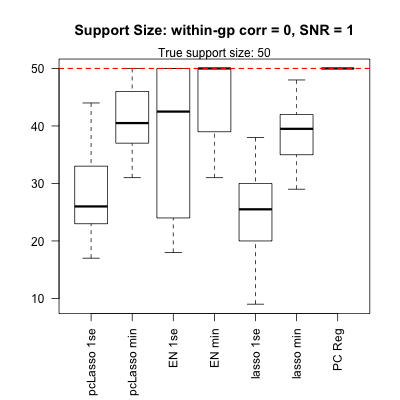}\includegraphics[width=2.2in]{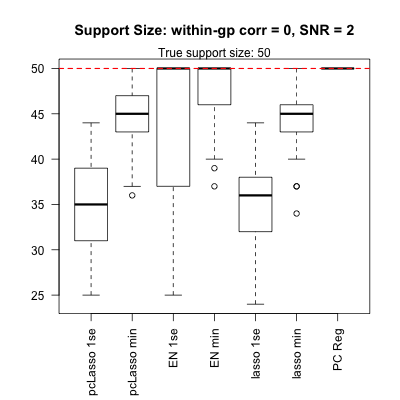}

\includegraphics[width=2.2in]{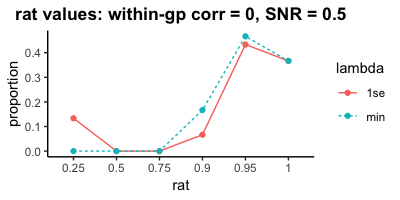}\includegraphics[width=2.2in]{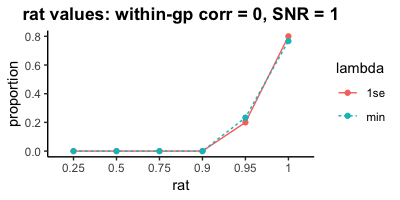}\includegraphics[width=2.2in]{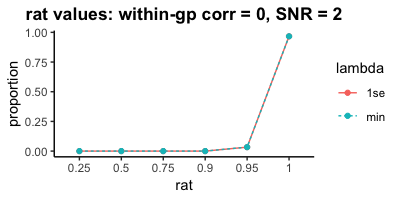}
\end{figure}

\newpage
As above, but with predictors having pairwise correlation of 0.3.

\begin{figure}[ht]
\includegraphics[width=2.2in]{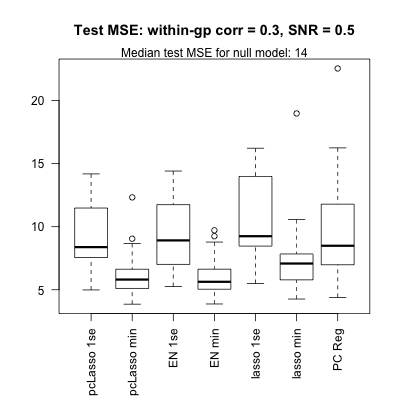}\includegraphics[width=2.2in]{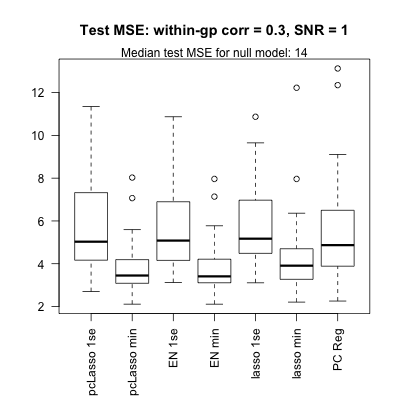}\includegraphics[width=2.2in]{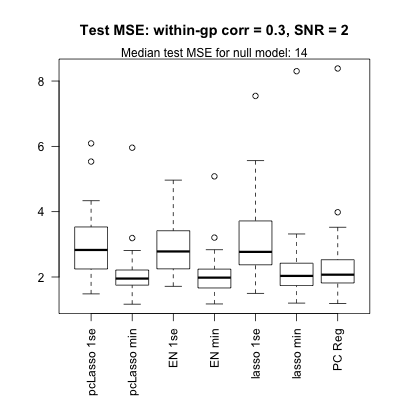}

\includegraphics[width=2.2in]{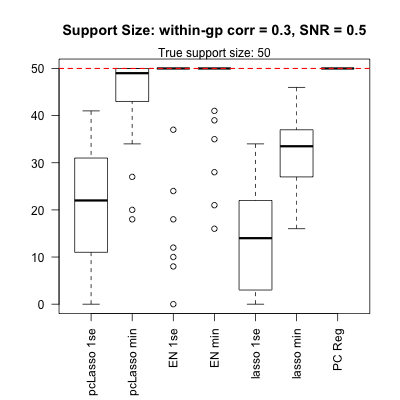}\includegraphics[width=2.2in]{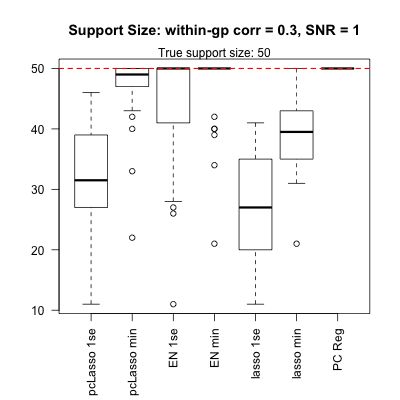}\includegraphics[width=2.2in]{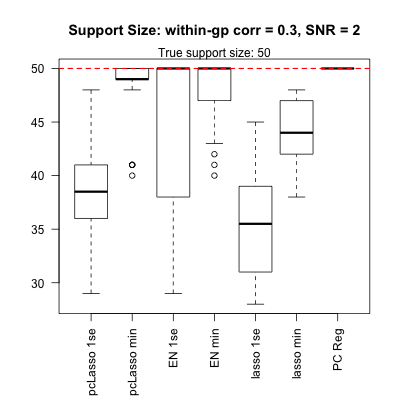}

\includegraphics[width=2.2in]{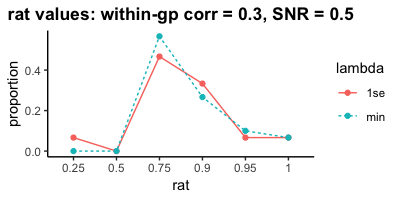}\includegraphics[width=2.2in]{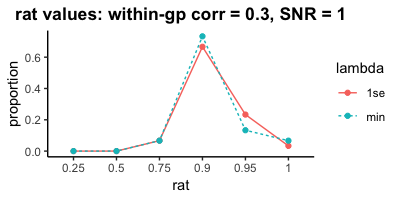}\includegraphics[width=2.2in]{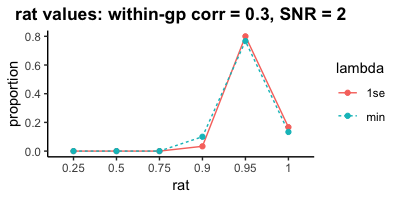}
\end{figure}

\newpage
\subsubsection{Small $p$ no groups, ``hostile court" for pcLasso}
$n = 200$, $p = 50$, 1 group of uncorrelated predictors, response related to the bottom 5 eigenvectors.

\begin{figure}[ht]
\includegraphics[width=2.2in]{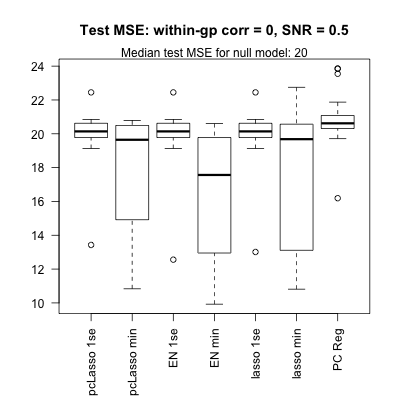}\includegraphics[width=2.2in]{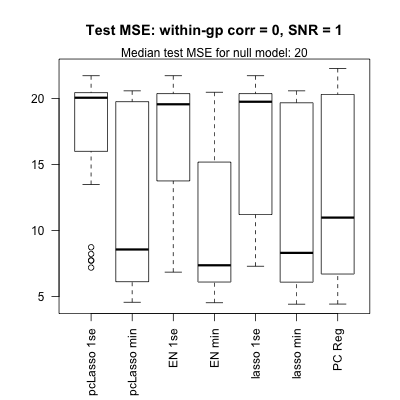}\includegraphics[width=2.2in]{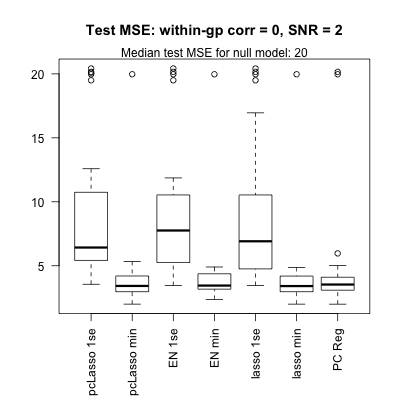}

\includegraphics[width=2.2in]{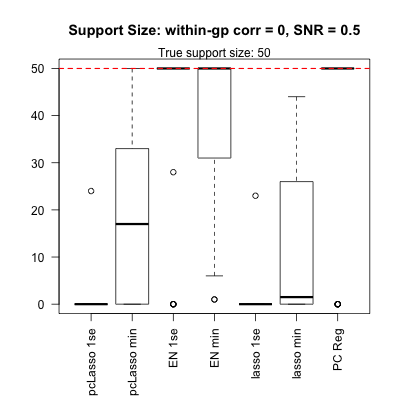}\includegraphics[width=2.2in]{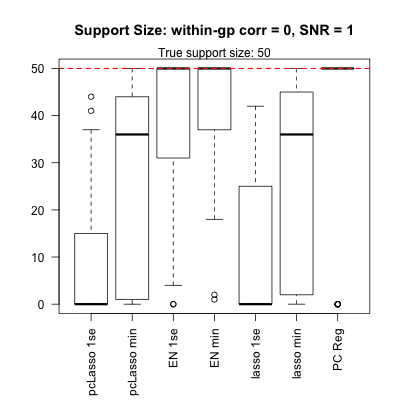}\includegraphics[width=2.2in]{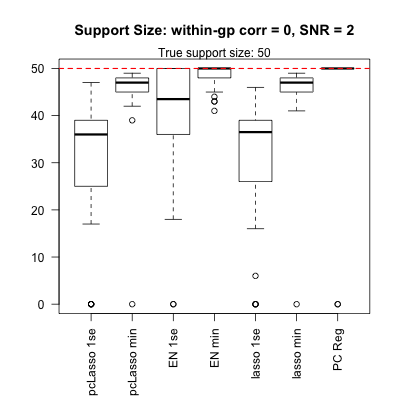}

\includegraphics[width=2.2in]{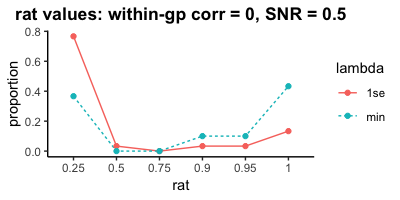}\includegraphics[width=2.2in]{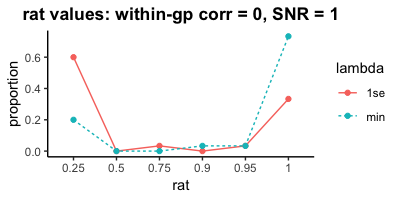}\includegraphics[width=2.2in]{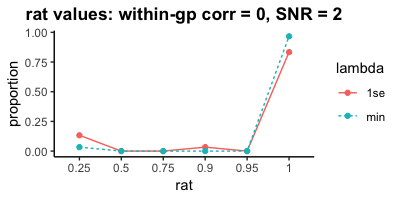}
\end{figure}

\newpage
As above, but with predictors having pairwise correlation of 0.3.

\begin{figure}[ht]
\includegraphics[width=2.2in]{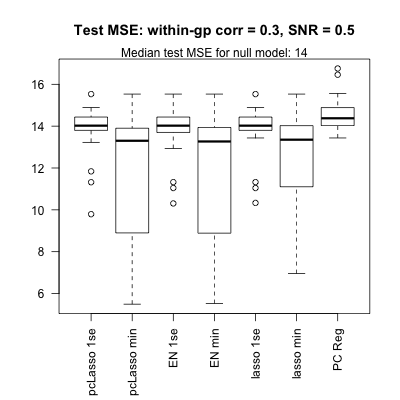}\includegraphics[width=2.2in]{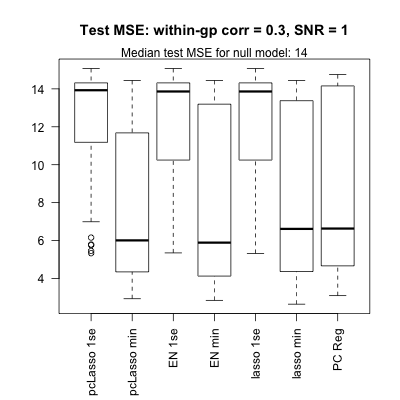}\includegraphics[width=2.2in]{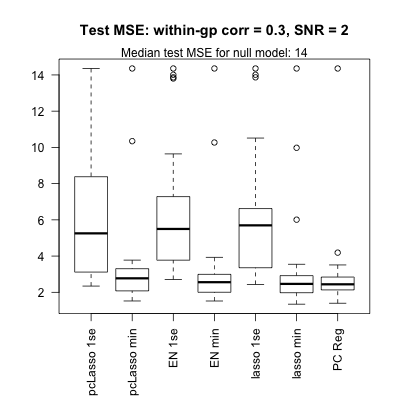}

\includegraphics[width=2.2in]{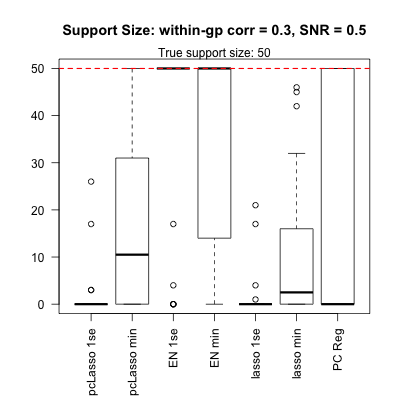}\includegraphics[width=2.2in]{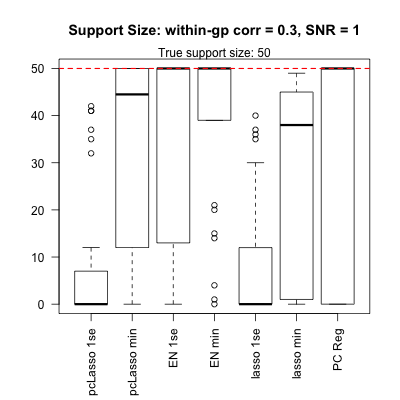}\includegraphics[width=2.2in]{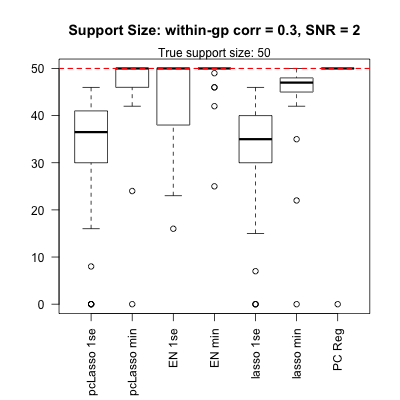}

\includegraphics[width=2.2in]{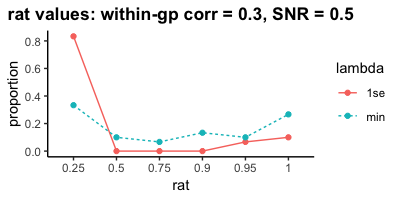}\includegraphics[width=2.2in]{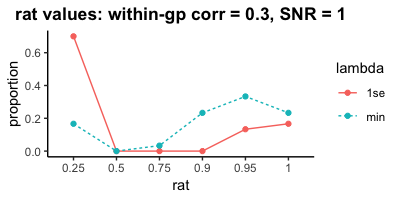}\includegraphics[width=2.2in]{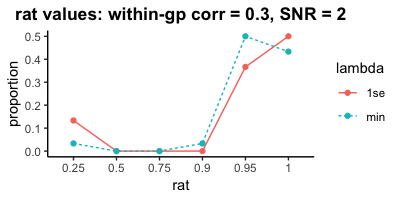}
\end{figure}

\newpage
\subsubsection{Small $p$ with groups, ``home court" for pcLasso}
$n = 200$, $p = 50$, 5 groups of 10 uncorrelated predictors, response related to the top eigenvector in the first 2 groups.

\begin{figure}[ht]
\includegraphics[width=2.2in]{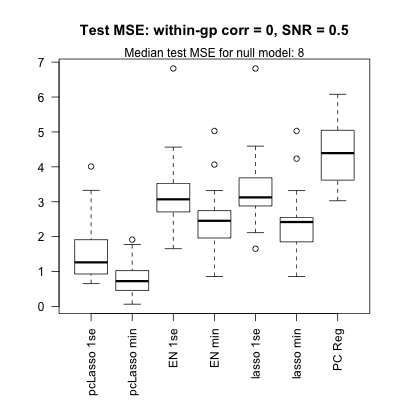}\includegraphics[width=2.2in]{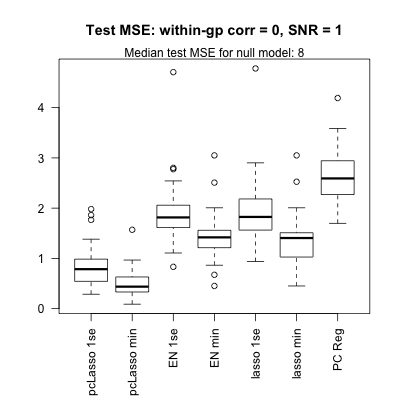}\includegraphics[width=2.2in]{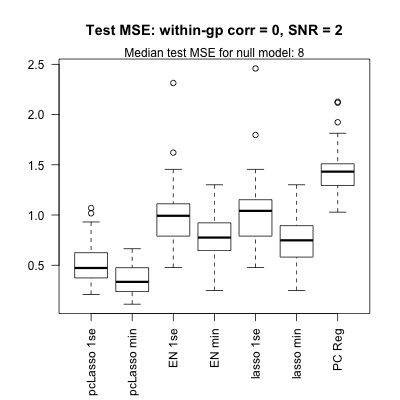}

\includegraphics[width=2.2in]{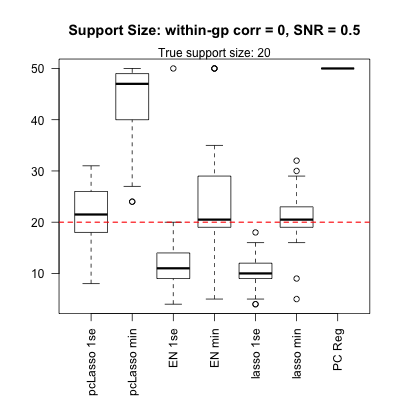}\includegraphics[width=2.2in]{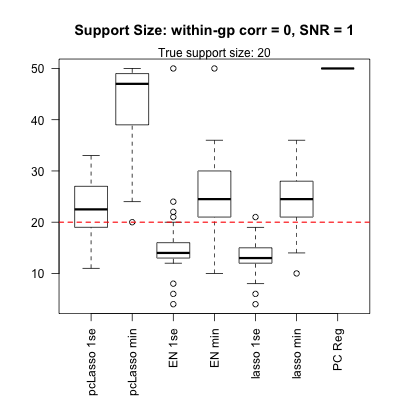}\includegraphics[width=2.2in]{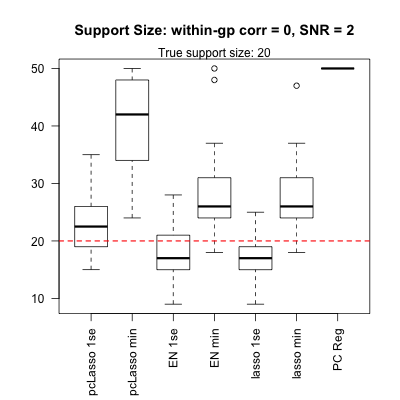}

\includegraphics[width=2.2in]{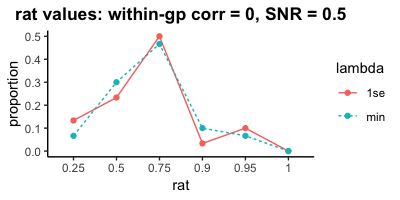}\includegraphics[width=2.2in]{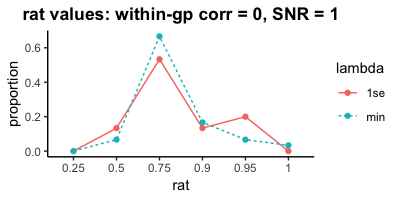}\includegraphics[width=2.2in]{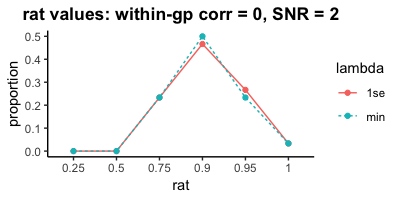}
\end{figure}

\newpage
As above, but with predictors in the same group having pairwise correlation of 0.3.

\begin{figure}[ht]
\includegraphics[width=2.2in]{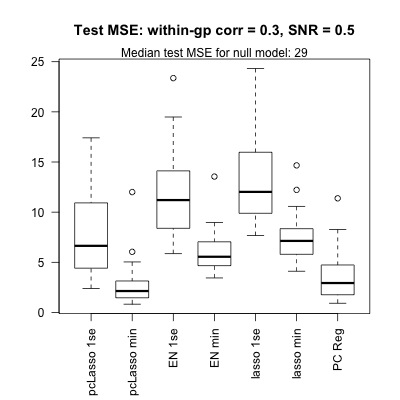}\includegraphics[width=2.2in]{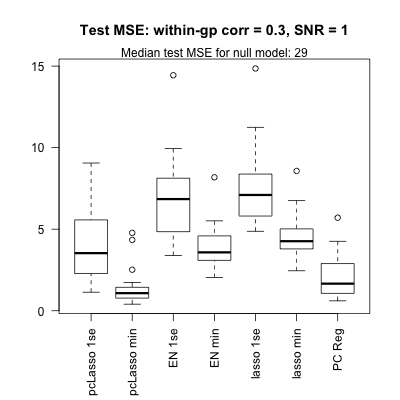}\includegraphics[width=2.2in]{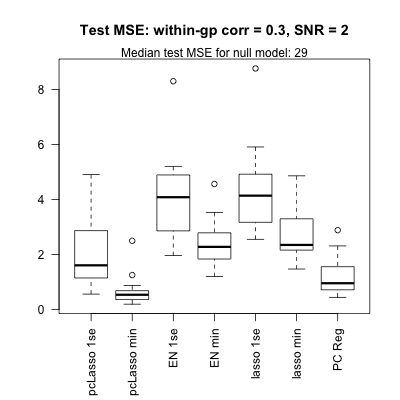}

\includegraphics[width=2.2in]{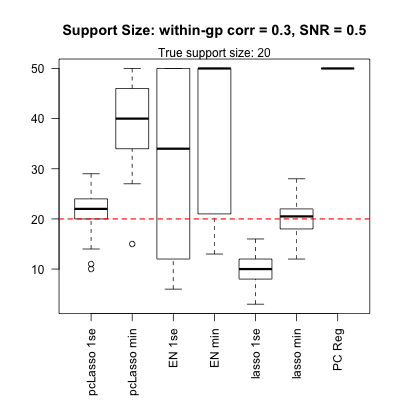}\includegraphics[width=2.2in]{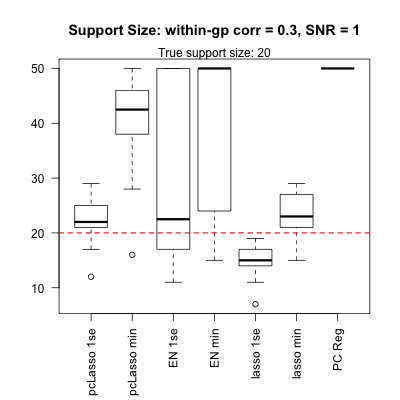}\includegraphics[width=2.2in]{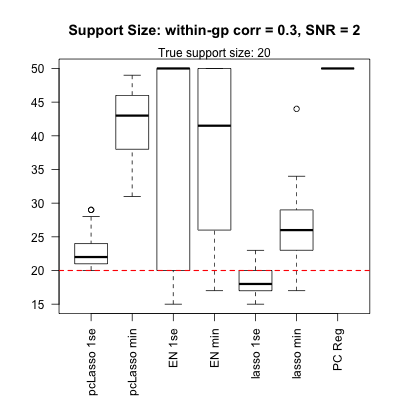}

\includegraphics[width=2.2in]{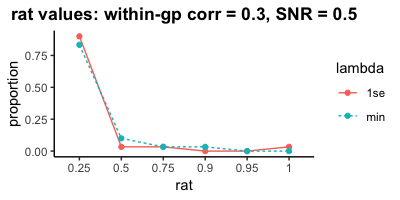}\includegraphics[width=2.2in]{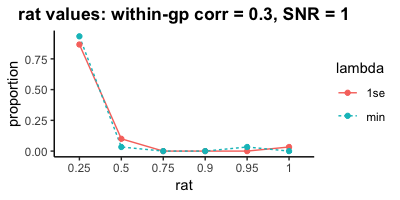}\includegraphics[width=2.2in]{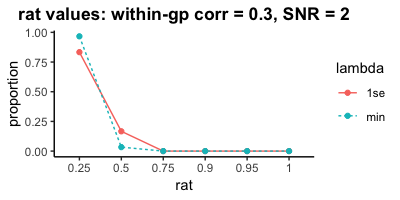}
\end{figure}

\newpage
\subsubsection{Small $p$ with groups, ``neutral court" for pcLasso}
$n = 200$, $p = 50$, 5 groups of 10 uncorrelated predictors, response related to a random eigenvector in the first 2 groups.

\begin{figure}[ht]
\includegraphics[width=2.2in]{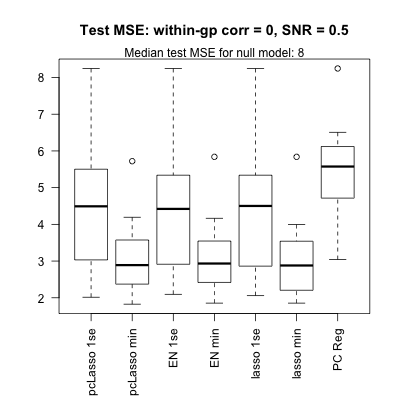}\includegraphics[width=2.2in]{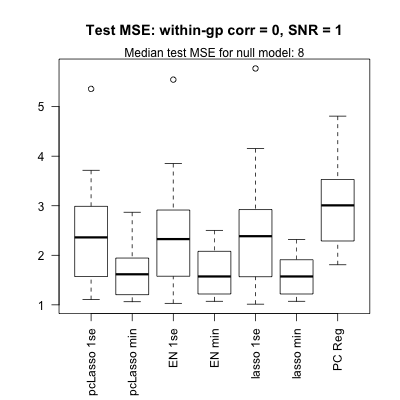}\includegraphics[width=2.2in]{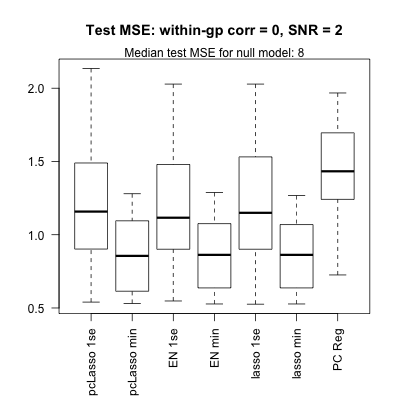}

\includegraphics[width=2.2in]{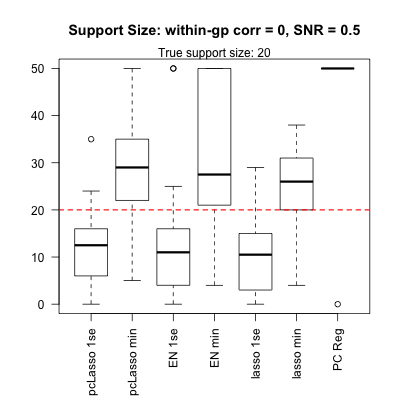}\includegraphics[width=2.2in]{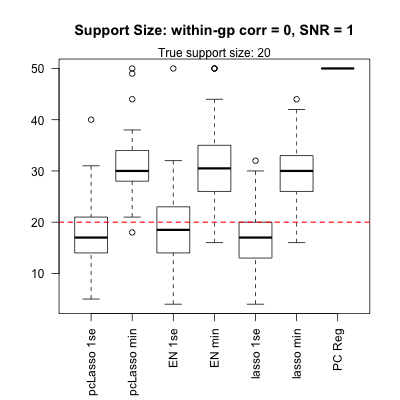}\includegraphics[width=2.2in]{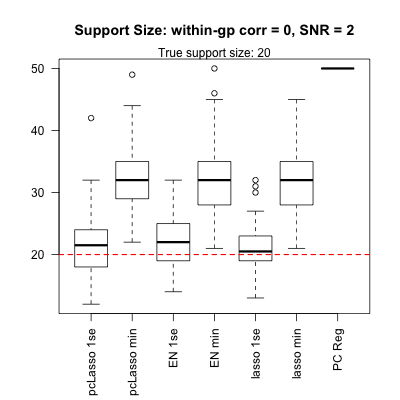}

\includegraphics[width=2.2in]{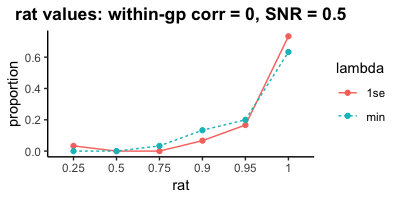}\includegraphics[width=2.2in]{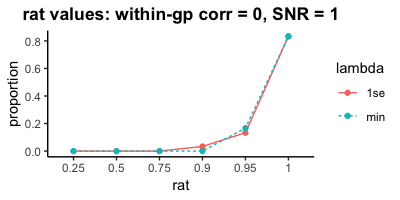}\includegraphics[width=2.2in]{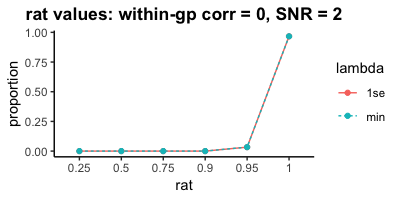}
\end{figure}

\newpage
As above, but with predictors in the same group having pairwise correlation of 0.3.

\begin{figure}[ht]
\includegraphics[width=2.2in]{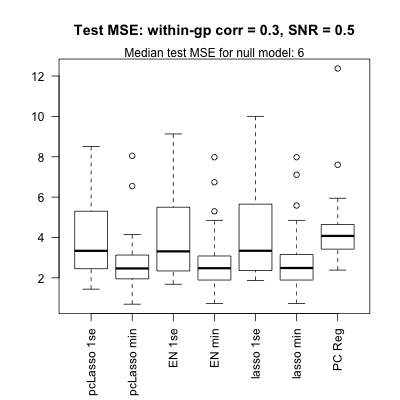}\includegraphics[width=2.2in]{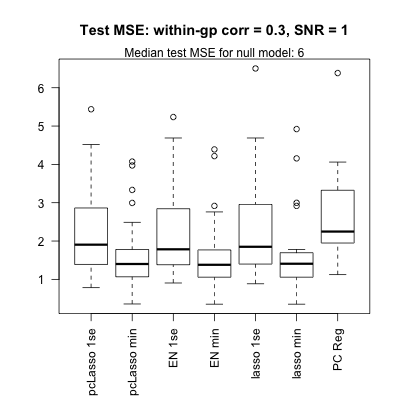}\includegraphics[width=2.2in]{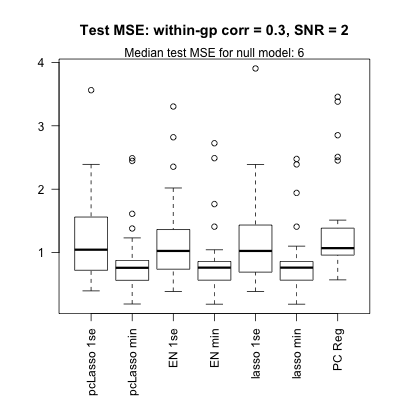}

\includegraphics[width=2.2in]{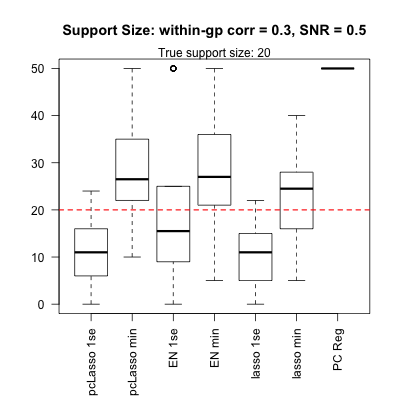}\includegraphics[width=2.2in]{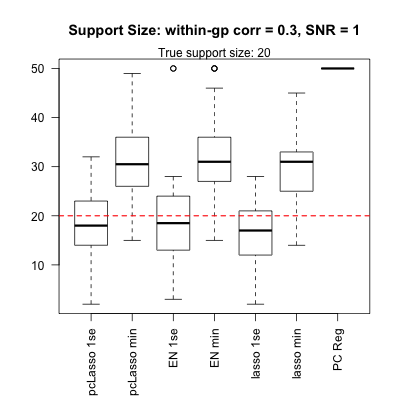}\includegraphics[width=2.2in]{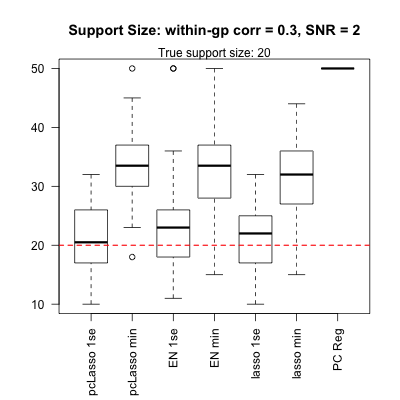}

\includegraphics[width=2.2in]{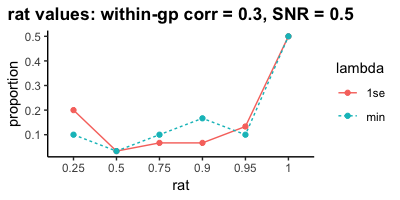}\includegraphics[width=2.2in]{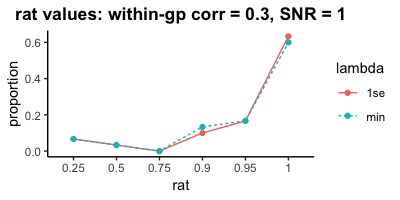}\includegraphics[width=2.2in]{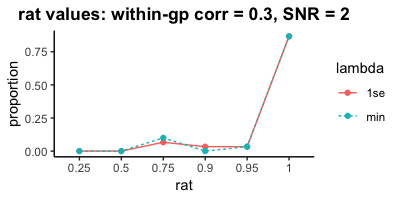}
\end{figure}

\newpage
\subsubsection{Small $p$ with groups, ``hostile court" for pcLasso}
$n = 200$, $p = 50$, 5 groups of 10 uncorrelated predictors, response related to the bottom eigenvector in the first 2 groups.

\begin{figure}[ht]
\includegraphics[width=2.2in]{6_MSE_corr0_SNR0_5.png}\includegraphics[width=2.2in]{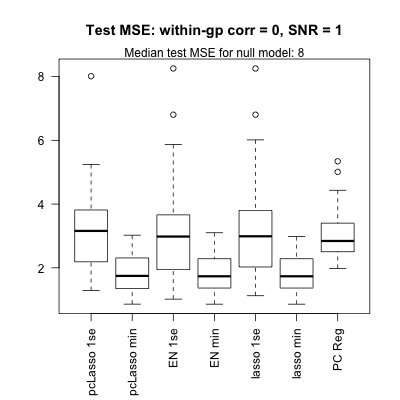}\includegraphics[width=2.2in]{6_MSE_corr0_SNR2.png}

\includegraphics[width=2.2in]{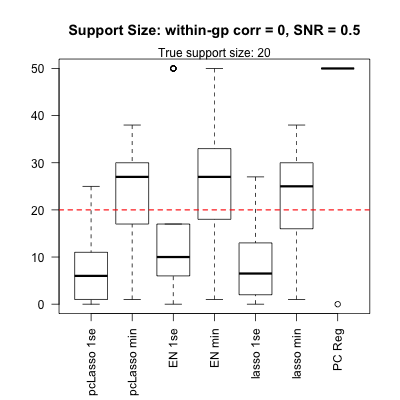}\includegraphics[width=2.2in]{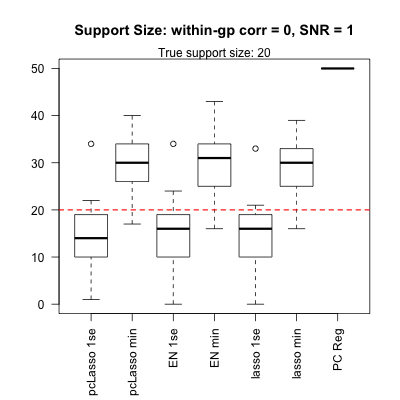}\includegraphics[width=2.2in]{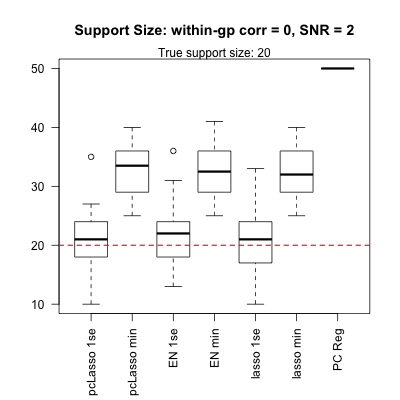}

\includegraphics[width=2.2in]{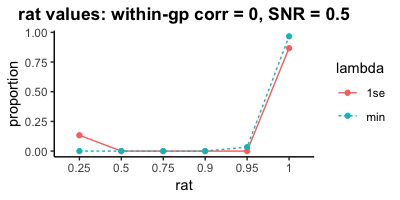}\includegraphics[width=2.2in]{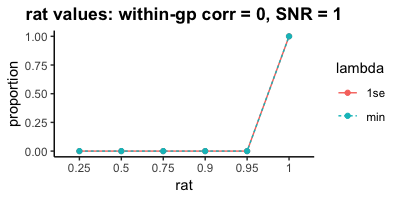}\includegraphics[width=2.2in]{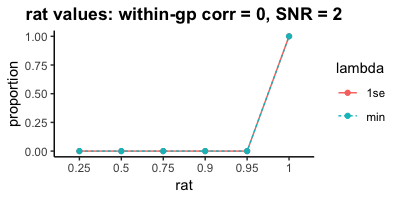}
\end{figure}

\newpage
As above, but with predictors in the same group having pairwise correlation of 0.3.

\begin{figure}[ht]
\includegraphics[width=2.2in]{6_MSE_corr0_3_SNR0_5.png}\includegraphics[width=2.2in]{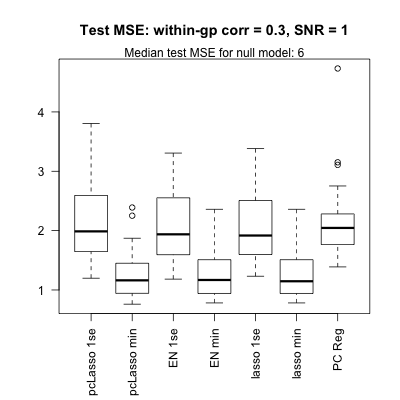}\includegraphics[width=2.2in]{6_MSE_corr0_3_SNR2.png}

\includegraphics[width=2.2in]{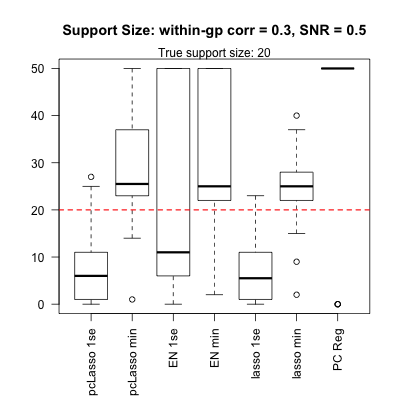}\includegraphics[width=2.2in]{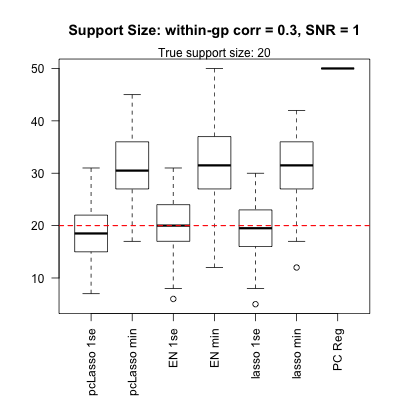}\includegraphics[width=2.2in]{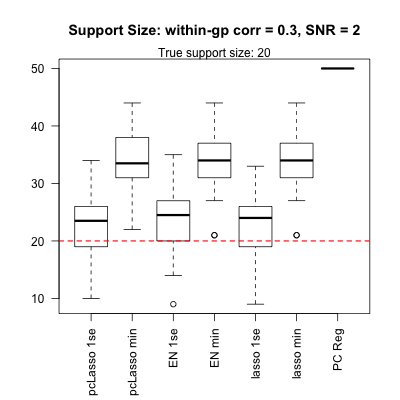}

\includegraphics[width=2.2in]{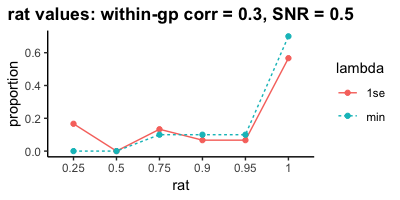}\includegraphics[width=2.2in]{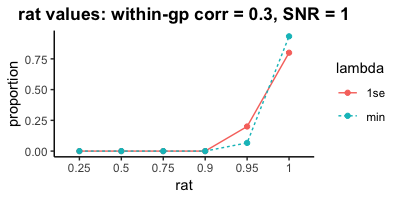}\includegraphics[width=2.2in]{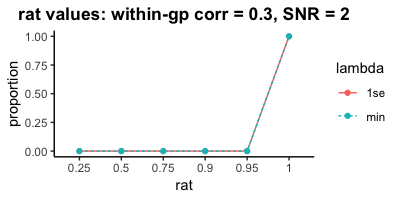}
\end{figure}

\newpage
\subsubsection{Medium $p$ with groups, ``home court" for pcLasso}
$n = 200$, $p = 200$, 10 groups of 20 uncorrelated predictors, response related to the top 2 eigenvectors in the first group.

\begin{figure}[ht]
\includegraphics[width=2.2in]{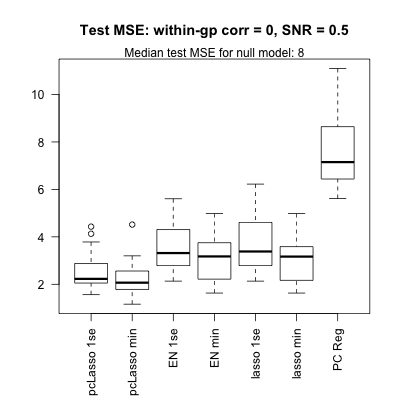}\includegraphics[width=2.2in]{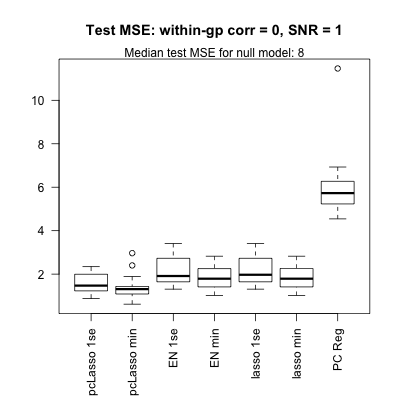}\includegraphics[width=2.2in]{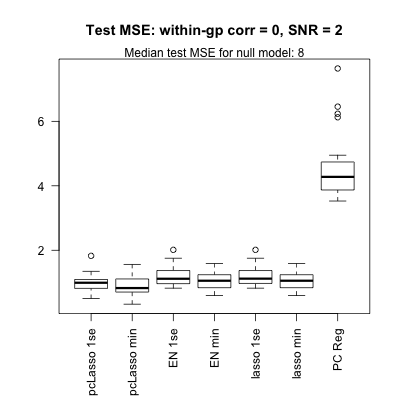}

\includegraphics[width=2.2in]{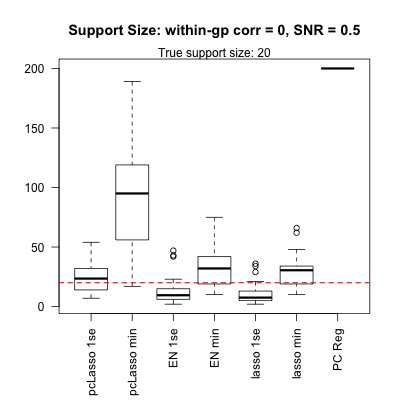}\includegraphics[width=2.2in]{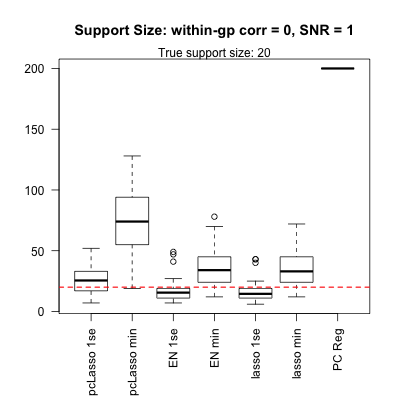}\includegraphics[width=2.2in]{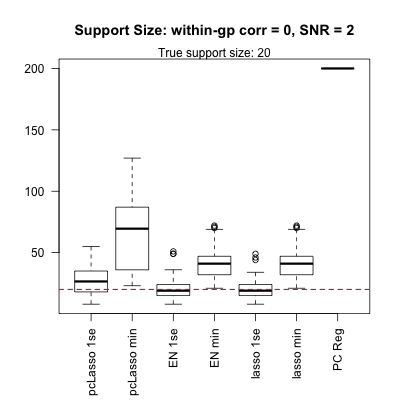}

\includegraphics[width=2.2in]{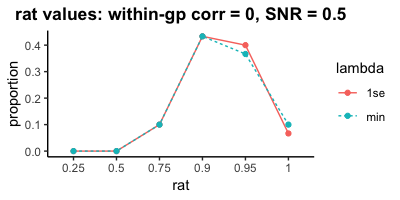}\includegraphics[width=2.2in]{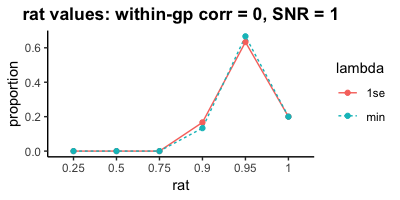}\includegraphics[width=2.2in]{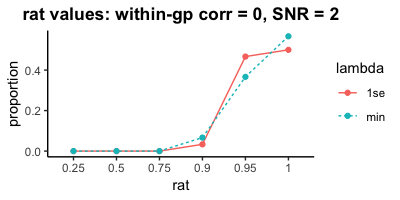}
\end{figure}

\newpage
As above, but with predictors in the same group having pairwise correlation of 0.3.

\begin{figure}[ht]
\includegraphics[width=2.2in]{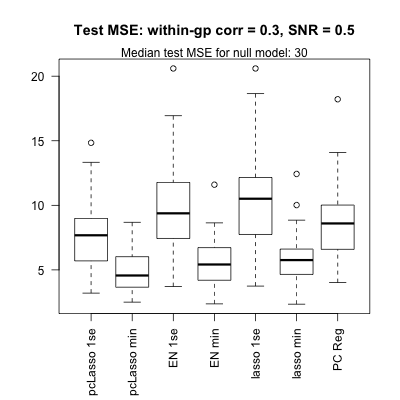}\includegraphics[width=2.2in]{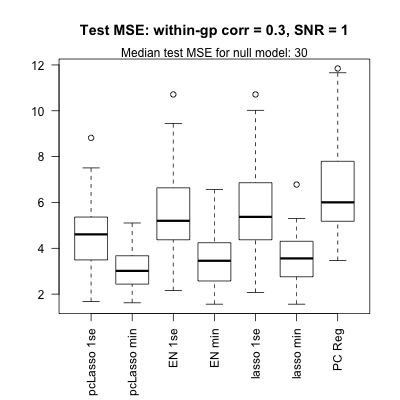}\includegraphics[width=2.2in]{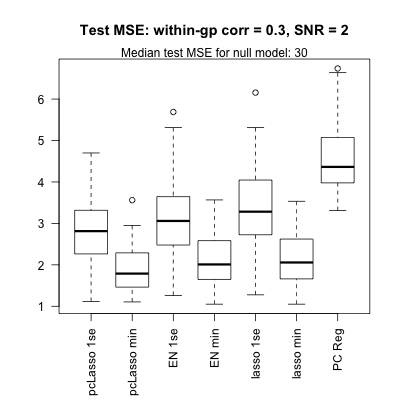}

\includegraphics[width=2.2in]{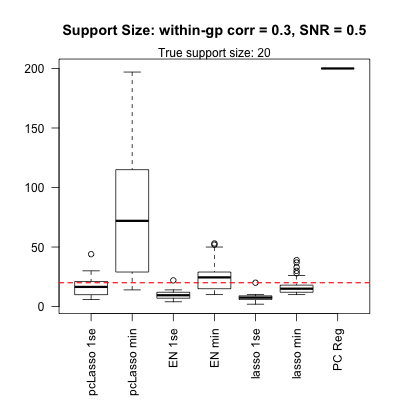}\includegraphics[width=2.2in]{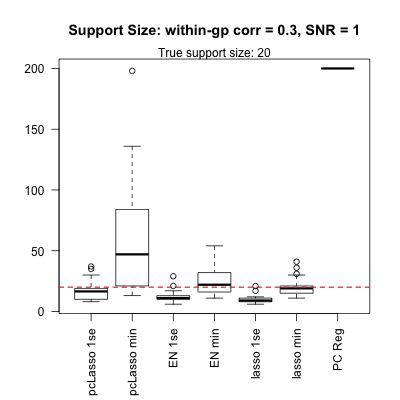}\includegraphics[width=2.2in]{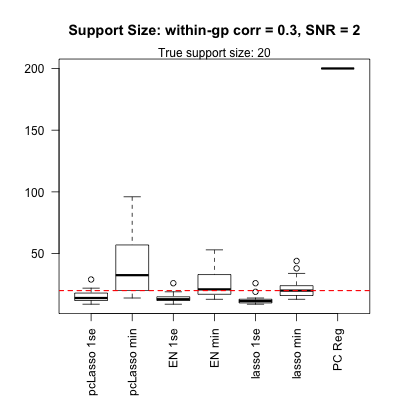}

\includegraphics[width=2.2in]{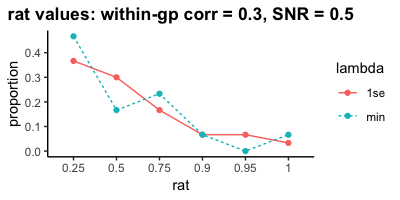}\includegraphics[width=2.2in]{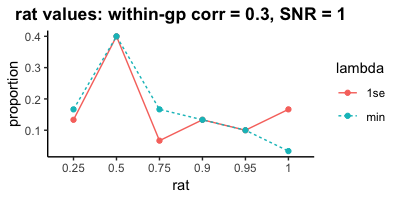}\includegraphics[width=2.2in]{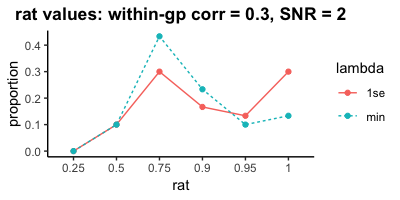}
\end{figure}

\newpage
\subsubsection{Medium $p$ with groups, ``netural court" for pcLasso}
$n = 200$, $p = 200$, 10 groups of 20 uncorrelated predictors, response related to 2 random eigenvectors in the first group.

\begin{figure}[ht]
\includegraphics[width=2.2in]{8_MSE_corr0_SNR0_5.png}\includegraphics[width=2.2in]{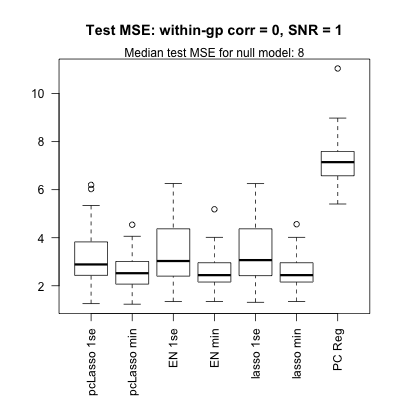}\includegraphics[width=2.2in]{8_MSE_corr0_SNR2.png}

\includegraphics[width=2.2in]{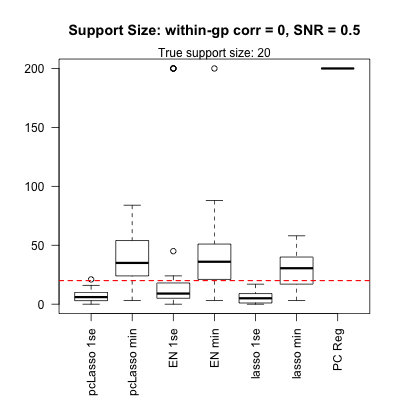}\includegraphics[width=2.2in]{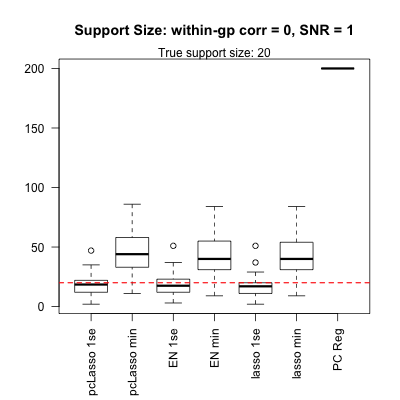}\includegraphics[width=2.2in]{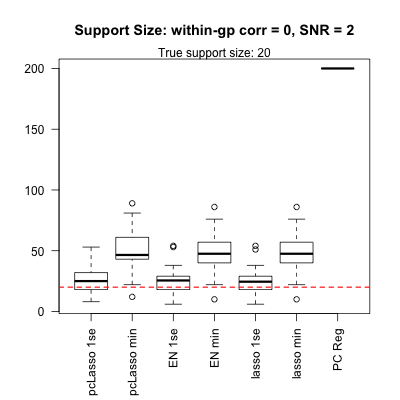}

\includegraphics[width=2.2in]{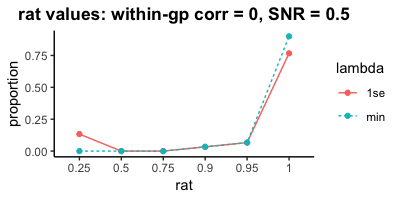}\includegraphics[width=2.2in]{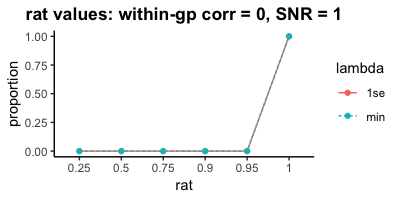}\includegraphics[width=2.2in]{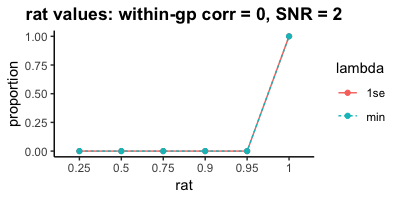}
\end{figure}

\newpage
As above, but with predictors in the same group having pairwise correlation of 0.3.

\begin{figure}[ht]
\includegraphics[width=2.2in]{8_MSE_corr0_3_SNR0_5.png}\includegraphics[width=2.2in]{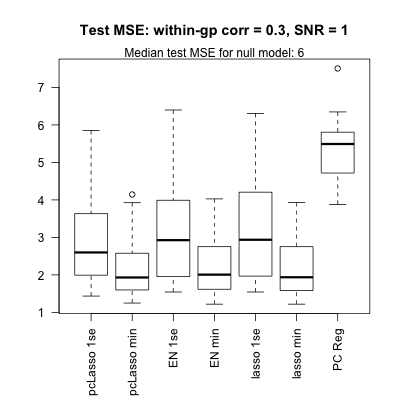}\includegraphics[width=2.2in]{8_MSE_corr0_3_SNR2.png}

\includegraphics[width=2.2in]{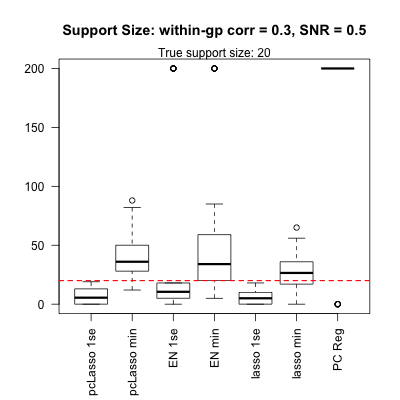}\includegraphics[width=2.2in]{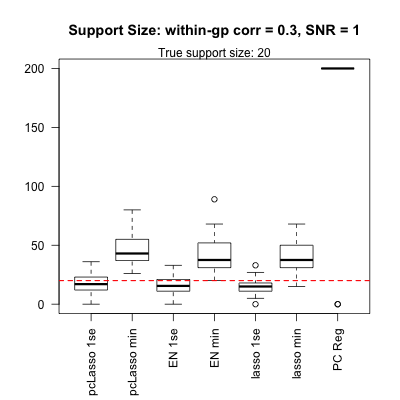}\includegraphics[width=2.2in]{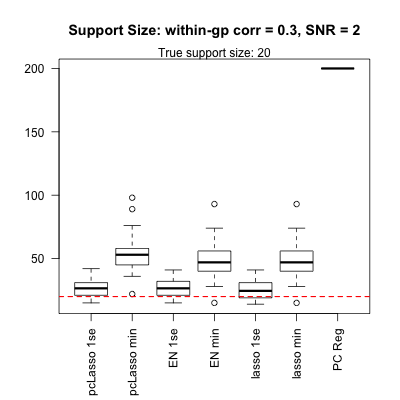}

\includegraphics[width=2.2in]{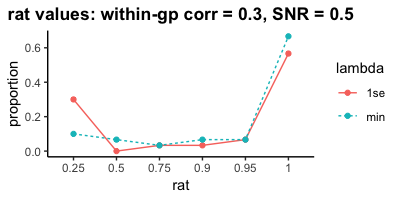}\includegraphics[width=2.2in]{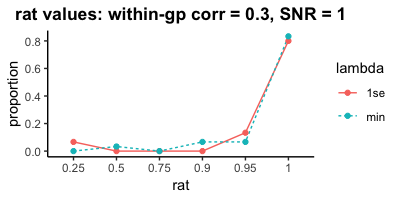}\includegraphics[width=2.2in]{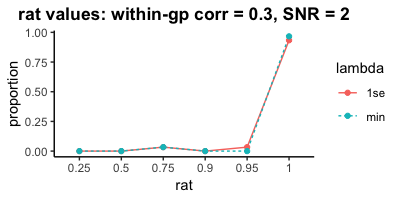}
\end{figure}

\newpage
\subsubsection{Large $p$ with groups, ``home court" for pcLasso}
$n = 200$, $p = 1000$, 10 groups of 100 uncorrelated predictors, response related to the first 2 eigenvectors in the first group.

\begin{figure}[ht]
\includegraphics[width=2.2in]{9_MSE_corr0_SNR0_5.png}\includegraphics[width=2.2in]{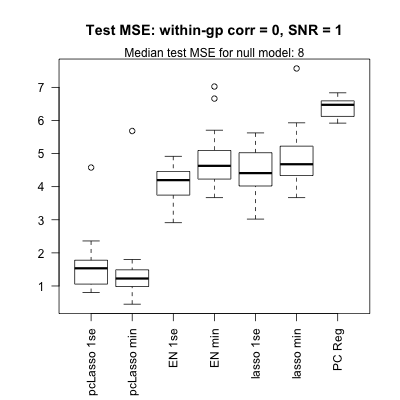}\includegraphics[width=2.2in]{9_MSE_corr0_SNR2.png}

\includegraphics[width=2.2in]{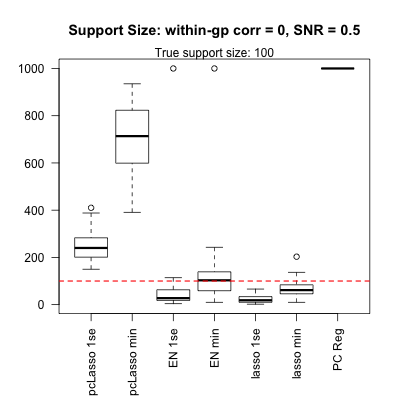}\includegraphics[width=2.2in]{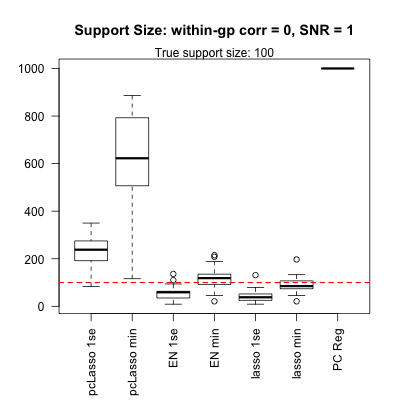}\includegraphics[width=2.2in]{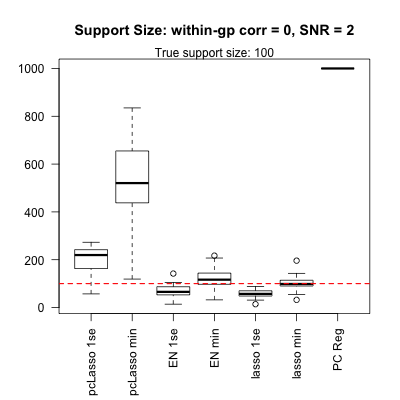}

\includegraphics[width=2.2in]{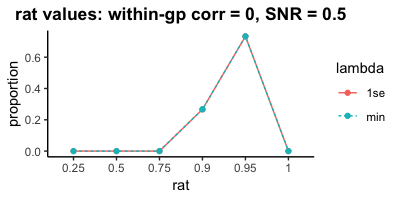}\includegraphics[width=2.2in]{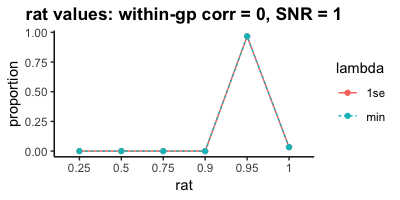}\includegraphics[width=2.2in]{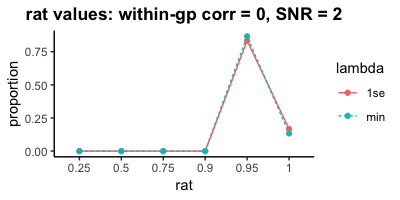}
\end{figure}

\newpage
As above, but with predictors in the same group having pairwise correlation of 0.3.

\begin{figure}[ht]
\includegraphics[width=2.2in]{9_MSE_corr0_3_SNR0_5.png}\includegraphics[width=2.2in]{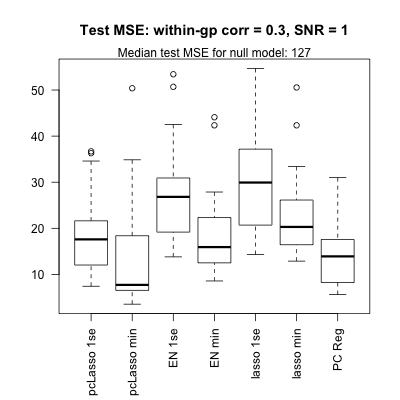}\includegraphics[width=2.2in]{9_MSE_corr0_3_SNR2.png}

\includegraphics[width=2.2in]{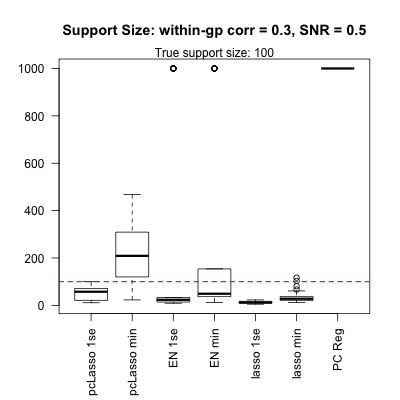}\includegraphics[width=2.2in]{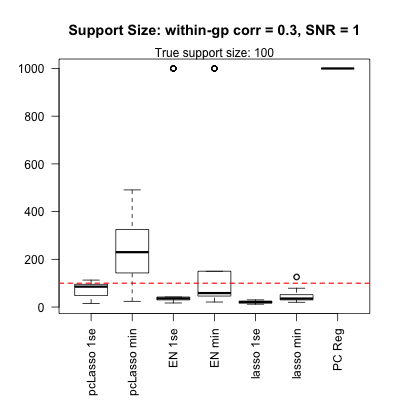}\includegraphics[width=2.2in]{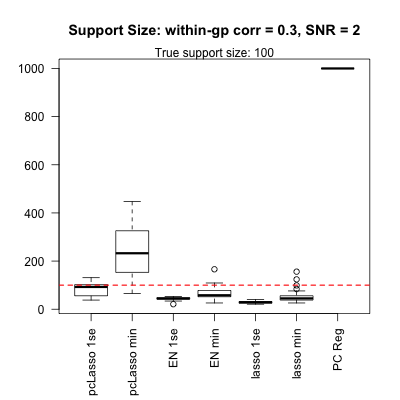}

\includegraphics[width=2.2in]{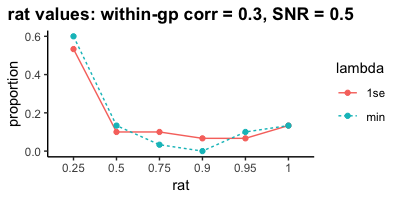}\includegraphics[width=2.2in]{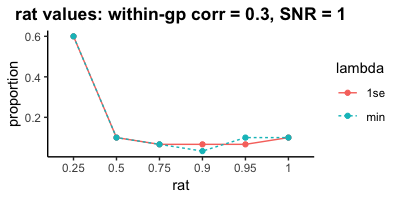}\includegraphics[width=2.2in]{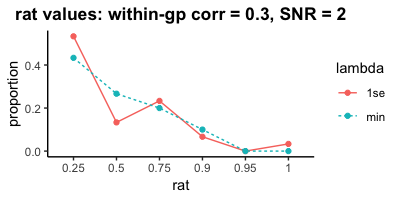}
\end{figure}

\newpage
\subsubsection{Large $p$ with groups, ``neutral court" for pcLasso}
$n = 200$, $p = 1000$, 10 groups of 100 uncorrelated predictors, response related to 2 random eigenvectors in the first group.

\begin{figure}[ht]
\includegraphics[width=2.2in]{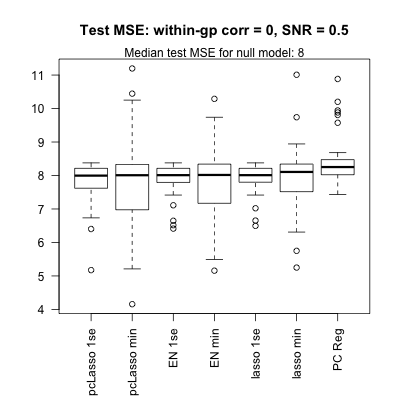}\includegraphics[width=2.2in]{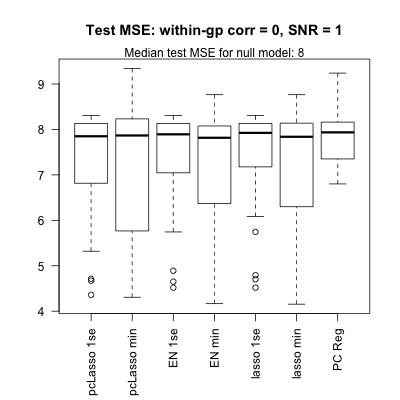}\includegraphics[width=2.2in]{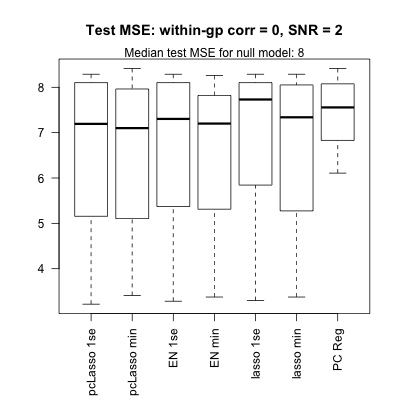}

\includegraphics[width=2.2in]{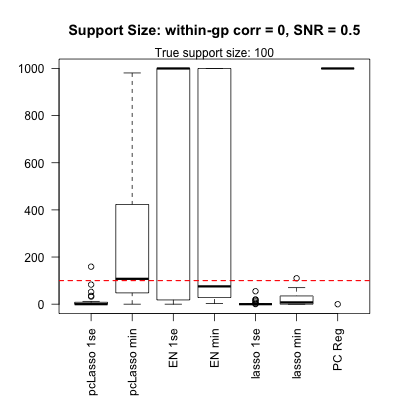}\includegraphics[width=2.2in]{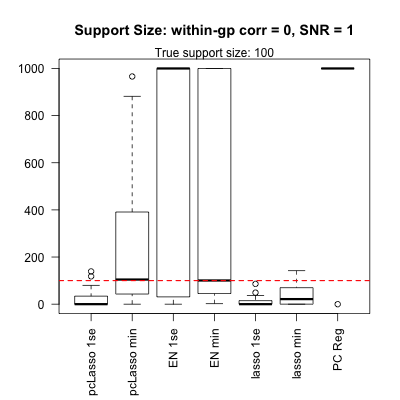}\includegraphics[width=2.2in]{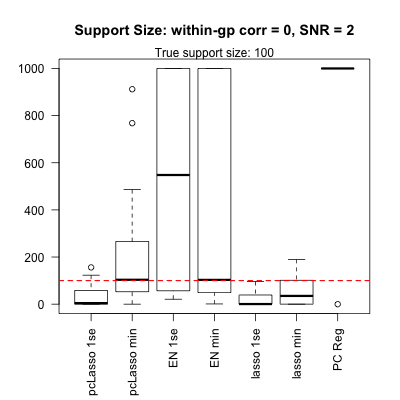}

\includegraphics[width=2.2in]{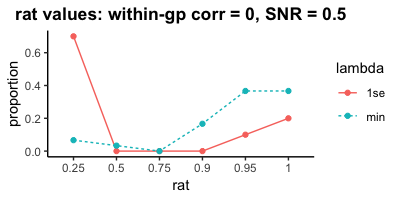}\includegraphics[width=2.2in]{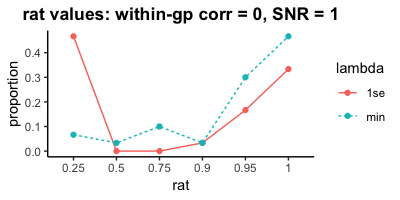}\includegraphics[width=2.2in]{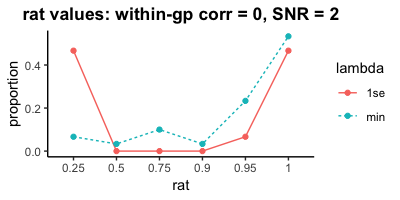}
\end{figure}

\newpage
As above, but with predictors in the same group having pairwise correlation of 0.3.

\begin{figure}[ht]
\includegraphics[width=2.2in]{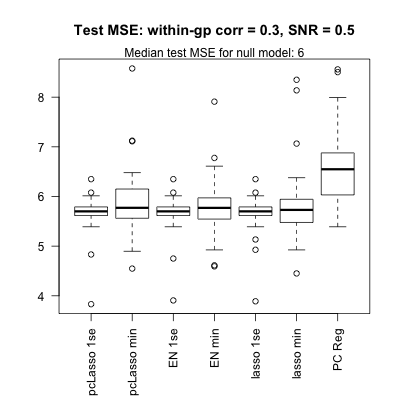}\includegraphics[width=2.2in]{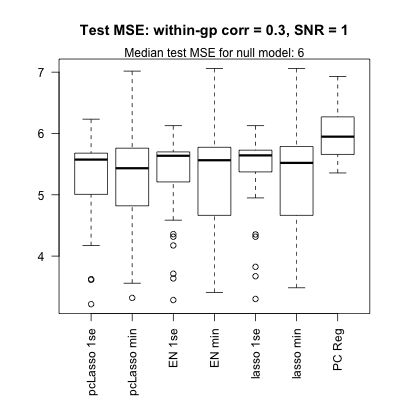}\includegraphics[width=2.2in]{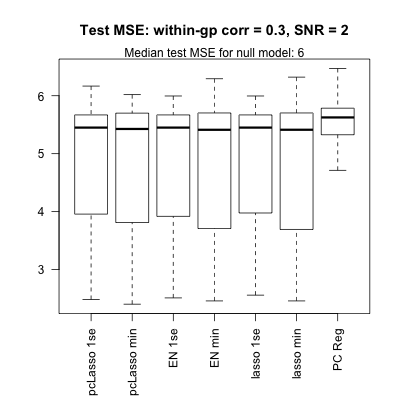}

\includegraphics[width=2.2in]{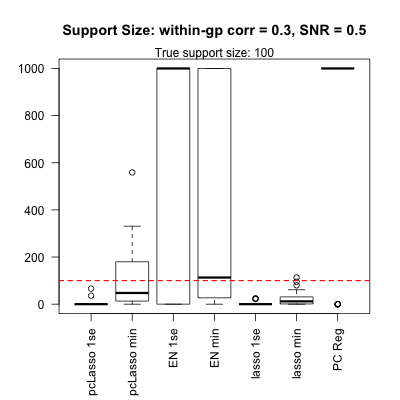}\includegraphics[width=2.2in]{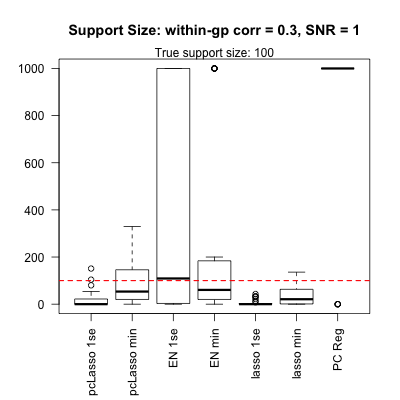}\includegraphics[width=2.2in]{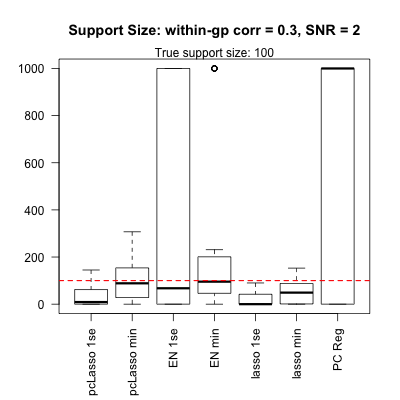}

\includegraphics[width=2.2in]{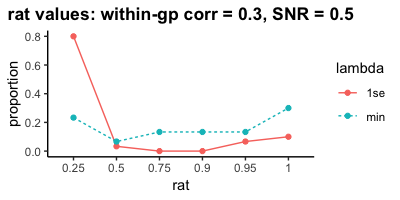}\includegraphics[width=2.2in]{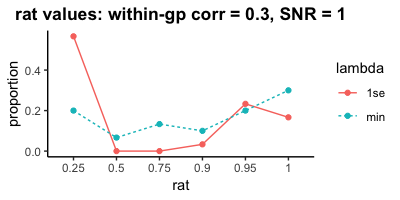}\includegraphics[width=2.2in]{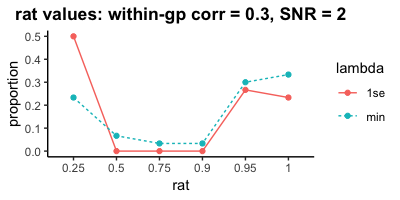}
\end{figure}

\bibliographystyle{agsm}
\bibliography{tibs}

\end{document}